\documentclass[11pt]{article}

\usepackage[margin=1in]{geometry}
\usepackage{amsmath,amssymb,amsthm,mathtools}
\usepackage{booktabs,multirow,array,tabularx}
\usepackage{graphicx,subcaption}
\usepackage{placeins}
\usepackage{float}
\usepackage[
  backend=biber,
  style=authoryear,
  natbib=true,
  maxcitenames=2,
  maxbibnames=99,
  giveninits=true,
  uniquename=init,
  doi=true,
  url=true,
  isbn=false
]{biblatex}
\AtBeginBibliography{\small}
\usepackage{microtype}
\usepackage{enumitem}
\usepackage{xcolor}
\usepackage{hyperref}
\usepackage{bookmark}
\usepackage{cleveref}
\graphicspath{{figures/}}
\hypersetup{
  bookmarksopen=true,
  bookmarksopenlevel=2,
  bookmarksnumbered=true,
  colorlinks=true,
  linkcolor=blue,
  citecolor=blue,
  urlcolor=blue,
  pdftitle={From Value Bounds to Policy-Distance and Active-Face Certificates: Same-Grid Duality for Constrained Dynamic Portfolios},
  pdfauthor={Jeonggyu Huh},
  pdfsubject={Policy-distance and active-face certification for constrained dynamic portfolios},
  pdfkeywords={policy verification, policy-distance certification, active-face certification, same-grid duality, Doob compensation, multiplier sensitivity, constrained portfolio optimization}
}

\newtheorem{assumption}{Assumption}
\newtheorem{theorem}{Theorem}
\newtheorem{proposition}{Proposition}
\newtheorem{corollary}{Corollary}
\newtheorem{lemma}{Lemma}
\theoremstyle{remark}
\newtheorem{remark}{Remark}

\newcommand{\E}{\mathbb{E}}
\newcommand{\R}{\mathbb{R}}
\newcommand{\F}{\mathcal{F}}

\newcommand{\cC}{\mathcal{C}}
\newcommand{\cK}{\mathcal{K}}
\newcommand{\Curv}{\mathsf{C}}
\newcommand{\Proj}{\operatorname{Proj}}
\newcommand{\CE}{\operatorname{CE}}
\newcommand{\SE}{\operatorname{SE}}
\newcommand{\Rad}{\operatorname{Rad}}
\newcommand{\argmax}{\operatorname*{arg\,max}}
\newcommand{\argmin}{\operatorname*{arg\,min}}
\newcommand{\norm}[1]{\left\lVert #1\right\rVert}
\newcommand{\prooflocation}[1]{\par\noindent\emph{Proof.} #1\par}

\title{\textbf{From Value Bounds to Policy-Distance and Active-Face Certificates}\\[0.35em]
\large Same-Grid Duality for Constrained Dynamic Portfolios}
\author{Jeonggyu Huh\thanks{Corresponding author. Email: \href{mailto:jghuh@skku.edu}{jghuh@skku.edu}}\\
\small Department of Mathematics, Sungkyunkwan University, Suwon, Republic of Korea}
\date{August 4, 2026}

\begin{document}
\maketitle

\begin{abstract}
Neural and numerical policy solvers can produce feasible controls even when the optimal rule and its binding constraints are unavailable. A primal--dual bracket certifies value loss, but it does not locate the optimal policy or explain which constraints genuinely bind. We show that, on the same declared simulation grid, one bracket can support both conclusions. For polyhedral controls, an exact conditional budget identity rewrites the residual as a pathwise-nonnegative terminal Fenchel defect plus date-by-constraint complementary-slackness terms; a canonical Doob compensation removes the budget martingale that obscures small residuals. Bellman-primitive curvature conditions then yield an occupancy-weighted policy region with the sharp $O(\!\sqrt G)$ radius, while a paired constraint relaxation lower-bounds the optimal multiplier and certifies a binding face. A finite-sample resolution theorem quantifies the path budget needed to certify a target policy tolerance or face. Locked one- and two-asset audits cover every external policy error and make no false face declaration. An exact-wrapper stress test remains tight through $50$ assets, while a separate state-dependent pilot identifies learned-dual tightness as the high-dimensional bottleneck. Reference solutions enter only after certification.
\end{abstract}

\noindent\textbf{Keywords:}
policy verification; policy-distance certification; active constraints; primal--dual bounds; Doob compensation; constrained portfolio optimization; predictable returns.

\section{Introduction}

Neural, regression, and simulation-based control methods increasingly return feasible feedback policies in problems for which neither the optimal policy nor its binding constraints can be computed. Their objective values do not answer two natural validation questions: how far can the optimal decision rule be from the deployed rule, and is a boundary decision economically genuine or merely an approximation artifact? These questions are especially important for constrained portfolios, where a zero holding can represent either an optimal exclusion or numerical clipping, and a binding leverage cap changes the admissible comparative-statics directions.

Classical portfolio duality supplies a value bracket. An admissible policy gives a primal lower value, a dual-feasible artificial market gives an upper value, and their difference bounds utility loss \citep{HePearson1991,CvitanicKaratzas1992,HaughKoganWang2006}. This paper asks what the same bracket says about the decision rule:
\begin{equation}
\label{eq:intro-question}
J(\pi)\le V^\star\le D
\quad\Longrightarrow\quad
\begin{cases}
\text{where can }\pi^\star\text{ lie in policy space?}\\
\text{which tested constraints must bind at }\pi^\star\text{?}
\end{cases}
\end{equation}
\phantomsection\label{def:value-residuals}
Write
\[
G(\pi):=V^\star-J(\pi),
\qquad
\Delta_D(y,\mu):=D(y,\mu)-V^\star,
\]
and
\[
\mathcal R(y,\mu;\pi)
:=D(y,\mu)-J(\pi)
=G(\pi)+\Delta_D(y,\mu).
\]
Weak duality gives $G,\Delta_D,\mathcal R\ge0$. Our two inferential outputs are
\begin{equation}
\label{eq:intro-policy-pipeline}
\begin{aligned}
\text{residual upper bound}
&\Longrightarrow
\text{occupancy-weighted policy region},\\
\text{paired constraint relaxation}
&\Longrightarrow
\text{multiplier lower bound}
\Longrightarrow
\text{binding-face certificate}.
\end{aligned}
\end{equation}
The first output localizes the unknown optimizer around the deployed policy; the second distinguishes genuine optimal boundaries from numerical clipping. A positive face certificate is operationally meaningful: it proves that an exclusion or leverage boundary is a property of the optimal problem, reduces the locally admissible tangent space for sensitivity analysis, and supplies an auditable explanation of a deployed boundary action. Failure to certify remains unresolved rather than being classified as inactivity.

The statistical obstacle is precision. Under quadratic policy growth, scalar value information yields a directional radius of order $O(\sqrt G)$, and this exponent is sharp. Reducing a policy radius by ten therefore requires roughly a hundredfold reduction in the residual bound. Ordinary primal--dual differencing can be adequate for value evaluation and still be unusable at policy scale. We derive an exact identity on the declared log-Euler grid that writes the residual as a terminal Fenchel defect plus pathwise-nonnegative date-by-constraint complementary-slackness terms. The predictable part is the Doob compensator of the budget process. Removing the budget martingale allows the residual variance to collapse near an exact pair and attributes the remaining loss by date and constraint.

The policy transfer is made constructive in two steps. First, Bellman-primitive strong-concavity conditions yield computable lower bounds on action curvature. In the portfolio models, conditional dual restarts supply the continuation-value information needed for these bounds, while nonnegative Gaussian square or covariance terms sharpen them. Second, a finite-sample resolution theorem links residual variance, curvature-weight uncertainty, and multiplier margins to the paths required for a target policy tolerance or active-face declaration. Compensation therefore reduces the sampling burden for policy-level statements, not merely the variance of a value estimator.

\Cref{fig:flagship} summarizes these external audits. A solved five-asset Merton benchmark validates residual precision, coordinate radii, exact exclusions, and leakage bounds. In a one-asset predictable-return problem, all ten locked policy radii cover independently refined errors; a difficult seed is correctly withheld at a predeclared tolerance. A two-asset incomplete market contains two no-short faces and a shared borrowing cap, and all covariance and directional radii cover. The face procedure certifies subsets of externally active faces with no false declaration. A separate locked Merton stress test covers all $15$ dimension--market policy errors at $d=5,20,50$ with a maximum radius-to-error ratio of $1.023$ under an exact dual wrapper. A state-dependent $d=20,50$ pilot remains valid but becomes conservative, locating the practical high-dimensional bottleneck in learning a tight feasible dual rather than in curvature estimation or the residual-to-policy transfer.

\begin{figure}[!htbp]
\centering
\includegraphics[width=\textwidth]{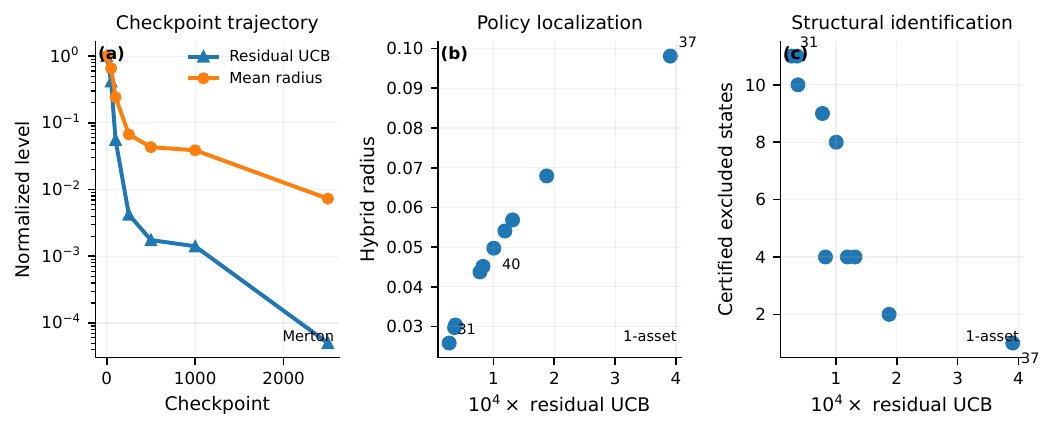}
\caption{\textbf{Residual precision translates into policy and structural certification.} Panel~(a) follows the solved Merton checkpoint trajectory. Panels~(b)--(c) use the locked one-asset ten-seed audit. Smaller residual bounds coincide with tighter policy localization and more certified excluded states; analytical and dynamic-program references enter only after certification.}
\label{fig:flagship}
\end{figure}

\paragraph{Relation to dual and pathwise bounds.}
Portfolio duality and artificial-market constructions provide upper values under trading constraints \citep{HePearson1991,CvitanicKaratzas1992,Cuoco1997,SchroderSkiadas2003,HaughKoganWang2006}. Martingale dual simulation, information relaxation, and pathwise optimization provide related value bounds and penalty constructions in stopping and stochastic control \citep{Rogers2002,HaughKogan2004,AndersenBroadie2004,Rogers2007,BrownSmithSun2010,DesaiFariasMoallemi2012Pathwise}. Prior portfolio work tightens brackets through policy-induced artificial markets, pathwise value-derivative estimation, tractable dual-control families, and simulation \citep{HaughJain2007,BickKraftMunk2013,MaLiZheng2020}. Doob penalties and variance collapse under ideal penalties are therefore not new. The contribution here is the same-grid Fenchel--complementarity representation and its conversion into finite-sample policy-location and binding-constraint statements.

\paragraph{Relation to approximate dynamic programming and policy evaluation.}
Approximate linear programming and dynamic-programming analyses translate Bellman or value-function approximation errors into performance guarantees for induced policies \citep{deFariasVanRoy2003,DesaiFariasMoallemi2012SALP,Munos2003,Munos2007,FarahmandSzepesvariMunos2010}. High-confidence off-policy evaluation instead gives one-sided value guarantees for a target policy from logged data \citep{ThomasTheocharousGhavamzadeh2015}. These literatures address value or policy-loss guarantees generated by approximate value functions or off-policy samples. Our input is a frozen feasible same-grid primal--dual pair; our outputs are a confidence region around the unknown optimal policy and a subset of polyhedral rows that must bind. Recent regime-dependent utility work also obtains an integrated squared control-error bound from a value gap under strong concavity \citep{Gandhi2026}. We add sharpness, a constructive curvature route, finite-sample resolution, and active-face inference.

\paragraph{Contributions.}
\begin{enumerate}[leftmargin=1.6em]
\item \textbf{An exact residual representation with operational precision.}
For polyhedral controls, the same-grid residual is the expectation of a pathwise-nonnegative Fenchel--complementarity quantity. Its canonical Doob compensation removes the budget martingale and attributes the residual by date and constraint.

\item \textbf{Policy and binding-constraint inference.}
Bellman-primitive curvature conditions imply quadratic policy growth. A residual upper bound then yields the sharp $O(\sqrt G)$ policy region, while a paired relaxation statistic lower-bounds an optimal multiplier and certifies a binding face. A finite-sample theorem quantifies the resolution margin and path complexity of both conclusions.

\item \textbf{Locked validation and scalability diagnosis.}
One- and two-asset state-dependent audits cover every reported external policy error and make no false face declaration. An exact-wrapper replication remains tight through $50$ assets, whereas a learned state-dependent pilot identifies dual approximation---not the policy transfer---as the principal high-dimensional bottleneck.
\end{enumerate}

\paragraph{Scope.}
The certificates concern the declared grid and the deployed occupancy law. Informative conclusions require a tight feasible dual, positive Bellman curvature, and sufficient relaxation sensitivity. The paper develops these objects for constrained terminal-wealth portfolios with polyhedral controls; transaction costs, consumption, and generic high-dimensional state-dependent dual learning require additional structure. All proofs, supporting derivations, implementation details, numerical audits, and robustness diagnostics are in the Electronic Companion.

\section{Same-grid market and polyhedral controls}
\label{sec:setup}

Fix a horizon $T>0$, a grid $t_k=kh$, $k=0,\ldots,K$, and $h=T/K$. Work on a filtered probability space
\[
(\Omega,\mathcal F,(\mathcal F_{t_k})_{k=0}^K,\mathbb P),
\qquad
\E_k[\,\cdot\,]:=\E[\,\cdot\mid\mathcal F_{t_k}].
\]
Let
\[
\Delta W_k\sim N(0,hI_{d_W})
\]
be independent Gaussian increments driving risky returns, with $\Delta W_k$ independent of $\mathcal F_{t_k}$. Let $Y_k\in\R^{d_Y}$ be an observable auxiliary state adapted to $(\mathcal F_{t_k})$; it may contain predictors, stochastic investment opportunities, regimes, or other state variables, and it is absent when $d_Y=0$. Its transition is part of the declared simulator and may depend on both return-driving and orthogonal innovations.

\phantomsection\label{def:full-state}
Write
\[
S_k:=(X_k,Y_k)
\]
for the full observable state, with the convention $S_k=X_k$ when no auxiliary state is present. Conditional on $\mathcal F_{t_k}$, let
\phantomsection\label{def:predictable-coefficients}
\[
r_k=r(t_k,S_k),
\qquad
b_k=b(t_k,S_k)\in\R^n,
\qquad
A_k=A(t_k,S_k)\in\R^{n\times d_W}
\]
be predictable one-step coefficients, with $b_k$ the risky excess-return vector, and set
\[
\Sigma_k:=A_kA_k^\top\succ0,
\qquad d_W\ge n.
\]
The generic duality results below require only that these coefficients and the chosen controls be $\mathcal F_{t_k}$-measurable; they do not otherwise restrict the auxiliary-state transition.

At $t_k$, the investor chooses a predictable risky-weight vector $\pi_k$ in the nonempty polyhedron
\begin{equation}
\label{eq:polyhedral-set}
\cK=\{\pi\in\R^n:\Gamma\pi\le g\},
\qquad
\Gamma\in\R^{m\times n},\quad g\in\R^m.
\end{equation}
Starting from $X_0=x_0>0$, wealth evolves under the declared state-frozen log-Euler step
\begin{equation}
\label{eq:log-euler}
X_{k+1}
=
X_k\exp\left(
\left[r_k+\pi_k^\top b_k-\frac12\pi_k^\top\Sigma_k\pi_k\right]h
+\pi_k^\top A_k\Delta W_k
\right).
\end{equation}
When the coefficients are constant and no auxiliary state is present, \eqref{eq:log-euler} is the exact constant-weight GBM step used in the Merton benchmark.

The primal objective is terminal CRRA utility
\begin{equation}
\label{eq:crra}
U(x)=\frac{x^{1-\gamma}-1}{1-\gamma},
\qquad \gamma>1,
\end{equation}
\begin{equation}
\label{eq:primal}
J(\pi)=\E[U(X_K^\pi)],
\qquad
V^\star=\sup_{\pi_k\in\cK}J(\pi).
\end{equation}
\phantomsection\label{def:optimal-policy}
Whenever an optimizer exists, $\pi^\star$ denotes an optimal policy attaining $V^\star$. We write this optimum more explicitly as $V_h^\star$ when comparing grids. Throughout the theory and reported certificates, $h$ is fixed and we suppress the subscript. The target is the optimal value over grid-predictable, piecewise-constant controls in the declared simulator; no continuous-time policy-loss guarantee is implied. Here and below, \emph{same-grid} means that the full-state simulator, wealth update, policy evaluation, dual deflator, budget identity, and reported certificates all refer to this same declared discrete-time dynamics and constraint set.
For any admissible policy, its certainty equivalent is $\CE(\pi):=U^{-1}(J(\pi))$. The convex conjugate under the convention used below is
\begin{equation}
\label{eq:conjugate}
U^\star(z)
=
\sup_{x>0}\{U(x)-zx\}
=
\frac1{\gamma-1}
-\frac\gamma{\gamma-1}z^{(\gamma-1)/\gamma}.
\end{equation}

A direct policy optimization (DPO)\footnote{Here DPO denotes direct policy optimization and is unrelated to direct preference optimization.} reference policy is parameterized by a feasible neural feedback $\pi_\vartheta(t_k,S_k)\in\cK$ and trained by ordinary backpropagation through time (BPTT) on the same declared simulator. All dual optimization, calibration, and certification operations are performed after freezing a saved checkpoint.

\paragraph{Notation guide.}
The table lists recurring global notation; each linked symbol points to its first definition.
\begingroup
\fontsize{7.2}{8.0}\selectfont
\setlength{\tabcolsep}{2.6pt}
\renewcommand{\arraystretch}{0.96}
\begin{center}
\begin{tabularx}{\textwidth}{@{}>{\raggedright\arraybackslash}p{0.19\textwidth}>{\raggedright\arraybackslash}p{0.29\textwidth}>{\raggedright\arraybackslash}p{0.19\textwidth}>{\raggedright\arraybackslash}X@{}}
\toprule
\textbf{Symbol} & \textbf{Meaning} & \textbf{Symbol} & \textbf{Meaning}\\
\midrule
\hyperref[def:full-state]{$S_k,\pi,\pi^\star,\cK,V_h^\star$\,{\tiny[\S2]}}
& State, policies, and fixed-grid optimum
& \hyperref[eq:distortion-general]{$\mu_k,d_k,\theta_k,H_k$\,{\tiny[\S3.1]}}, \hyperref[eq:orthogonal-mean-one]{$\Lambda_{k+1}^{\perp}$\,{\tiny[\S3.4]}}
& Multiplier, distortion, market price of risk, deflator, and orthogonal tilt\\
\hyperref[def:value-residuals]{$G,\Delta_D,\mathcal R$\,{\tiny[\S1]}}, \hyperref[eq:fenchel-res]{$Z^{\rm gap}$\,{\tiny[\S4.1]}}, \hyperref[eq:residual-chain]{$\overline{\mathcal R}_{1-\alpha}$\,{\tiny[\S5]}}
& Policy loss, dual excess, residuals, and residual UCB
& \hyperref[eq:policy-norm]{$\rho^\pi,\Curv,\mathcal W_\rho$\,{\tiny[\S5.1]}}, \hyperref[def:weight-observation]{$W_i$\,{\tiny[\S5.2]}}
& Curvature-weighted occupancy geometry and positive weight observation\\
\hyperref[sec:value-to-policy]{$Q_k^\star,\mathcal Q_k$\,{\tiny[\S5.1]}}, \hyperref[eq:predictable-value-reduction]{$q_k^\star$\,{\tiny[\S7.1]}}
& Bellman action value, reduced CRRA objective, and continuation coefficient
& \hyperref[eq:directional-curvature]{$\|\cdot\|_{\bar\rho,v},\chi_v$\,{\tiny[\S5.1]}}
& Directional seminorm and inverse-curvature factor\\
\hyperref[eq:relaxed-state-value]{$\varepsilon,\lambda_{j,s}^\star,Z^{\rm face}_{j,s,\varepsilon}$\,{\tiny[\S5.3]}}
& Relaxation design, optimal face multiplier, and paired face statistic
& \hyperref[eq:conjugate]{$U^\star$\,{\tiny[\S2]}}, \hyperref[eq:general-dual-bound]{$D(y,\mu)$\,{\tiny[\S3.1]}}
& Utility conjugate and same-grid dual upper bound\\
\bottomrule
\end{tabularx}
\end{center}
\endgroup

\section{Same-grid polyhedral duality and feasibility restoration}
\label{sec:polyhedral}

Sections~\ref{sec:polyhedral}--\ref{sec:value-policy-face} form the verification pipeline. This section constructs the exactly feasible same-grid dual objects that make the later certificates valid. The theory does not require a neural dual: constant-coefficient benchmarks use a finite-dimensional convex program, whereas the state-dependent experiments use a neural or regression field only as a candidate generator. In both cases exact feasibility is imposed algebraically before certification. Section~\ref{subsec:budget-identity} derives the conditional budget identity and dual upper bound; Section~\ref{subsec:distortion-recovery} restores a candidate distortion to the feasible cone and selects a canonical multiplier representation. Sections~\ref{subsec:benchmark-geometries}--\ref{sec:orthogonal-enrichment} give canonical constraint geometries and the incomplete-market enrichment. The output is a frozen exactly feasible dual $(y,\mu)$; Section~\ref{sec:compensated-gap} then identifies the pathwise observable whose expectation is its residual.

\subsection{Predictable multipliers and the exact budget identity}
\label{subsec:budget-identity}

For a predictable multiplier $\mu_k\in\R_+^m$, define the constraint-generated drift distortion
\begin{equation}
\label{eq:distortion-general}
d_k(\mu_k)=-\Gamma^\top\mu_k
\end{equation}
and the minimum-norm market price of risk
\begin{equation}
\label{eq:theta-general}
\theta_k(\mu_k)
=
A_k^\top\Sigma_k^{-1}(b_k+d_k(\mu_k)),
\qquad
A_k\theta_k=b_k-\Gamma^\top\mu_k.
\end{equation}
The associated deflator is
\begin{equation}
\label{eq:deflator-general}
H_0=1,
\qquad
H_{k+1}
=
H_k\exp\left(
-(r_k+g^\top\mu_k)h
-\theta_k^\top\Delta W_k
-\frac12\norm{\theta_k}^2h
\right).
\end{equation}
Define the one-step complementary-slackness defect
\begin{equation}
\label{eq:defect-general}
\kappa_k(\pi_k,\mu_k)
=
\mu_k^\top(g-\Gamma\pi_k)
\ge0.
\end{equation}

\begin{theorem}[Exact same-grid identity]
\label{thm:polyhedral-budget}
For every predictable policy $\pi_k\in\cK$ and every predictable multiplier $\mu_k\ge0$,
\begin{equation}
\label{eq:general-conditional-budget}
\E_k[H_{k+1}X_{k+1}]
=
H_kX_k e^{-h\kappa_k}.
\end{equation}
Consequently,
\begin{equation}
\label{eq:general-budget-telescope}
x_0-\E[H_KX_K]
=
\E\sum_{k=0}^{K-1}H_kX_k(1-e^{-h\kappa_k})
\ge0.
\end{equation}
For every $y>0$,
\begin{equation}
\label{eq:general-dual-bound}
J(\pi)
\le
D(y,\mu)
:=
\E[U^\star(yH_K)]+yx_0.
\end{equation}
\end{theorem}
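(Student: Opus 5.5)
The plan is to compute the one-step product $H_{k+1}X_{k+1}/(H_kX_k)$ explicitly and take its conditional expectation using the Gaussian moment generating function. Dividing the deflator update \eqref{eq:deflator-general} into the wealth step \eqref{eq:log-euler} gives
\[
\frac{H_{k+1}X_{k+1}}{H_kX_k}
=\exp\Bigl(\bigl[\pi_k^\top b_k-g^\top\mu_k-\tfrac12\pi_k^\top\Sigma_k\pi_k-\tfrac12\norm{\theta_k}^2\bigr]h+(A_k^\top\pi_k-\theta_k)^\top\Delta W_k\Bigr).
\]
The riskless rate $r_k$ cancels. All coefficients, $\pi_k$ and $\mu_k$ are $\mathcal F_{t_k}$-measurable, while $\Delta W_k\sim N(0,hI)$ is independent of $\mathcal F_{t_k}$. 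Hence $\E_k[e^{v^\top\Delta W_k}]=e^{\frac12\norm{v}^2h}$ with $v=A_k^\top\pi_k-\theta_k$. Expanding,
\[
\norm{v}^2=\pi_k^\top\Sigma_k\pi_k-2\pi_k^\top A_k\theta_k+\norm{\theta_k}^2 .
\]
The quadratic terms cancel, and the exponent reduces to $h[\pi_k^\top b_k-g^\top\mu_k-\pi_k^\top A_k\theta_k]$. The key algebraic step is the identity $A_k\theta_k=b_k-\Gamma^\top\mu_k$ from \eqref{eq:theta-general}. It holds because $A_kA_k^\top\Sigma_k^{-1}=I$; this is where $\Sigma_k\succ0$ and the minimum-norm choice of $\theta_k$ enter. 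Substituting gives the exponent $h[\pi_k^\top\Gamma^\top\mu_k-g^\top\mu_k]=-h\kappa_k$, which is \eqref{eq:general-conditional-budget} after multiplying by the $\mathcal F_{t_k}$-measurable factor $H_kX_k$.

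For \eqref{eq:general-budget-telescope}, I would write $H_kX_k-\E_k[H_{k+1}X_{k+1}]=H_kX_k(1-e^{-h\kappa_k})$ and take unconditional expectations. Summing from $k=0$ to $K-1$ telescopes to $H_0X_0-\E[H_KX_K]=x_0-\E[H_KX_K]$. Nonnegativity follows from $\kappa_k\ge0$, since $\mu_k\ge0$ and $\Gamma\pi_k\le g$ because $\pi_k\in\cK$.

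\emph{Main obstacle: integrability.} The only real technical point is that these expectations must be finite. All terms are positive, so the tower property applies to nonnegative variables without integrability assumptions. Inductively $\E[H_kX_k]\le x_0<\infty$, so every expectation is finite and the telescoping is legitimate. This avoids imposing boundedness of $\mu_k$ or the coefficients.

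For the dual bound \eqref{eq:general-dual-bound}, apply the Fenchel inequality $U(x)\le U^\star(z)+zx$ from \eqref{eq:conjugate} pathwise with $x=X_K>0$ and $z=yH_K>0$. Taking expectations gives $J(\pi)\le\E[U^\star(yH_K)]+y\E[H_KX_K]\le\E[U^\star(yH_K)]+yx_0$, using \eqref{eq:general-budget-telescope} and $y>0$. When $\gamma>1$, $U\le 1/(\gamma-1)$ and $U^\star\le1/(\gamma-1)$, so both sides are well defined in $[-\infty,\infty)$. If $\E[U^\star(yH_K)]=-\infty$, the inequality still holds in the extended sense, and no further integrability condition is needed.
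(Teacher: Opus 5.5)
Your proposal is correct and follows the paper's proof essentially step for step: conditional Gaussian integration of the one-step ratio $H_{k+1}X_{k+1}/(H_kX_k)$, the substitution $A_k\theta_k=b_k-\Gamma^\top\mu_k$ to obtain $e^{-h\kappa_k}$, telescoping of the expected decrements, and the pathwise Fenchel inequality combined with $\E[H_KX_K]\le x_0$. Your integrability bookkeeping is a sound addition that the paper leaves implicit; it covers nonnegativity for the tower property, the inductive bound $\E[H_kX_k]\le x_0$, and the extended-real reading when $\E[U^\star(yH_K)]=-\infty$, which uses $U,U^\star\le 1/(\gamma-1)$.
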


The result is conditional and therefore permits state-dependent multiplier fields, neural or otherwise, provided they are predictable and nonnegative. Exact validity is a property of the multiplier parameterization or recovery map, not of acquisition or optimization accuracy.

For $p=(\gamma-1)/\gamma$ and
\[
\mathfrak m_p(\mu):=\E[H_K^p],
\]
the scalar minimizer is
\begin{equation}
\label{eq:y-general}
y^\star(\mu)
=
\left(\frac{\mathfrak m_p(\mu)}{x_0}\right)^\gamma,
\end{equation}
and the dual certainty equivalent is
\begin{equation}
\label{eq:ce-general}
\CE_{\mathrm{dual}}(\mu)
=
x_0\mathfrak m_p(\mu)^{-\gamma/(\gamma-1)}.
\end{equation}
Thus a state-dependent dual can be trained on one split, its moment and $y$ calibrated on a second split, and its final certificate evaluated on a third.

\begin{remark}[Independent-test upper confidence bound for the dual value]
\label{rem:dual-ucb}
Fix a trained feasible dual and a scalar $y$ chosen without the final test sample. Let $N_{\rm test}$ be the number of independent final-test paths. If $U^\star(yH_K)$ has finite variance, an asymptotic one-sided upper confidence bound is
\[
\widehat D_{1-\alpha}^{\mathrm{UCB}}
=
\frac1{N_{\rm test}}\sum_{\ell=1}^{N_{\rm test}} U^\star(yH_K^{(\ell)})
+yx_0
+z_{1-\alpha}\frac{\widehat\sigma_D}{\sqrt{N_{\rm test}}}.
\]
Because the CRRA certainty-equivalent transform is increasing, applying it to this utility bound yields a dual CE UCB. Raw dual CE point estimates are treated only as diagnostics unless accompanied by this independent-test adjustment.
\end{remark}

\subsection{Support functions and distortion-cone recovery}
\label{subsec:distortion-recovery}

The recovery construction is applied pointwise at each time--state input. To keep the notation readable throughout Sections~\ref{subsec:distortion-recovery}--\ref{subsec:benchmark-geometries}, we suppress the index $k$ and the dependence on $S_k$, and write $r,b,A,\Sigma,d$ for the corresponding realized predictable quantities.

Define the feasible distortion cone
\begin{equation}
\label{eq:distortion-cone}
\cC_{\cK}=-\Gamma^\top\R_+^m
\end{equation}
and the support function in the sign convention induced by the drift distortion,
\begin{equation}
\label{eq:support-function}
\sigma_{\cK}(d)
:=
\sup_{\pi\in\cK}(-d^\top\pi).
\end{equation}
For $d\in\cC_{\cK}$, linear-programming duality gives the exact representation
\begin{equation}
\label{eq:support-lp}
\sigma_{\cK}(d)
=
\min_{\mu\ge0:\,-\Gamma^\top\mu=d}g^\top\mu.
\end{equation}
This is the finite-grid polyhedral counterpart of the constraint support-function penalty in \citet{CvitanicKaratzas1992}.

A candidate generator may output either a raw distortion $\widehat d^{\mathrm{raw}}\in\R^n$ or a raw market price of risk $\widehat\theta^{\mathrm{raw}}\in\operatorname{Range}(A^\top)$. In the latter case set
\begin{equation}
\label{eq:raw-d-general}
\widehat d^{\mathrm{raw}}
=A\widehat\theta^{\mathrm{raw}}-b.
\end{equation}
For a direct distortion output, use the associated minimum-norm representative $\widehat\theta^{\mathrm{raw}}:=A^\top\Sigma^{-1}(b+\widehat d^{\mathrm{raw}})$ when interpreting distances in market-price space. The recovery itself depends only on $\widehat d^{\mathrm{raw}}$.

\begin{lemma}[Support-function duality and attainment]
\label{lem:lp-attainment}
Assume $\cK\neq\varnothing$. For every $d\in\cC_{\cK}$, the minimum in \eqref{eq:support-lp} is finite and attained. Its value is $\sigma_{\cK}(d)$.
\end{lemma}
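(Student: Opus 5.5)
The plan is to apply linear-programming duality to the primal LP $\max\{-d^\top\pi:\Gamma\pi\le g\}$, whose LP dual is exactly $\min\{g^\top\mu:\mu\ge0,\ \Gamma^\top\mu=-d\}$. I check the pairing once: for feasible $\pi$ and $\mu$,
\[
-d^\top\pi=\mu^\top\Gamma\pi\le\mu^\top g .
\]
This is weak duality, so $\sigma_{\cK}(d)\le g^\top\mu$ for every dual-feasible $\mu$.

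\textbf{Step 1 (both problems feasible).} The hypothesis $d\in\cC_{\cK}=-\Gamma^\top\R_+^m$ means there is some $\mu\ge0$ with $-\Gamma^\top\mu=d$, so the dual feasible set is nonempty. The assumption $\cK\neq\varnothing$ makes the primal feasible.

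\textbf{Step 2 (finiteness).} By weak duality, the primal supremum $\sigma_{\cK}(d)$ is bounded above by $g^\top\mu$ for any feasible $\mu$, and it is $>-\infty$ because $\cK\neq\varnothing$. Symmetrically, the dual infimum is bounded below by $-d^\top\pi$ for any $\pi\in\cK$. Hence both optimal values are finite.

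\textbf{Step 3 (attainment and equality).} I then invoke the strong duality theorem of linear programming: if both a primal LP and its dual are feasible, both have optimal solutions and their optimal values coincide. This gives attainment of the minimum in \eqref{eq:support-lp} and equality with $\sigma_{\cK}(d)$.

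If a self-contained argument for attainment is preferred, I would use the fact that a linear function bounded below on a nonempty polyhedron attains its infimum. This follows from Minkowski--Weyl: write the polyhedron as the convex hull of finitely many points plus a finitely generated cone, note the objective is nonnegative on the cone generators, and conclude that the minimum is achieved at one of the finitely many points. Equality of values would then come from Farkas' lemma in the standard way.

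The only point needing care is the sign convention. Since $\sigma_{\cK}$ is defined with $-d^\top\pi$ and the cone is $-\Gamma^\top\R_+^m$, I need to check that the LP dual constraint is $\Gamma^\top\mu=-d$ and not $\Gamma^\top\mu=d$; the pairing computation above confirms this. I do not expect any genuine obstacle beyond that, since the lemma is a direct application of standard LP duality.
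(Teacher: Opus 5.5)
Your proposal is correct and follows essentially the same route as the paper: both verify that $\min\{g^\top\mu:\mu\ge0,\ -\Gamma^\top\mu=d\}$ is the LP dual of $\sup\{-d^\top\pi:\Gamma\pi\le g\}$, use $\cK\neq\varnothing$ and $d\in-\Gamma^\top\R_+^m$ for feasibility of both problems, and invoke finite-dimensional LP strong duality for finiteness, attainment, and equality. The paper identifies the dual via the Lagrangian rather than your weak-duality pairing; your optional Minkowski--Weyl argument for attainment is a valid extra that the paper does not include.
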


Let
\begin{equation}
\label{eq:least-cost-set}
\mathcal M(d)
=
\argmin_{\mu\ge0:\,-\Gamma^\top\mu=d}g^\top\mu.
\end{equation}
The total support-function defect is unique even if $\mathcal M(d)$ has multiple elements. For reproducible constraint-wise attribution, select the canonical minimum-norm representative
\begin{equation}
\label{eq:canonical-multiplier}
\mu^{\mathrm{can}}(d)
=
\argmin_{\mu\in\mathcal M(d)}\frac12\norm{\mu}^2.
\end{equation}
The strict convexity of the tie-break makes $\mu^{\mathrm{can}}(d)$ unique.

\begin{remark}[Canonical, not uniquely causal, attribution]
The minimum-norm rule makes the row-level KKT ledger reproducible when multiplier representations are nonunique. It is a canonical attribution convention, not a claim that the resulting row decomposition is the only economically causal explanation of the loss. The total support-function defect is invariant to the tie-break.
\end{remark}

\begin{theorem}[Polyhedral feasibility restoration]
\label{thm:polyhedral-recovery}
Define
\begin{equation}
\label{eq:general-distortion-projection}
\widehat d
=
\argmin_{d\in\cC_{\cK}}
\frac12(d-\widehat d^{\mathrm{raw}})^\top
\Sigma^{-1}(d-\widehat d^{\mathrm{raw}})
\end{equation}
and set $\widehat\mu=\mu^{\mathrm{can}}(\widehat d)$. Then the recovered pair is exactly dual feasible. Moreover,
\begin{equation}
\label{eq:general-nearest-kernel}
\widehat d
\in
\argmin_{d\in\cC_{\cK}}
\frac12\norm{
A^\top\Sigma^{-1}(b+d)-\widehat\theta^{\mathrm{raw}}
}^2.
\end{equation}
For every feasible action $\pi\in\cK$, the recovered total KKT defect is
\begin{equation}
\label{eq:support-defect}
\kappa(\pi,\widehat d)
=
\sigma_{\cK}(\widehat d)+\widehat d^\top\pi
=
\widehat\mu^\top(g-\Gamma\pi)
\ge0.
\end{equation}
Thus the total defect depends only on $(\widehat d,\pi)$; the canonical multiplier additionally makes its constraint-level allocation reproducible.
\end{theorem}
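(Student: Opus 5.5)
The proof treats the three conclusions in order: existence and uniqueness of the projection $\widehat d$ together with exact feasibility, the equivalence with the market-price-of-risk projection \eqref{eq:general-nearest-kernel}, and the defect identity \eqref{eq:support-defect}.

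\emph{Existence, uniqueness and feasibility.} The cone $\cC_{\cK}=-\Gamma^\top\R_+^m$ is the linear image of a polyhedral cone. It is therefore a finitely generated cone, and by Minkowski--Weyl it is closed and convex. The objective in \eqref{eq:general-distortion-projection} is strictly convex and coercive because $\Sigma\succ0$. Hence $\widehat d$ exists and is unique: it is the $\Sigma^{-1}$-weighted projection of $\widehat d^{\mathrm{raw}}$ onto a closed convex set. Since $\widehat d\in\cC_{\cK}$, \Cref{lem:lp-attainment} shows that $\mathcal M(\widehat d)$ is nonempty. This set is a polyhedron, being the face of optimal solutions of a feasible bounded LP, so it is closed and convex. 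The strictly convex tie-break then gives a unique $\widehat\mu=\mu^{\mathrm{can}}(\widehat d)\ge0$ with $-\Gamma^\top\widehat\mu=\widehat d$. Exact dual feasibility means precisely that $\widehat\mu$ is nonnegative and generates the distortion $\widehat d$. So \Cref{thm:polyhedral-budget} applies to $\widehat\mu$ once the construction is carried out pointwise at each time--state input. Predictability of $\widehat\mu_k$ follows because the map from the predictable inputs to $\widehat\mu_k$ is deterministic and Borel; I would note this as an appeal to measurable selection of a unique minimizer, so it need not be checked in detail.

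\emph{Equivalence with the market-price projection.} Write $\theta(d)=A^\top\Sigma^{-1}(b+d)$. In the case where the raw input is $\widehat\theta^{\mathrm{raw}}\in\operatorname{Range}(A^\top)$, there is a $u$ with $\widehat\theta^{\mathrm{raw}}=A^\top u$. Then \eqref{eq:raw-d-general} gives $b+\widehat d^{\mathrm{raw}}=AA^\top u=\Sigma u$, so $u=\Sigma^{-1}(b+\widehat d^{\mathrm{raw}})$ and $\widehat\theta^{\mathrm{raw}}=\theta(\widehat d^{\mathrm{raw}})$. In the direct-distortion case this identity holds by the definition of $\widehat\theta^{\mathrm{raw}}$. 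In both cases $\theta(d)-\widehat\theta^{\mathrm{raw}}=A^\top\Sigma^{-1}(d-\widehat d^{\mathrm{raw}})$, and so
\[
\norm{\theta(d)-\widehat\theta^{\mathrm{raw}}}^2=(d-\widehat d^{\mathrm{raw}})^\top\Sigma^{-1}AA^\top\Sigma^{-1}(d-\widehat d^{\mathrm{raw}})=(d-\widehat d^{\mathrm{raw}})^\top\Sigma^{-1}(d-\widehat d^{\mathrm{raw}}).
\]
The two objectives coincide on $\cC_{\cK}$, so they have the same minimizer, which gives \eqref{eq:general-nearest-kernel}. The main obstacle is the range condition: if $\widehat\theta^{\mathrm{raw}}$ had a component in $\ker A$, that component would only add a constant to the objective, which does not change the argmin either. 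I would mention this.

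\emph{Defect identity.} Take $\pi\in\cK$. Using $\widehat d=-\Gamma^\top\widehat\mu$,
\[
\widehat\mu^\top(g-\Gamma\pi)=g^\top\widehat\mu-\widehat\mu^\top\Gamma\pi=g^\top\widehat\mu+\widehat d^\top\pi.
\]
Since $\widehat\mu\in\mathcal M(\widehat d)$, \Cref{lem:lp-attainment} gives $g^\top\widehat\mu=\sigma_{\cK}(\widehat d)$, which yields \eqref{eq:support-defect}. Nonnegativity comes from $\widehat\mu\ge0$ and $g-\Gamma\pi\ge0$. It also follows from the definition \eqref{eq:support-function} of the support function, since $\sigma_{\cK}(\widehat d)\ge-\widehat d^\top\pi$. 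The middle expression involves only $(\widehat d,\pi)$, so the total defect is the same for every element of $\mathcal M(\widehat d)$. The canonical choice affects only the row-wise split $\widehat\mu_j(g-\Gamma\pi)_j$, which completes the proof.
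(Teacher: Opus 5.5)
Your proposal is correct and follows the paper's proof essentially step for step. The identity $\Sigma^{-1}AA^\top\Sigma^{-1}=\Sigma^{-1}$ makes the two projection objectives coincide on $\cC_{\cK}$; \Cref{lem:lp-attainment} together with the strictly convex minimum-norm tie-break yields a unique $\widehat\mu\ge0$ with $-\Gamma^\top\widehat\mu=\widehat d$; and the computation $\widehat\mu^\top(g-\Gamma\pi)=g^\top\widehat\mu+\widehat d^\top\pi=\sigma_{\cK}(\widehat d)+\widehat d^\top\pi$ gives the defect identity.

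Your extra remarks add details the paper leaves implicit without changing the argument: closedness of the finitely generated cone, which guarantees that $\widehat d$ exists and is unique; the harmless $\ker A$ component; and measurability of the pointwise map.
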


The recovery map restores feasibility and canonical attribution; global dual optimization remains a separate step. In a deterministic constant-coefficient benchmark, the full dual itself is the separable convex program
\begin{equation}
\label{eq:direct-dual-qp}
\min_{\mu_k\ge0}
\sum_{k=0}^{K-1}h\left[
 g^\top\mu_k
 +\frac1{2\gamma}
 \norm{A^\top\Sigma^{-1}(b-\Gamma^\top\mu_k)}^2
\right].
\end{equation}
This program is solved directly in the Merton benchmarks. In state-dependent experiments, a learned field amortizes candidate generation and the same exact recovery map is applied before certification.

\subsection{Two canonical polyhedral constraint geometries}
\label{subsec:benchmark-geometries}

The following constructions are pointwise in time and state and therefore apply to both constant-coefficient and state-dependent models.

\begin{corollary}[No-short orthant]
\label{cor:no-short-recovery}
For
\[
\cK=\R_+^n,
\qquad
\Gamma=-I_n,
\qquad g=0,
\]
the cone is $\cC_{\cK}=\R_+^n$. The first stage is the genuine covariance-metric projection
\begin{equation}
\label{eq:metric-proj}
\widehat\nu
=
\argmin_{\nu\ge0}
\frac12(\nu-\widehat d^{\mathrm{raw}})^\top
\Sigma^{-1}(\nu-\widehat d^{\mathrm{raw}}),
\end{equation}
and the multiplier representation is unique. The deflator satisfies
\begin{equation}
\label{eq:no-short-budget}
\E_k[H_{k+1}X_{k+1}]
=
H_kX_k e^{-h\nu_k^\top\pi_k}.
\end{equation}
\end{corollary}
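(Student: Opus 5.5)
The corollary specializes Theorems~\ref{thm:polyhedral-budget} and~\ref{thm:polyhedral-recovery} to $\Gamma=-I_n$, $g=0$. I would verify its three assertions in turn and then substitute into the general budget identity.

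\textbf{The cone.} By \eqref{eq:distortion-cone}, $\cC_{\cK}=-\Gamma^\top\R_+^m=I_n\R_+^n=\R_+^n$. Consequently, the general recovery problem \eqref{eq:general-distortion-projection} becomes a minimization over $d\ge0$ in the $\Sigma^{-1}$-metric, which is exactly \eqref{eq:metric-proj} with $\widehat\nu:=\widehat d$. Since $\Sigma^{-1}\succ0$, the objective is strictly convex and the feasible set is a closed convex cone. The minimizer therefore exists and is unique, so this first stage is a genuine metric projection.

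\textbf{Uniqueness of the multiplier.} The constraint $-\Gamma^\top\mu=d$ reads $\mu=d$. The feasible set in \eqref{eq:support-lp} is thus the singleton $\{d\}$, which is nonempty precisely because $d\in\R_+^n$. Hence $\mathcal M(d)=\{d\}$ and $\mu^{\mathrm{can}}(d)=d$, with no tie-break needed. The support value is $\sigma_{\cK}(d)=g^\top d=0$. This agrees directly with $\sup_{\pi\ge0}(-d^\top\pi)=0$ for $d\ge0$, consistent with Lemma~\ref{lem:lp-attainment}.

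\textbf{Budget identity.} With $\mu_k=\nu_k$, the defect \eqref{eq:defect-general} becomes $\kappa_k=\nu_k^\top(0+\pi_k)=\nu_k^\top\pi_k$, which is nonnegative since both vectors are nonnegative. The deflator \eqref{eq:deflator-general} has $g^\top\mu_k=0$ and $A_k\theta_k=b_k+\nu_k$. Substituting into \eqref{eq:general-conditional-budget} gives \eqref{eq:no-short-budget}.

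\textbf{Main check.} The only point needing care is the conditional Gaussian computation behind Theorem~\ref{thm:polyhedral-budget}, and I would re-verify it in this case as a sanity check. Conditional on $\mathcal F_{t_k}$, the exponent of $H_{k+1}X_{k+1}/(H_kX_k)$ is Gaussian with mean
\[
\left[\pi_k^\top b_k-\tfrac12\pi_k^\top\Sigma_k\pi_k-\tfrac12\|\theta_k\|^2\right]h
\]
and variance $\|A_k^\top\pi_k-\theta_k\|^2h$. Its exponential moment is therefore
\[
\exp\!\left(h\left[\pi_k^\top b_k-\pi_k^\top A_k\theta_k\right]\right)
=\exp\!\left(-h\,\pi_k^\top\nu_k\right),
\]
using $A_k\theta_k=b_k+\nu_k$. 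This confirms the identity; everything else is bookkeeping.
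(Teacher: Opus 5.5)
Your proposal is correct and takes essentially the paper's route. The paper's proof observes that $-\Gamma^\top\R_+^n=\R_+^n$ and that $-\Gamma^\top\mu=d$ forces $\mu=d$, so the representation is unique; the metric-projection form is inherited from \Cref{thm:polyhedral-recovery}, and the budget identity is \Cref{thm:polyhedral-budget} with $g=0$ and $\kappa_k=\nu_k^\top\pi_k$. Your extra Gaussian moment check repeats the computation in the proof of \Cref{thm:polyhedral-budget} and agrees with it.
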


\begin{corollary}[No-borrowing simplex]
\label{cor:simplex-recovery}
For
\[
\cK=\{\pi\ge0:\mathbf1^\top\pi\le1\},
\quad
\Gamma=\begin{pmatrix}-I_n\\ \mathbf1^\top\end{pmatrix},
\quad
g=(0,\ldots,0,1)^\top,
\]
write $\mu=(\nu,\xi)$ with $\nu\ge0$ and $\xi\ge0$. Then
\[
d=\nu-\xi\mathbf1,
\qquad
\cC_{\cK}
=
\operatorname{cone}\{e_1,\ldots,e_n,-\mathbf1\}
=
\R^n.
\]
Hence the projection stage is the identity. The least-cost exact decomposition is
\begin{equation}
\label{eq:canonical-xi}
\xi^{\mathrm{can}}(d)
=
\max\{0,-\min_i d_i\},
\qquad
\nu^{\mathrm{can}}(d)
=
d+\xi^{\mathrm{can}}(d)\mathbf1.
\end{equation}
The one-step defect is
\begin{equation}
\label{eq:simplex-defect}
\kappa_k
=
\nu_k^\top\pi_k
+\xi_k(1-\mathbf1^\top\pi_k).
\end{equation}
\end{corollary}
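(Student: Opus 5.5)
The plan is to specialize the general machinery of Theorem~\ref{thm:polyhedral-recovery} and Theorem~\ref{thm:polyhedral-budget} to $\Gamma=\bigl(-I_n;\mathbf1^\top\bigr)$, $g=(0,\ldots,0,1)^\top$, and verify each of the four claims by direct linear algebra.

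\textbf{Distortion.} Writing $\mu=(\nu,\xi)$, we have $d=-\Gamma^\top\mu=\nu-\xi\mathbf1$ immediately from \eqref{eq:distortion-general}.

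\textbf{Cone.} We then show $\cC_{\cK}=-\Gamma^\top\R_+^{n+1}=\operatorname{cone}\{e_1,\ldots,e_n,-\mathbf1\}$, since the columns of $-\Gamma^\top$ are exactly $e_i$ and $-\mathbf1$. To see this cone is all of $\R^n$, take an arbitrary $d$, set $\xi=\max\{0,-\min_i d_i\}$ and $\nu=d+\xi\mathbf1$. Then $\nu_i=d_i+\xi\ge d_i-\min_j d_j\ge0$, so $d=\nu-\xi\mathbf1$ with $(\nu,\xi)\ge0$. Because $\cC_{\cK}=\R^n$, the projection \eqref{eq:general-distortion-projection} has zero objective at $d=\widehat d^{\mathrm{raw}}$ and is strictly convex in the $\Sigma^{-1}$ metric, so $\widehat d=\widehat d^{\mathrm{raw}}$ and the projection stage is the identity.

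\textbf{Least-cost decomposition.} The support LP \eqref{eq:support-lp} reads $\min\{\xi:\nu\ge0,\xi\ge0,\nu-\xi\mathbf1=d\}$, because $g^\top\mu=\xi$. Once $\xi$ is fixed, $\nu=d+\xi\mathbf1$ is forced, and feasibility requires $\xi\ge-d_i$ for every $i$ together with $\xi\ge0$. The minimal cost is therefore $\xi^{\mathrm{can}}(d)=\max\{0,-\min_i d_i\}$, and it is attained uniquely. It follows that $\mathcal M(d)$ is the singleton $\{(d+\xi^{\mathrm{can}}\mathbf1,\xi^{\mathrm{can}})\}$, so the minimum-norm tie-break \eqref{eq:canonical-multiplier} is vacuous and returns \eqref{eq:canonical-xi}. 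As a cross-check, $\sigma_{\cK}(d)=\sup_{\pi\ge0,\mathbf1^\top\pi\le1}(-d^\top\pi)=\max\{0,\max_i(-d_i)\}$, which matches Lemma~\ref{lem:lp-attainment}.

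\textbf{Defect.} From \eqref{eq:defect-general},
\[
\kappa_k=\mu_k^\top(g-\Gamma\pi_k)=\nu_k^\top\pi_k+\xi_k(1-\mathbf1^\top\pi_k),
\]
because $g-\Gamma\pi=(\pi,\,1-\mathbf1^\top\pi)$. This is \eqref{eq:simplex-defect}.

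There is no genuine obstacle. The only step requiring care is the uniqueness of the LP minimizer, which ensures the canonical selection adds nothing. The identity nature of the projection follows simply from the cone being the whole space.
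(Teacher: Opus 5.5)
Your proof is correct and follows essentially the same route as the paper's: read off $d=\nu-\xi\mathbf1$, show every $d\in\R^n$ has a nonnegative decomposition, and minimize $g^\top\mu=\xi$ subject to $\xi\ge\max\{0,-\min_i d_i\}$. The differences are minor. You obtain $\cC_{\cK}=\R^n$ from the explicit decomposition rather than from the paper's positive-spanning identity $-e_i=-\mathbf1+\sum_{j\ne i}e_j$, and you add that the least-cost set is a singleton, so the minimum-norm tie-break is vacuous.
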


\paragraph{Constant-coefficient specialization.}
For constant market coefficients and deterministic multiplier sequences, the dual CE is
\begin{equation}
\label{eq:deterministic-dual-ce}
\CE_{\mathrm{dual}}(\mu)
=
x_0\exp\left(
 rT
 +\sum_{k=0}^{K-1}hg^\top\mu_k
 +\frac1{2\gamma}\sum_{k=0}^{K-1}h\norm{\theta_k(\mu_k)}^2
\right).
\end{equation}
This formula specializes to the no-short and simplex expressions used in the experiments.

\subsection{Orthogonal density enrichments in incomplete markets}
\label{sec:orthogonal-enrichment}

The minimum-norm market price of risk in Section~\ref{subsec:budget-identity} already secures dual feasibility and places no density distortion on innovations orthogonal to risky returns. When such innovations affect the auxiliary state and hence future investment opportunities, an additional positive conditionally mean-one factor can tighten the dual without changing the budget identity. Suppose the one-step Gaussian innovation decomposes as $(Z_{k+1}^{\parallel},Z_{k+1}^{\perp})$, where risky returns depend only on $Z_{k+1}^{\parallel}$ and the two blocks are conditionally independent given $\F_{t_k}$. Let $\Lambda_{k+1}^{\perp}>0$ be a measurable function of $Z_{k+1}^{\perp}$ with $\F_{t_k}$-measurable parameters, normalized by
\begin{equation}
\label{eq:orthogonal-mean-one}
\E_k[\Lambda_{k+1}^{\perp}]=1.
\end{equation}

\begin{proposition}[Incomplete-market density enrichment]
\label{prop:orthogonal-enrichment}
Multiplying each one-step deflator increment in \eqref{eq:deflator-general} by $\Lambda_{k+1}^{\perp}$ preserves the conditional budget identity \eqref{eq:general-conditional-budget}, the telescoping budget inequality, and the Fenchel upper bound. Hence an unspanned density field can be optimized freely subject to positivity, the stated conditional measurability, integrability, and \eqref{eq:orthogonal-mean-one}.
\end{proposition}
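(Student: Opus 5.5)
The plan is to re-run the one-step conditional computation behind \Cref{thm:polyhedral-budget} with the enriched increment, and then check that every later step uses only that identity. Write the enriched deflator as $\widetilde H_0=1$ and
\[
\widetilde H_{k+1}=\widetilde H_k\,\Lambda_{k+1}^{\perp}\exp\Bigl(-(r_k+g^\top\mu_k)h-\theta_k^\top\Delta W_k-\tfrac12\norm{\theta_k}^2h\Bigr).
\]
With $M_{k+1}:=\exp\bigl(-(r_k+g^\top\mu_k)h-\theta_k^\top\Delta W_k-\tfrac12\norm{\theta_k}^2h\bigr)$, the ratio $\widetilde H_{k+1}X_{k+1}/(\widetilde H_kX_k)$ factors as $\Lambda_{k+1}^{\perp}\cdot M_{k+1}\cdot X_{k+1}/X_k$.

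By \eqref{eq:log-euler} and \eqref{eq:theta-general}, the factor $M_{k+1}X_{k+1}/X_k$ is a measurable function of the return-driving block $Z_{k+1}^{\parallel}$ (equivalently $\Delta W_k$) and of $\F_{t_k}$-measurable coefficients. This needs one check: $\theta_k$ is the minimum-norm price $A_k^\top\Sigma_k^{-1}(\cdot)$, so it lies in $\operatorname{Range}(A_k^\top)$. Hence $\theta_k^\top\Delta W_k$ loads only on the component of $\Delta W_k$ that drives returns, which is exactly what ``places no density distortion on orthogonal innovations'' means.

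The factor $\Lambda_{k+1}^{\perp}$ is a function of $Z_{k+1}^{\perp}$ with $\F_{t_k}$-measurable parameters, and the two blocks are conditionally independent given $\F_{t_k}$. So the conditional expectation of the product factorizes:
\[
\E_k\!\left[\Lambda_{k+1}^{\perp}\,M_{k+1}\tfrac{X_{k+1}}{X_k}\right]
=\E_k[\Lambda_{k+1}^{\perp}]\;\E_k\!\left[M_{k+1}\tfrac{X_{k+1}}{X_k}\right]
=e^{-h\kappa_k}.
\]
The first factor equals $1$ by \eqref{eq:orthogonal-mean-one}. The second is $e^{-h\kappa_k}$ by the unenriched computation in \Cref{thm:polyhedral-budget}, which we may take as given; that computation is the Gaussian moment together with $\pi_k^\top(b_k-A_k\theta_k)=\pi_k^\top\Gamma^\top\mu_k$. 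Nonnegativity lets us apply the factorization without integrability worries, and multiplying by the $\F_{t_k}$-measurable $\widetilde H_kX_k$ gives \eqref{eq:general-conditional-budget} for $\widetilde H$.

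The remaining claims follow as in the original argument. Taking expectations and telescoping over $k$ gives \eqref{eq:general-budget-telescope}, since $1-e^{-h\kappa_k}\ge0$ because $\kappa_k\ge0$; this also yields $\E[\widetilde H_KX_K]\le x_0$. The Fenchel bound comes from the pathwise inequality $U(X_K)\le U^\star(y\widetilde H_K)+y\widetilde H_KX_K$, which holds because $\widetilde H_K>0$ by positivity of $\Lambda^{\perp}$. Taking expectations and using the budget inequality gives $J(\pi)\le \E[U^\star(y\widetilde H_K)]+yx_0$. This step needs $\E[U^\star(y\widetilde H_K)]$ to be well defined, which is the stated integrability requirement.

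The main obstacle is justifying the factorization rigorously. If the paper's innovation $\Delta W_k$ already contains orthogonal coordinates (since $d_W\ge n$), I would identify $Z^{\parallel}$ with the projection of $\Delta W_k$ onto $\operatorname{Range}(A_k^\top)$ and $Z^{\perp}$ with the complementary projection together with any extra state noise. Gaussianity with identity-scaled covariance gives independence of the two projections conditional on $\F_{t_k}$, and one then checks that $\theta_k^\top\Delta W_k=\theta_k^\top Z^{\parallel}$.

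The ``optimized freely'' statement then follows immediately. The identity holds for every admissible $\Lambda^{\perp}$, so any choice yields a valid upper bound, and the infimum over such fields remains an upper bound for $V^\star$.
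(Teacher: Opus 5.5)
Your proposal is correct and follows the paper's own proof. You factor the conditional expectation of the enriched one-step ratio using conditional independence of the traded and orthogonal blocks, apply $\E_k[\Lambda_{k+1}^{\perp}]=1$ together with the unenriched identity from \Cref{thm:polyhedral-budget}, and then reuse the telescoping and Fenchel steps unchanged, which is exactly the paper's route. Your added check that $\theta_k\in\operatorname{Range}(A_k^\top)$ ensures the base deflator increment depends only on the return-driving block, a step the paper asserts without comment.
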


Sections~\ref{sec:predictable} and~\ref{sec:two-asset} learn the traded candidate jointly with the two bounded coefficients $(\theta_{\perp,k},c_{\perp,k})$ of a normalized positive linear--quadratic $\Lambda_{k+1}^{\perp}$. The wealth-only Merton benchmark in Section~\ref{sec:policy-geometry} sets $\Lambda_{k+1}^{\perp}\equiv1$. The analytic normalizer and moment condition are given in the Electronic Companion.

\section{Doob-compensated KKT-ledger estimation}
\label{sec:compensated-gap}

Section~\ref{sec:polyhedral} supplies a frozen feasible primal--dual pair. This section rewrites its population residual as the expectation of a directly observable, pathwise-nonnegative quantity. Section~\ref{subsec:positive-residual-ledger} separates the terminal Fenchel defect from a time-by-constraint KKT ledger, Section~\ref{subsec:doob-compensator} identifies the budget martingale removed by the canonical Doob compensator, and Section~\ref{subsec:residual-estimators} compares the resulting estimator with separate and uncompensated value differences.

Doob penalties and zero-variance behavior under ideal information-relaxation penalties are established ideas \citep{Rogers2007,BrownSmithSun2010}. The contribution here is not that generic variance-reduction principle, but the exact same-grid Fenchel--KKT representation and its finite-sample conversion into policy-distance and active-face statements.

Residual precision is structural because every policy-distance radius below scales as the square root of a residual upper bound. We report the \emph{effective path gain} as $(\SE_{\rm sep}/\SE_{\rm comp})^2$, the ratio of Monte Carlo path counts required to attain the same standard error under the usual $N^{-1/2}$ scaling. In the analytically calibrated solved benchmark, this gain rises from $1.21\times10^4$ at an early checkpoint to $2.06\times10^8$ near optimality. The increase is predicted by \Cref{cor:zero-var}: the compensated summand collapses pathwise near an exact pair, whereas ordinary value differencing retains the budget martingale. Earlier portfolio-duality methods can make the bracket itself tight; the distinct question here is how to resolve a fixed same-grid bracket precisely enough for policy-level inference \citep{HaughKoganWang2006,HaughJain2007,MaLiZheng2020}.

\subsection{Positive residuals and time-by-constraint attribution}
\label{subsec:positive-residual-ledger}

For any exactly feasible dual $(y,\mu)$, define the terminal Fenchel residual
\begin{equation}
\label{eq:fenchel-res}
F_T(y)
=
U^\star(yH_K)+yH_KX_K-U(X_K)
\ge0
\end{equation}
and the predictable complementary-slackness compensator
\begin{equation}
\label{eq:budget-comp}
B_T
=
y\sum_{k=0}^{K-1}H_kX_k(1-e^{-h\kappa_k})
\ge0.
\end{equation}
For later use, name their pathwise sum
\begin{equation}
\label{eq:pathwise-gap-residual}
Z^{\rm gap}(y,\mu;\pi)
:=F_T(y)+B_T.
\end{equation}

\begin{theorem}[Positive same-grid residual decomposition]
\label{thm:positive-gap}
For every admissible policy and every predictable exactly feasible polyhedral dual,
\begin{equation}
\label{eq:positive-gap}
\mathcal R(y,\mu;\pi)
:=
D(y,\mu)-J(\pi)
=
\E[Z^{\rm gap}(y,\mu;\pi)].
\end{equation}
At an exact primal--dual pair satisfying
\[
yH_K=U'(X_K),
\qquad
\kappa_k=0\quad\text{for every }k,
\]
both residuals vanish pathwise.
\end{theorem}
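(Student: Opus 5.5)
The plan is to expand $\E[Z^{\rm gap}]$ term by term and use the telescoped budget identity from \Cref{thm:polyhedral-budget}. By \eqref{eq:fenchel-res} and \eqref{eq:budget-comp},
\[
\E[Z^{\rm gap}]
=\E[U^\star(yH_K)]+y\E[H_KX_K]-\E[U(X_K)]+y\,\E\sum_{k=0}^{K-1}H_kX_k(1-e^{-h\kappa_k}).
\]
Multiplying \eqref{eq:general-budget-telescope} by $y$ replaces the last sum with $yx_0-y\E[H_KX_K]$. The $y\E[H_KX_K]$ terms then cancel, leaving $\E[U^\star(yH_K)]+yx_0-J(\pi)=D(y,\mu)-J(\pi)=\mathcal R(y,\mu;\pi)$. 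This is exactly \eqref{eq:positive-gap}. Pathwise nonnegativity holds termwise. For $F_T$, it is the Fenchel--Young inequality $U(x)\le U^\star(z)+zx$ with $z=yH_K>0$ and $x=X_K>0$, which follows directly from the definition \eqref{eq:conjugate}. For $B_T$, it follows from $\kappa_k\ge0$ in \eqref{eq:defect-general}, which uses $\mu_k\ge0$ and $\pi_k\in\cK$, together with $H_k,X_k>0$.

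The main obstacle is integrability. The cancellation above requires each expectation to be finite, or at least not of the form $\infty-\infty$. I would argue as follows.

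\emph{Budget term.} The nonnegative process $H_kX_k$ satisfies $\E_k[H_{k+1}X_{k+1}]\le H_kX_k$ by \eqref{eq:general-conditional-budget}. Hence $\E[H_KX_K]\le x_0<\infty$, and the telescoped sum is finite because it is a difference of finite quantities. This integrability is already implicit in \Cref{thm:polyhedral-budget}, and I would reuse it rather than reprove it.

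\emph{Utility terms.} With $\gamma>1$, $U$ is bounded above by $1/(\gamma-1)$ and $U^\star$ is bounded above by the same constant. I would therefore use the convention that $J(\pi)$ and $D(y,\mu)$ are well defined in $[-\infty,1/(\gamma-1)]$.

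\emph{Case split.} If $D(y,\mu)=+\infty$ cannot occur and $D=-\infty$ is excluded because $D\ge J$, the remaining issue is $J(\pi)=-\infty$. In that case both sides equal $+\infty$, since $F_T\ge -U(X_K)-\tfrac{1}{\gamma-1}$ forces $\E[F_T]=+\infty$. Otherwise every term is finite and the linear manipulation is legitimate. To keep the argument clean I would write $F_T=[U^\star(yH_K)-\tfrac{1}{\gamma-1}]+yH_KX_K+[\tfrac{1}{\gamma-1}-U(X_K)]$, a sum of a nonpositive, a nonnegative and a nonnegative integrable-or-infinite term, and apply linearity of expectation for quasi-integrable variables.

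For the final claim, suppose $yH_K=U'(X_K)$. Then $X_K$ attains the supremum in $\sup_{x>0}\{U(x)-yH_Kx\}$, since $U$ is strictly concave and the first-order condition is sufficient. Hence $U^\star(yH_K)=U(X_K)-yH_KX_K$, so $F_T=0$ pathwise. If in addition $\kappa_k=0$ for every $k$, then each summand $1-e^{-h\kappa_k}$ vanishes and $B_T=0$ pathwise. Therefore $Z^{\rm gap}=0$ almost surely, and in particular $\mathcal R=0$.
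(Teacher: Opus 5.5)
Your main argument is the paper's own proof run in the opposite direction. The paper adds and subtracts $yH_KX_K$ inside $D(y,\mu)-J(\pi)$, recognizes $\E[F_T(y)]$, turns $y\bigl(x_0-\E[H_KX_K]\bigr)$ into $\E[B_T]$ via \eqref{eq:general-budget-telescope}, and obtains the pathwise claim from Fenchel equality at $yH_K=U'(X_K)$ and from $\kappa_k\equiv0$. Your cancellation, nonnegativity and equality-case arguments are all correct.

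The one thing to fix is in your integrability paragraph, which goes beyond what the paper attempts. The bound $F_T\ge -U(X_K)-\frac1{\gamma-1}$ is false, because $U^\star(yH_K)$ is not bounded below and cannot be dropped. Write $x=X_K$, $z=yH_K$ and $p=(\gamma-1)/\gamma$. Then
\[
F_T=\frac{x^{1-\gamma}}{\gamma-1}+zx-\frac{\gamma}{\gamma-1}z^{p},
\qquad
-U(x)-\frac1{\gamma-1}=\frac{x^{1-\gamma}-2}{\gamma-1}.
\]
At $z=x^{-\gamma}$ one has $F_T=0$, while the right-hand quantity is positive once $x^{1-\gamma}>2$. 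This is exactly what happens on low-wealth paths of a near-optimal pair.

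Two smaller slips sit in the same paragraph. The inequality $D\ge J$ rules out $D=-\infty$ only when $J>-\infty$. Also, $D\le\frac1{\gamma-1}+yx_0$, not $\frac1{\gamma-1}$.

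Your own three-term split repairs this, case by case:
\begin{itemize}
\item If $J(\pi)>-\infty$, H\"older with exponents $1/p$ and $1/(1-p)=\gamma$ gives $\E[H_K^p]\le x_0^{p}\bigl(\E[X_K^{1-\gamma}]\bigr)^{1/\gamma}<\infty$. Every expectation is then finite and the cancellation is legitimate.
\item If $J(\pi)=-\infty$ but $D>-\infty$, the bracket $-\frac{\gamma}{\gamma-1}(yH_K)^p$ is integrable. The split then yields $\E[F_T]=+\infty=\mathcal R$.
\item If $D=J=-\infty$, then $\mathcal R$ itself is undefined. The statement needs the standing convention, tacit in the paper, that admissible policies have $J(\pi)>-\infty$.
\end{itemize}
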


For constraint row $j$ at time $k$, define the nonnegative slack and row-level KKT contribution
\begin{equation}
\label{eq:kkt-atom}
s_{j,k}=g_j-\Gamma_j\pi_k,
\qquad
c_{j,k}=\mu_{j,k}s_{j,k},
\qquad
\kappa_k=\sum_{j=1}^m c_{j,k}.
\end{equation}
Let
\begin{equation}
\label{eq:kkt-share}
\omega_{j,k}
=
\begin{cases}
c_{j,k}/\kappa_k,&\kappa_k>0,\\
0,&\kappa_k=0,
\end{cases}
\end{equation}
and define the exact constraint-time attribution
\begin{equation}
\label{eq:kkt-ledger-entry}
B_{j,k}
=
yH_kX_k\bigl(1-e^{-h\kappa_k}\bigr)\omega_{j,k}.
\end{equation}

\begin{proposition}[Pathwise KKT ledger]
\label{prop:kkt-ledger}
Every ledger entry is nonnegative and
\begin{equation}
\label{eq:kkt-ledger-sum}
\sum_{j=1}^m B_{j,k}
=
yH_kX_k(1-e^{-h\kappa_k}),
\qquad
\mathcal R(y,\mu;\pi)
=
\E\left[F_T+\sum_{k,j}B_{j,k}\right].
\end{equation}
The total time-$k$ defect is uniquely determined by $(d_k,\pi_k)$ through \eqref{eq:support-defect}. If the multiplier representation is nonunique, the canonical tie-break \eqref{eq:canonical-multiplier} makes the constraint-level split unique as well.
\end{proposition}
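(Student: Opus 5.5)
The argument is mostly bookkeeping on top of \Cref{thm:positive-gap} and the defect representation \eqref{eq:support-defect}. It has three parts: nonnegativity of each entry, the summation identities, and uniqueness of the time-$k$ total and of its row split.

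\emph{Nonnegativity.} Feasibility $\pi_k\in\cK$ gives $s_{j,k}=g_j-\Gamma_j\pi_k\ge0$ row by row, and $\mu_{j,k}\ge0$, so $c_{j,k}\ge0$. Hence $\kappa_k=\sum_j c_{j,k}\ge0$, and $\omega_{j,k}\in[0,1]$ in both branches of \eqref{eq:kkt-share}. Since $y>0$, $H_k>0$, $X_k>0$ (both are exponentials of finite quantities, by \eqref{eq:log-euler} and \eqref{eq:deflator-general}), and $1-e^{-h\kappa_k}\ge0$, every $B_{j,k}\ge0$.

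\emph{Summation identities.} For the first identity in \eqref{eq:kkt-ledger-sum}, split into cases. If $\kappa_k>0$, then $\sum_j\omega_{j,k}=\sum_j c_{j,k}/\kappa_k=1$, and the sum follows. If $\kappa_k=0$, all $\omega_{j,k}=0$, so the left side is $0$; the right side is $yH_kX_k(1-e^0)=0$ as well. The zero-defect convention is therefore harmless. Summing over $k$ gives $\sum_{k,j}B_{j,k}=B_T$ from \eqref{eq:budget-comp}. Substituting into \eqref{eq:positive-gap} of \Cref{thm:positive-gap} yields $\mathcal R=\E[F_T+B_T]=\E[F_T+\sum_{k,j}B_{j,k}]$. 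All terms are nonnegative, so exchanging the finite sum with the expectation needs no integrability beyond what \Cref{thm:positive-gap} already uses.

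\emph{Uniqueness.} This is the only step needing care. The dual is built from a distortion $d_k$ with $\mu_k\in\mathcal M(d_k)$, as in the recovery map of \Cref{thm:polyhedral-recovery}. Take any least-cost representation $\mu$ with $-\Gamma^\top\mu=d_k$ and $g^\top\mu=\sigma_{\cK}(d_k)$; attainment holds by \Cref{lem:lp-attainment}. Then
\[
\mu^\top(g-\Gamma\pi_k)=g^\top\mu+d_k^\top\pi_k=\sigma_{\cK}(d_k)+d_k^\top\pi_k,
\]
which depends only on $(d_k,\pi_k)$. This is exactly \eqref{eq:support-defect}. It follows that $\kappa_k$ is invariant across $\mathcal M(d_k)$, and so is the total time-$k$ entry $yH_kX_k(1-e^{-h\kappa_k})$, given the path and $y$.

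For the row split, $\mathcal M(d_k)$ is the solution set of a linear program, hence a nonempty convex polyhedron. Minimizing the strictly convex function $\tfrac12\|\mu\|^2$ over it has exactly one minimizer, namely \eqref{eq:canonical-multiplier}. With $\mu_k=\mu^{\rm can}(d_k)$ fixed, each $c_{j,k}$ and $\omega_{j,k}$ is a deterministic function of $(d_k,\pi_k)$, so each $B_{j,k}$ is uniquely defined.

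The main subtlety is the convention for the statement itself: uniqueness of the total presupposes that $\mu_k$ is a least-cost representative of $d_k$, as produced by the recovery map. A non-least-cost multiplier would inflate $g^\top\mu$ and change $\kappa_k$. I would state this assumption explicitly at that point.
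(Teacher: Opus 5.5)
Your proposal is correct and follows essentially the same route as the paper's proof. Both argue nonnegativity from $\mu_{j,k}\ge0$ and feasibility of $\pi_k$, check the two cases on $\kappa_k$, sum the entries into \Cref{thm:positive-gap}, and obtain uniqueness via \eqref{eq:support-defect} plus the strictly convex tie-break. Your caveat is also correct: invariance of $\kappa_k$ requires $\mu_k\in\mathcal M(d_k)$, which the recovery map of \Cref{thm:polyhedral-recovery} guarantees, and the paper leaves this hypothesis implicit.
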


Thus the compensated summand is not only a low-variance estimator. It is an audit ledger that attributes the certified residual across dates and constraints. The no-borrowing experiments aggregate these entries into no-short and borrowing-cap components; finer row-level decompositions are available without changing the estimator.

\subsection{The canonical Doob compensator}
\label{subsec:doob-compensator}

Define
\begin{equation}
\label{eq:martingale-difference}
\Delta M_{k+1}
=
H_{k+1}X_{k+1}
-\E_k[H_{k+1}X_{k+1}].
\end{equation}

\begin{theorem}[Doob-compensated residual identity]
\label{thm:doob-comp}
Pathwise,
\begin{equation}
\label{eq:doob-budget}
x_0-H_KX_K
=
\sum_{k=0}^{K-1}H_kX_k(1-e^{-h\kappa_k})
-\sum_{k=0}^{K-1}\Delta M_{k+1}.
\end{equation}
Consequently, the single-path common-random-number (CRN) residual
\begin{equation}
\label{eq:crn-path}
\mathcal R_{\mathrm{crn}}^{\mathrm{path}}
=
U^\star(yH_K)+yx_0-U(X_K)
\end{equation}
satisfies
\begin{equation}
\label{eq:crn-doob}
\mathcal R_{\mathrm{crn}}^{\mathrm{path}}
=
F_T(y)+B_T
-y\sum_{k=0}^{K-1}\Delta M_{k+1}.
\end{equation}
Thus the CRN and compensated estimators have the same expectation, but the former retains a mean-zero martingale component. Only $Z^{\rm gap}=F_T+B_T$ is pathwise nonnegative and pathwise zero at an exact pair.
\end{theorem}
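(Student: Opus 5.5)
The proof combines a one-step Doob decomposition of the budget process $H_kX_k$ with the conditional budget identity of \Cref{thm:polyhedral-budget}, followed by an algebraic rearrangement of the single-path CRN residual. Write $P_k:=H_kX_k$. By definition \eqref{eq:martingale-difference}, $P_{k+1}=\E_k[P_{k+1}]+\Delta M_{k+1}$. Substituting \eqref{eq:general-conditional-budget} for the conditional expectation gives the pathwise one-step relation
\[
P_{k+1}=P_ke^{-h\kappa_k}+\Delta M_{k+1},
\qquad\text{equivalently}\qquad
P_k-P_{k+1}=P_k(1-e^{-h\kappa_k})-\Delta M_{k+1}.
\]
Summing over $k=0,\dots,K-1$, the left side telescopes to $P_0-P_K=x_0-H_KX_K$, because $H_0=1$ and $X_0=x_0$. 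This is exactly \eqref{eq:doob-budget}. The identity holds pathwise with no expectation taken. The only integrability needed is that $\E_k[P_{k+1}]$ exists, and this is already part of the conclusion of \Cref{thm:polyhedral-budget}: there the conditional expectation is computed explicitly as a Gaussian exponential moment given predictable $\pi_k,\mu_k$.

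Next, starting from \eqref{eq:crn-path}, add and subtract $yH_KX_K$:
\[
\mathcal R^{\rm path}_{\rm crn}
=\bigl[U^\star(yH_K)+yH_KX_K-U(X_K)\bigr]+y\,(x_0-H_KX_K).
\]
The bracket is $F_T(y)$ by \eqref{eq:fenchel-res}. Multiplying \eqref{eq:doob-budget} by $y$ turns the second term into $B_T-y\sum_k\Delta M_{k+1}$, using \eqref{eq:budget-comp}. Together these give \eqref{eq:crn-doob}.

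For the expectation claim, each $\Delta M_{k+1}$ has $\E_k[\Delta M_{k+1}]=0$ provided $P_{k+1}$ is integrable. The main care point is integrability: we need $\E[P_k]<\infty$ for every $k$. This follows by iterating \eqref{eq:general-conditional-budget} with the tower property, which yields $\E[P_{k+1}]\le \E[P_k]\le x_0$ since $\kappa_k\ge0$. Hence each $\Delta M_{k+1}$ is integrable with zero mean, and $\E[\mathcal R^{\rm path}_{\rm crn}]=\E[F_T+B_T]$. By \Cref{thm:positive-gap} this common value is $\mathcal R(y,\mu;\pi)$. If $\E[U^\star(yH_K)]$ or $\E[U(X_K)]$ is infinite, the two estimators still agree in the extended sense, because the martingale part is integrable.

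For the final sentence of the statement: $F_T\ge0$ holds pathwise by the Fenchel inequality defining $U^\star$ in \eqref{eq:conjugate}, and $B_T\ge0$ because $\kappa_k\ge0$ by \eqref{eq:defect-general}. At an exact pair, $yH_K=U'(X_K)$ makes the Fenchel inequality an equality, so $F_T=0$. Likewise $\kappa_k=0$ gives $B_T=0$. So $Z^{\rm gap}$ vanishes pathwise, as already recorded in \Cref{thm:positive-gap}.

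The CRN residual, by contrast, equals $-y\sum_k\Delta M_{k+1}$ at an exact pair. To show this is not pathwise zero (nor sign-definite), note that $\Delta M_{k+1}=P_ke^{-h\kappa_k}\bigl(e^{(\pi_k^\top A_k-\theta_k^\top)\Delta W_k-\frac12\|A_k^\top\pi_k-\theta_k\|^2h}-1\bigr)$. This is nondegenerate whenever $A_k^\top\pi_k\ne\theta_k$, which is the generic case since $\pi_k^\star\propto\Sigma_k^{-1}(b_k+d_k)/\gamma$ with $\gamma>1$. Such a term takes both signs, so the CRN summand is neither nonnegative nor zero.
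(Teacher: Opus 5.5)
Your proposal is correct and follows the paper's proof. Both use the one-step relation $H_{k+1}X_{k+1}=H_kX_ke^{-h\kappa_k}+\Delta M_{k+1}$ from the conditional budget identity, telescope it, and then add and subtract $yH_KX_K$ to reach \eqref{eq:crn-doob}.

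Your integrability check and your nondegeneracy argument for the ``only'' clause go beyond what the paper writes. One caveat: the myopic formula for $\pi_k^\star$ you cite is specific to the Merton class. In general, if every increment vanished at an exact pair, then $X_K^{1-\gamma}=yH_KX_K=yx_0$ would be nonrandom, which is the degenerate case.
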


The finite-variation term in \eqref{eq:budget-comp} is canonical: it is the predictable compensator in the Doob decomposition of the budget process, and its economic content is exactly dynamic complementary slackness.

\subsection{Three residual estimators and calibration effects}
\label{subsec:residual-estimators}

For a final bank of $N_{\rm MC}$ paths, we compare
\begin{align}
\widehat{\mathcal R}_{\mathrm{sep}}
&=\widehat D_{\mathrm{sep}}-\widehat J_{\mathrm{sep}},
\\
\widehat{\mathcal R}_{\mathrm{crn}}
&=\frac1{N_{\rm MC}}\sum_{\ell=1}^{N_{\rm MC}} \mathcal R_{\mathrm{crn}}^{\mathrm{path},(\ell)},
\\
\widehat{\mathcal R}_{\mathrm{comp}}
&=\frac1{N_{\rm MC}}\sum_{\ell=1}^{N_{\rm MC}}Z_{\ell}^{\rm gap}
=\frac1{N_{\rm MC}}\sum_{\ell=1}^{N_{\rm MC}}\left(F_T^{(\ell)}+B_T^{(\ell)}\right).
\end{align}
The first subtracts separately evaluated primal and dual means. The second shares paths but retains the martingale term. The third is the empirical mean of the pathwise residual in \eqref{eq:pathwise-gap-residual}, so
\[
\E[\widehat{\mathcal R}_{\mathrm{comp}}]
=\mathcal R(y,\mu;\pi).
\]
A negative CRN realization is therefore a variance event, not a duality violation, and must not be clipped into a false certificate.

\begin{corollary}[Pathwise-zero and vanishing-variance limit]
\label{cor:zero-var}
If $Z^{\rm gap}=F_T+B_T\to0$ in $L^2$, then the single-path variance of the compensated estimator converges to zero. At an exact pair its summand is identically zero. The uncompensated CRN summand generally retains the martingale term in \eqref{eq:crn-doob}.
\end{corollary}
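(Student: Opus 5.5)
The plan is to prove the three claims of \Cref{cor:zero-var} in turn: the variance limit, the pathwise zero at an exact pair, and the surviving martingale in the CRN summand. Each follows from an earlier result, mainly \Cref{thm:positive-gap} and \Cref{thm:doob-comp}.

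\emph{Variance limit.} Write $Z:=Z^{\rm gap}$. The single-path variance is $\operatorname{Var}(Z)=\E[Z^2]-(\E Z)^2\le\E[Z^2]=\norm{Z}_{L^2}^2$. If $Z\to0$ in $L^2$ along the sequence of primal--dual pairs considered, the right-hand side tends to zero, and so does the variance. By Cauchy--Schwarz, $\E Z\to0$ as well, so the residual $\mathcal R=\E Z$ given by \Cref{thm:positive-gap} also vanishes. For an average of $N_{\rm MC}$ i.i.d.\ paths, the estimator variance is $\operatorname{Var}(Z)/N_{\rm MC}$, which also tends to zero.

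\emph{Exact pair.} Here I use the second statement of \Cref{thm:positive-gap}. If $yH_K=U'(X_K)$, then $X_K$ attains the supremum in \eqref{eq:conjugate} at $z=yH_K$. Indeed, $x\mapsto U(x)-zx$ is strictly concave on $(0,\infty)$, and its stationarity condition is exactly $U'(x)=z$. Hence $U^\star(yH_K)=U(X_K)-yH_KX_K$, so $F_T=0$ pathwise. If moreover $\kappa_k=0$ for every $k$, then each factor $1-e^{-h\kappa_k}$ vanishes, so $B_T=0$. Therefore $Z^{\rm gap}\equiv0$, and the compensated summand is identically zero.

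\emph{CRN summand.} By \eqref{eq:crn-doob}, $\mathcal R^{\rm path}_{\rm crn}=Z^{\rm gap}-y\sum_k\Delta M_{k+1}$. Even at an exact pair, where $Z^{\rm gap}=0$, the CRN summand equals $-yM_K$ with $M_K:=\sum_k\Delta M_{k+1}$. Its increments are orthogonal, so
\[
\operatorname{Var}(yM_K)=y^2\sum_{k=0}^{K-1}\E\bigl[\operatorname{Var}_k(H_{k+1}X_{k+1})\bigr].
\]
To show this is generally nonzero, I will compute the one-step factor. By \eqref{eq:log-euler} and \eqref{eq:deflator-general}, $H_{k+1}X_{k+1}=H_kX_k\exp(a_k+(A_k^\top\pi_k-\theta_k)^\top\Delta W_k)$ with $a_k$ an $\mathcal F_{t_k}$-measurable drift. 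Its conditional variance is positive unless $A_k^\top\pi_k=\theta_k$ almost surely.

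The main obstacle is making ``generally retains'' precise. I plan to state it as a nondegeneracy condition: the martingale term is absent only in the degenerate case $A_k^\top\pi_k=\theta_k$ at every date. For CRRA with $\gamma>1$ this typically fails, because $\pi_k\approx\theta_k/\gamma$ near optimality in the Merton case. I will illustrate this with the constant-coefficient case, where $\operatorname{Var}_k$ has the explicit lognormal form $(H_kX_k)^2e^{2a_kh}\bigl(e^{\norm{A_k^\top\pi_k-\theta_k}^2h}-1\bigr)$, strictly positive when $\gamma\neq1$.
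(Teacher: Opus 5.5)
Your argument is correct and matches the paper, which gives no separate proof and treats the corollary as immediate from $\operatorname{Var}(Z^{\rm gap})\le\E[(Z^{\rm gap})^2]$, the pathwise-zero clause of \Cref{thm:positive-gap}, and \eqref{eq:crn-doob}; your nondegeneracy computation for the CRN martingale goes beyond what the paper states. There are two small slips in that extra part, neither affecting the conclusion. First, the lognormal conditional variance is $(H_kX_k)^2e^{-2h\kappa_k}\bigl(e^{\norm{A_k^\top\pi_k-\theta_k}^2h}-1\bigr)$. Second, in the Merton case an exact pair forces $A_k^\top\pi_k=\theta_k/\gamma$ exactly (not $\pi_k\approx\theta_k/\gamma$), so positivity requires $\pi_k\neq0$, equivalently $\theta_k\neq0$, and not merely $\gamma\neq1$.
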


\begin{proposition}[Second-order calibration residual]
\label{prop:y-floor}
Fix $x>0$ and $H>0$, and define
\[
\Phi_{x,H}(y)
=U^\star(yH)+yHx-U(x).
\]
Its unique minimizer is $y^\star=U'(x)/H$, and
\[
\Phi_{x,H}(y^\star)=0,
\qquad
\Phi_{x,H}'(y^\star)=0,
\qquad
\Phi_{x,H}''(y^\star)
=\frac{H^2}{-U''(x)}>0.
\]
Hence an independent calibration estimate satisfying $\widehat y-y^\star=O_p(M_{\mathrm{cal}}^{-1/2})$ induces a terminal Fenchel residual of second order, with expectation $O(M_{\mathrm{cal}}^{-1})$ under the corresponding moment conditions.
\end{proposition}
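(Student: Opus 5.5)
The plan is to reduce everything to one-variable calculus in $y$, using the explicit conjugate \eqref{eq:conjugate}, and then to push the local quadratic behaviour through a second-order Taylor expansion. Throughout, $H>0$ and $x>0$ are fixed, and $\Phi_{x,H}$ is viewed as a smooth function of $y>0$.

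\emph{Step 1: derivatives.} Since $U^\star(z)=\frac1{\gamma-1}-\frac{\gamma}{\gamma-1}z^{(\gamma-1)/\gamma}$, direct differentiation gives
\[
(U^\star)'(z)=-z^{-1/\gamma}=-(U')^{-1}(z),
\qquad
(U^\star)''(z)=\tfrac1\gamma z^{-1/\gamma-1}>0 .
\]
Hence $\Phi_{x,H}'(y)=H\bigl(x-(yH)^{-1/\gamma}\bigr)$ and $\Phi_{x,H}''(y)=H^2(U^\star)''(yH)>0$. So $\Phi_{x,H}$ is strictly convex on $(0,\infty)$, and its unique stationary point solves $(yH)^{-1/\gamma}=x$. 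That is, $yH=x^{-\gamma}=U'(x)$, which gives $y^\star=U'(x)/H$ and $\Phi_{x,H}'(y^\star)=0$.

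\emph{Step 2: value and curvature at $y^\star$.} The Fenchel--Young equality case (the supremum in \eqref{eq:conjugate} is attained at the point whose marginal utility equals $z$) gives $U^\star(U'(x))=U(x)-U'(x)x$. Therefore $\Phi_{x,H}(y^\star)=0$. Nonnegativity of $\Phi_{x,H}$ is Fenchel's inequality, consistent with \eqref{eq:fenchel-res}. For the curvature I will use the inverse-function identity $(U^\star)''(U'(x))=1/(-U''(x))$, which follows from $(U^\star)'=-(U')^{-1}$. It can also be checked directly: $\frac1\gamma x^{1+\gamma}=1/(\gamma x^{-\gamma-1})$. This yields $\Phi_{x,H}''(y^\star)=H^2/(-U''(x))$.

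\emph{Step 3: calibration consequence.} By Taylor's theorem with Lagrange remainder,
\[
\Phi_{x,H}(\widehat y)=\tfrac12\Phi_{x,H}''(\tilde y)(\widehat y-y^\star)^2
\]
for some $\tilde y$ between $\widehat y$ and $y^\star$. For the $O_p$ statement, $\Phi_{x,H}''$ is continuous near $y^\star>0$ and $\widehat y\to y^\star$ in probability. Hence $\Phi_{x,H}''(\tilde y)=O_p(1)$, and the residual is $O_p(M_{\rm cal}^{-1})$.

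Applied pathwise with $(x,H)=(X_K,H_K)$ and a scalar calibration independent of the test paths, the expectation statement needs one more step. I will condition on the calibration sample and use the independence. The task is then to bound $\E[\Phi''(\tilde y)(\widehat y-y^\star)^2]$, where $\Phi''$ involves $H^2(\tilde yH)^{-1/\gamma-1}$.

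\emph{Main obstacle.} The expectation claim is the only delicate step, since $\Phi''$ blows up as $\tilde y\downarrow0$. I would first localize $\widehat y$ to an interval $[y^\star/2,2y^\star]$. On that interval $\Phi''$ is bounded by a constant times $H^{1-1/\gamma}$ times powers of $y^\star$. The complementary event is handled by Chebyshev-type bounds. This is exactly what "under the corresponding moment conditions" should encode: finite moments of $H_K^{1-1/\gamma}$ together with a fourth-moment bound $\E(\widehat y-y^\star)^4=O(M_{\rm cal}^{-2})$ and integrability of $\Phi$ on the tail event. Under these, the expectation is $O(M_{\rm cal}^{-1})$. I would state these assumptions explicitly rather than derive them.
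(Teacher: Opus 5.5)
Your proposal is correct and follows essentially the paper's route. The identity $(U^\star)'=-(U')^{-1}$ gives $\Phi_{x,H}'(y)=H[x-(U')^{-1}(yH)]$, whose unique zero is $y^\star=U'(x)/H$; the inverse-function rule gives $\Phi_{x,H}''(y^\star)=H^2/(-U''(x))$; and a second-order Taylor expansion about $y^\star$ yields the calibration claim. Your explicit CRRA check, the Fenchel--Young equality for $\Phi_{x,H}(y^\star)=0$, and the localization-plus-tail-moment argument only spell out what the paper compresses into ``uniform integrability of the remainder.''
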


The analytically calibrated benchmark can therefore exhibit more extreme variance collapse than a learned state-dependent dual. The learned-setting decomposition is reported with the state-dependent results, where terminal calibration and density mismatch are present in addition to dynamic complementarity.

\section{From value bounds to policy-distance and active-face certificates}
\label{sec:value-policy-face}

The previous section supplies the population residual and its directly observable estimator:
\begin{equation}
\label{eq:residual-chain}
\begin{gathered}
G(\pi)
\le
\mathcal R(y,\mu;\pi)
=
\E[Z^{\rm gap}],
\qquad
\widehat{\mathcal R}_{\rm comp}
=
\frac1N\sum_{\ell=1}^N Z_{\ell}^{\rm gap},
\\[-1mm]
\Pr\!\left(
\mathcal R(y,\mu;\pi)
\le
\overline{\mathcal R}_{1-\alpha}
\right)
\ge 1-\alpha.
\end{gathered}
\end{equation}
Thus, on the confidence event in \eqref{eq:residual-chain},
\[
G(\pi)
\le
\mathcal R(y,\mu;\pi)
=
\E[Z^{\rm gap}]
\le
\overline{\mathcal R}_{1-\alpha}.
\]
Here $\overline{\mathcal R}_{1-\alpha}$ denotes any valid one-sided $(1-\alpha)$ upper confidence bound for the population residual; when the confidence level is understood, we write simply $\overline{\mathcal R}$. This section first gives the deterministic value-to-policy geometry, then estimates the residual and curvature-weight means and variances to form an operational policy-distance confidence bound, and finally treats active-face certification as a separate paired-relaxation problem.

\subsection{Value-only policy geometry}
\label{sec:value-to-policy}

Let $\mathsf S$ denote the time--full-state space associated with $S_k=(X_k,Y_k)$ in \Cref{sec:setup}, and let $\rho^\pi$ be a finite on-policy measure on $\mathsf S$; in applications it combines the deployed visitation law, the grid weight $h$, and a positive continuation weight. For a positive-definite curvature field $\Curv(s)$, define
\begin{equation}
\label{eq:policy-norm}
\|u\|_{\rho,\Curv}^2
:=
\int u(s)^\top \Curv(s)u(s)\,\rho^\pi(ds),
\qquad
\|u\|_{\rho}^2
:=
\int \|u(s)\|^2\,\rho^\pi(ds),
\qquad
\mathcal W_\rho:=\rho^\pi(\mathsf S).
\end{equation}
The measure can depend on the deployed policy because the target is policy evaluation under its own visitation law. This is not an off-policy coverage norm.

\begin{assumption}[Quadratic value-to-policy growth]
\label{ass:quadratic-growth}
There is an optimal policy $\pi^\star$ and a positive-definite curvature field $\Curv$ such that the deployed policy $\pi$ satisfies
\begin{equation}
\label{eq:abstract-quadratic-growth}
G(\pi):=V^\star-J(\pi)
\ge
\frac12\|\pi-\pi^\star\|_{\rho,\Curv}^2.
\end{equation}
\end{assumption}

\begin{proposition}[Bellman-primitive sufficient condition for quadratic growth]
\label{prop:bellman-curvature}
For $k=0,\ldots,K-1$, define the optimal action-value (Bellman $Q$) objective
\[
Q_k^\star(s,a)
:=
\E\!\left[V_{k+1}^\star(S_{k+1})\mid S_k=s,\ \pi_k=a\right],
\qquad
V_k^\star(s)=\sup_{a\in\cK}Q_k^\star(s,a).
\]
Suppose $\cK$ is convex and, for $\rho^\pi$-almost every visited state, $Q_k^\star(s,\cdot)$ is twice differentiable on the segment joining $\pi_k(s)$ and an optimizer $\pi_k^\star(s)$. If a measurable field $\underline{\Curv}_k(s)\succ0$ satisfies
\begin{equation}
\label{eq:bellman-hessian-lower}
-\nabla_{aa}^2Q_k^\star(s,a)
\succeq
\underline{\Curv}_k(s)
\end{equation}
throughout that segment, then
\begin{equation}
\label{eq:bellman-primitive-growth}
V_0^\star-J(\pi)
\ge
\frac12\E_\pi\sum_{k=0}^{K-1}
\bigl(\pi_k-\pi_k^\star(S_k)\bigr)^\top
\underline{\Curv}_k(S_k)
\bigl(\pi_k-\pi_k^\star(S_k)\bigr).
\end{equation}
Hence \Cref{ass:quadratic-growth} holds with the deployed time--state occupancy measure and any computable lower field dominated by the Bellman curvature in \eqref{eq:bellman-hessian-lower}.
\end{proposition}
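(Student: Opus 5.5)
The plan is a performance-difference telescoping of the Bellman value along the deployed trajectory, followed by a per-state second-order expansion of $Q_k^\star(s,\cdot)$ around its constrained maximizer. Set $V_K^\star:=U$ on the terminal state, so that $J(\pi)=\E_\pi[V_K^\star(S_K)]$. Because $\pi_k$ is $\F_{t_k}$-measurable and the one-step transition depends only on $(S_k,\pi_k)$ through the declared simulator, we have $\E_\pi[V_{k+1}^\star(S_{k+1})\mid\F_{t_k}]=Q_k^\star(S_k,\pi_k)$. Telescoping then gives the exact identity
\[
V_0^\star(S_0)-J(\pi)=\E_\pi\sum_{k=0}^{K-1}\Bigl[V_k^\star(S_k)-Q_k^\star(S_k,\pi_k)\Bigr].
\]
Every summand is nonnegative, since $\pi_k\in\cK$ and $V_k^\star=\sup_{a\in\cK}Q_k^\star$. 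This step needs only integrability of $V_k^\star(S_k)$ under the deployed law, which I would state as a standing moment condition. Measurable selection of an optimizer $\pi_k^\star(s)$ is implicit in the hypothesis.

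Next I would bound each summand pointwise. Fix $k$ and a visited state $s$. Write $a^\star=\pi_k^\star(s)$, $a=\pi_k(s)$, and $\phi(t)=Q_k^\star(s,a^\star+t(a-a^\star))$ for $t\in[0,1]$. Since $\cK$ is convex, the segment lies in $\cK$, so $\phi(0)\ge\phi(t)$ for every $t$. This yields the first-order optimality condition $\phi'(0^+)=\nabla_aQ_k^\star(s,a^\star)^\top(a-a^\star)\le0$; this is the variational inequality for a constrained maximizer, and it holds even when $a^\star$ lies on a face of $\cK$. By hypothesis, $\phi''(t)=(a-a^\star)^\top\nabla^2_{aa}Q_k^\star(a-a^\star)\le-(a-a^\star)^\top\underline{\Curv}_k(s)(a-a^\star)$ on the segment. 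Taylor's theorem with integral remainder then gives
\[
V_k^\star(s)-Q_k^\star(s,a)=\phi(0)-\phi(1)=-\phi'(0)-\int_0^1(1-t)\phi''(t)\,dt\ge\tfrac12(a-a^\star)^\top\underline{\Curv}_k(s)(a-a^\star).
\]

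Substituting this bound into the telescoped identity and taking expectations gives \eqref{eq:bellman-primitive-growth}, with $V_0^\star=V^\star$ at the deterministic initial state. Defining $\rho^\pi(B):=\E_\pi\sum_k\mathbf 1\{(t_k,S_k)\in B\}$ and $\Curv(t_k,s)=\underline{\Curv}_k(s)$ identifies the right-hand side with $\tfrac12\|\pi-\pi^\star\|^2_{\rho,\Curv}$, which is \Cref{ass:quadratic-growth}. Any constant factor such as $h$ can be absorbed into either $\rho$ or $\Curv$.

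The step I expect to need the most care is the one-sided derivative at the boundary. Twice differentiability is assumed only along the segment, so $\phi'(0^+)$ should be read as a one-sided derivative. The integral form of Taylor's theorem requires $\phi'$ to be absolutely continuous, which follows from $\phi\in C^2$ on the segment, or can be argued on $[\epsilon,1]$ with $\epsilon\downarrow0$. A secondary issue is that the conditioning identity requires $\pi_k^\star$ to be a measurable feedback in $s$ alone. Measurability of $\pi_k^\star$ in $s$ is what is needed for the right-hand side to be well defined, and it can be taken from a standard measurable-selection argument.
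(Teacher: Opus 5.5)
Your proposal is correct and follows the paper's proof essentially line for line. Both use Bellman performance-difference telescoping along the deployed trajectory, a second-order bound on $Q_k^\star(s,\cdot)$ along the segment from the constrained maximizer, and the first-order variational inequality $\nabla_aQ_k^\star(s,a^\star)^\top(a-a^\star)\le0$ to drop the linear term. On the technical point you flag: the Lagrange form $\phi(1)=\phi(0)+\phi'(0)+\tfrac12\phi''(\xi)$ of Taylor's theorem needs only twice differentiability on the segment, so your $C^2$/absolute-continuity worry goes away.
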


\paragraph{How the lower curvature is computed.}
The proposition does not require recovery of the oracle Hessian. Let $\ell_k(s,a,Z)$ denote the one-step log-wealth increment induced by \eqref{eq:log-euler}, and consider the reduced Gaussian CRRA objective
\[
\mathcal Q_k(s,a;q)
=
\E\!\left[q(S_{k+1})e^{(1-\gamma)\ell_k(s,a,Z)}\mid S_k=s\right],
\]
where $q$ is the continuation coefficient. A covariance lower bound $\Sigma_k(s)\succeq\underline\sigma I$, a positive continuation bound $q_{k+1}^\star\ge\underline q_{k+1}$, and a primitive bound $\underline\eta_k(s)>0$---available from compactness or Gaussian coercivity---give
\begin{equation}
\label{eq:crra-primitive-curvature}
\nabla_{aa}^2\mathcal Q_k
\succeq
(\gamma-1)h\,\underline q_{k+1}\,
\underline\eta_k(s)\,\Sigma_k(s),
\qquad
\underline\eta_k(s)
:=
\inf_{a\in\cK}
\E\!\left[e^{(1-\gamma)\ell_k(s,a,Z)}\mid S_k=s\right].
\end{equation}
Conditional dual restarts provide $\underline q_{k+1}$, and nonnegative Gaussian square or covariance terms may sharpen this bound.

\paragraph{Operational scalar--geometry factorization.}
In \Cref{sec:policy-geometry,sec:predictable,sec:two-asset}, the certified lower curvature factors into a positive scalar weight $\omega_k(S_k)$ and an analytic covariance geometry $\Curv_{0,k}$. We absorb $h\omega_k$ into $\rho^\pi$ and set $\Curv=\Curv_0$; consequently, $\chi_v$ is computed from $\Curv_0^{-1}$, whereas $\mathcal W_\rho=\E_\pi\sum_k h\omega_k(S_k)$ is estimated from on-policy paths. The three geometries are $\gamma\Sigma$, $\sigma_S^2$, and $\Sigma$, respectively.

\begin{lemma}[Deterministic residual-to-policy transfer]
\label{lem:raw-policy-certificate}
Under \Cref{ass:quadratic-growth}, every valid scalar bound satisfying
\[
G(\pi)\le\mathcal R(y,\mu;\pi)\le\overline{\mathcal R}
\]
yields
\begin{equation}
\label{eq:raw-ellipsoid}
\pi^\star
\in
\mathfrak C_{\rho,\Curv}(\pi,\overline{\mathcal R})
:=
\left\{u:\|u-\pi\|_{\rho,\Curv}^2\le2\overline{\mathcal R}\right\}.
\end{equation}
Assume $\mathcal W_\rho>0$, write $\bar\rho^\pi:=\rho^\pi/\mathcal W_\rho$, and for a deterministic direction $v$ set
\begin{equation}
\label{eq:directional-curvature}
\begin{aligned}
\langle u,w\rangle_{\bar\rho,v}
&:=\int\bigl(v^\top u(s)\bigr)\bigl(v^\top w(s)\bigr)\,\bar\rho^\pi(ds),
&
\|u\|_{\bar\rho,v}
&:=\langle u,u\rangle_{\bar\rho,v}^{1/2}
=\|v^\top u\|_{L^2(\bar\rho^\pi)},\\
\|v\|_{\infty,\Curv^{-1}}
&:=\operatorname*{ess\,sup}_{s\sim\rho^\pi}
\sqrt{v^\top\Curv(s)^{-1}v},
&
\chi_v&:=\|v\|_{\infty,\Curv^{-1}}^2.
\end{aligned}
\end{equation}
The first pairing is a semi-inner product on vector-valued policies and the ordinary $L^2(\bar\rho^\pi)$ inner product on their projections; hence $d_{\bar\rho,v}(u,w):=\|u-w\|_{\bar\rho,v}$ is a pseudometric on policies and a metric on projected policies. Then
\begin{equation}
\label{eq:directional-radius}
\|\pi-\pi^\star\|_{\bar\rho,v}
\le
\frac{\|v\|_{\infty,\Curv^{-1}}}{\sqrt{\mathcal W_\rho}}
\|\pi-\pi^\star\|_{\rho,\Curv}
\le
\left(
\frac{2\chi_v\overline{\mathcal R}}{\mathcal W_\rho}
\right)^{1/2}.
\end{equation}
On the confidence event defining $\overline{\mathcal R}_{1-\alpha}$ in \eqref{eq:residual-chain}, the ellipsoid and every projected-policy radius hold simultaneously; hence each has confidence at least $1-\alpha$.
\end{lemma}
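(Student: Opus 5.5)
The plan has three parts: the ellipsoid, the directional radius, and the simultaneity claim. The ellipsoid is a one-line chaining argument. By \Cref{ass:quadratic-growth},
\[
\tfrac12\|\pi-\pi^\star\|_{\rho,\Curv}^2\le G(\pi)\le\mathcal R(y,\mu;\pi)\le\overline{\mathcal R}.
\]
Multiplying by two gives $\|\pi^\star-\pi\|_{\rho,\Curv}^2\le2\overline{\mathcal R}$, which is exactly membership of $\pi^\star$ in $\mathfrak C_{\rho,\Curv}(\pi,\overline{\mathcal R})$. Nothing beyond the assumption and weak duality is needed. Before doing so, I will check that the semi-inner product and pseudometric claims are immediate. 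The pairing $\langle u,w\rangle_{\bar\rho,v}$ is the $L^2(\bar\rho^\pi)$ inner product of the scalar functions $v^\top u$ and $v^\top w$. It is therefore bilinear, symmetric and positive semidefinite, and it vanishes only when $v^\top u=0$ $\bar\rho^\pi$-a.e. This gives a pseudometric on policies and a genuine metric on the projected classes.

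For \eqref{eq:directional-radius}, the key step is a pointwise Cauchy--Schwarz inequality in the $\Curv(s)$ geometry. Set $e(s):=\pi(s)-\pi^\star(s)$. Then
\[
|v^\top e(s)|=|(\Curv(s)^{-1/2}v)^\top(\Curv(s)^{1/2}e(s))|\le\sqrt{v^\top\Curv(s)^{-1}v}\,\sqrt{e(s)^\top\Curv(s)e(s)}.
\]
Squaring the second factor is bounded $\rho^\pi$-a.e.\ by $\|v\|_{\infty,\Curv^{-1}}^2=\chi_v$. I then integrate against $\bar\rho^\pi=\rho^\pi/\mathcal W_\rho$, which requires $\mathcal W_\rho>0$:
\[
\|e\|_{\bar\rho,v}^2=\frac1{\mathcal W_\rho}\int (v^\top e)^2\,d\rho^\pi\le\frac{\chi_v}{\mathcal W_\rho}\|e\|_{\rho,\Curv}^2.
\]
Taking square roots gives the first inequality. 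Inserting the ellipsoid bound $\|e\|_{\rho,\Curv}^2\le2\overline{\mathcal R}$ gives the second.

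The final claim follows because every conclusion above is a deterministic consequence of the single event $\{\mathcal R(y,\mu;\pi)\le\overline{\mathcal R}_{1-\alpha}\}$. Indeed, $G\le\mathcal R$ holds always by weak duality, and \Cref{ass:quadratic-growth} is non-random. So on that event the ellipsoid and every directional radius hold at once, for all deterministic $v$ simultaneously, with no multiplicity correction. Their joint probability is at least $\Pr(\mathcal R\le\overline{\mathcal R}_{1-\alpha})\ge1-\alpha$.

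The only delicate point is measure-theoretic. The essential supremum must be taken with respect to $\rho^\pi$, so that the pointwise bound holds off a $\rho^\pi$-null set. In addition, $\Curv(s)\succ0$ must hold for $\Curv^{-1}$ to make sense. If $\chi_v=\infty$, the bound is vacuous but still true. I expect no real obstacle here; the care lies only in stating the a.e.\ qualifications and in requiring $v$ to be fixed independently of the data.
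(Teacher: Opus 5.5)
Your proposal is correct and follows essentially the same route as the paper's proof. You chain \Cref{ass:quadratic-growth} with $G(\pi)\le\mathcal R(y,\mu;\pi)\le\overline{\mathcal R}$ to get the ellipsoid, then apply pointwise Cauchy--Schwarz in the $\Curv(s)$ metric and integrate against $\bar\rho^\pi$; your explicit simultaneity argument spells out what the paper leaves implicit. One small slip: it is the first factor, $v^\top\Curv(s)^{-1}v$, that is bounded $\rho^\pi$-a.e.\ by $\chi_v$, not the second. Also, because the directional bound is deterministic for every $v$ once the ellipsoid holds, it covers all directions at once on the event, so you do not actually need $v$ to be chosen independently of the data.
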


The lemma is constructive and limiting. If $\Curv(s)$ has a small eigenvalue in direction $v$, then the dual directional norm $\|v\|_{\infty,\Curv^{-1}}$---and hence $\chi_v$---is large even when the residual is tiny. Duality certifies the policy only to the extent that the objective identifies it.

\begin{proposition}[Sharpness of the square-root law]
\label{prop:sqrt-sharp}
For the quadratic local model
\[
\mathcal Q_{\Curv}(z)=\frac12 z^\top \Curv z,
\qquad \Curv\succ0,
\]
and any scalar value budget $\delta\ge0$, the maximum of $|v^\top z|$ over $\mathcal Q_{\Curv}(z)\le\delta$ is
\begin{equation}
\label{eq:sharp-radius}
\sqrt{2\delta\,v^\top \Curv^{-1}v}.
\end{equation}
Hence neither the exponent $1/2$ nor the inverse-curvature dependence in \eqref{eq:directional-radius} can be improved using scalar value information alone. As curvature tends to zero, every value-only directional policy interval must diverge.
\end{proposition}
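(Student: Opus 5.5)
The plan is to solve the finite-dimensional problem $\max\{v^\top z:\tfrac12 z^\top\Curv z\le\delta\}$ exactly by a Cauchy--Schwarz argument in the $\Curv$-geometry. Then I read off the sharpness statements from the fact that the maximizer is attained.

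\emph{Upper bound.} Because $\Curv\succ0$, it has an invertible symmetric square root $\Curv^{1/2}$. For any $z$,
\[
|v^\top z|=|(\Curv^{-1/2}v)^\top(\Curv^{1/2}z)|\le\sqrt{v^\top\Curv^{-1}v}\,\sqrt{z^\top\Curv z}\le\sqrt{2\delta\,v^\top\Curv^{-1}v}
\]
on the set $\mathcal Q_{\Curv}(z)\le\delta$.

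\emph{Attainment.} If $v=0$ or $\delta=0$, the bound is $0$ and is attained at $z=0$. Otherwise, take
\[
z^\ast=\sqrt{2\delta/(v^\top\Curv^{-1}v)}\;\Curv^{-1}v .
\]
Direct substitution gives $\tfrac12 z^{\ast\top}\Curv z^\ast=\delta$ and $v^\top z^\ast=\sqrt{2\delta\,v^\top\Curv^{-1}v}$. Hence the maximum equals \eqref{eq:sharp-radius}. This is also the equality case of Cauchy--Schwarz, with $\Curv^{1/2}z$ parallel to $\Curv^{-1/2}v$.

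\emph{Sharpness consequences.} I would argue that, for any procedure using only scalar value information, the loss $G=\delta$ of a perturbation $\pi^\star+z$ in this local model cannot distinguish $z^\ast$ from $0$, nor $-z^\ast$ from $0$. Any valid directional interval centered at the deployed action must therefore contain the whole range $[-\sqrt{2\delta v^\top\Curv^{-1}v},\sqrt{2\delta v^\top\Curv^{-1}v}]$ of possible offsets. Two consequences follow.
\begin{itemize}
\item Scaling $\delta$ shows that the radius is exactly proportional to $\delta^{1/2}$, so no exponent larger than $1/2$ is possible.
\item The exact factor $v^\top\Curv^{-1}v$ matches $\chi_v$ in \eqref{eq:directional-radius} when $\Curv$ is constant and $\rho^\pi$ is normalized.
\end{itemize}
For the divergence claim, I take $\Curv_\epsilon$ with an eigenvalue $\epsilon\to0$ whose eigenvector $e$ satisfies $v^\top e\neq0$. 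Then $v^\top\Curv_\epsilon^{-1}v\ge (v^\top e)^2/\epsilon\to\infty$, so the radius diverges for any fixed $\delta>0$.

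The main subtlety is not computational. It lies in stating precisely what ``cannot be improved using scalar value information alone'' means. I would formalize it as a minimax statement: every point on the boundary ellipsoid produces the same value loss $\delta$, so a certificate valid for all of them must cover all of them. The attained maximum $z^\ast$ then gives the lower bound on the radius.
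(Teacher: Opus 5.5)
Your proof is correct and reaches the same maximizer $z^\ast\propto\Curv^{-1}v$, but by a slightly different route from the paper's. The paper maximizes $v^\top z$ subject to $z^\top\Curv z\le2\delta$. It reads $z=\alpha\Curv^{-1}v$ off the Lagrange first-order condition, fixes $\alpha$ by making the constraint active, and flips the sign of $z$ to get the absolute value. You instead prove the upper bound for every feasible $z$ by Cauchy--Schwarz in the $\Curv$-geometry and then show that equality is attained.

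Your version has three advantages. It certifies global optimality directly, whereas the Lagrange argument tacitly relies on the stationary point being the maximum. That is true here because the program is convex with a compact feasible set, but the paper does not say so. It also handles $v=0$ or $\delta=0$, where the paper's formula for $\alpha$ divides by zero. Finally, it shows that the sharp radius is exactly the equality case of the same Cauchy--Schwarz step the paper uses to prove \eqref{eq:directional-radius}, which makes the sharpness claim transparent. The paper's route is shorter and derives the maximizer rather than guessing it.

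For the ``Hence'' clauses the paper gives no separate argument; your additions supply the missing formalization. The certificate receives only an upper bound $G\le\delta$, so every point of the sublevel set, in particular $\pm z^\ast$, is consistent with the data, and any valid value-only interval must cover both. Your divergence construction with $v^\top e\neq0$ is also the correct way to make ``curvature tends to zero'' precise. If $\Curv_\epsilon$ degenerates only along a fixed direction orthogonal to $v$, then $v^\top\Curv_\epsilon^{-1}v$ stays bounded and the $v$-interval need not diverge.
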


This sharpness explains the statistical demands of policy certification. A tenfold reduction in a policy-distance radius requires an approximately hundredfold reduction in the residual UCB. The Doob-compensated estimator in \Cref{sec:compensated-gap} is introduced precisely to estimate the small positive residual rather than subtract two noisy values.

\begin{remark}[Scope of the square-root sharpness result]
\label{rem:beyond-value}
\Cref{prop:sqrt-sharp} limits identification from a scalar residual alone. Continuation-informed policy improvement uses additional information and requires a separate implementation and audit.
\end{remark}

\begin{corollary}[Coordinate intervals and support asymmetry]
\label{cor:support-asymmetry}
For a constant policy vector and $v=e_i$, one has
$\|\pi-\pi^\star\|_{\bar\rho,e_i}=|\pi_i-\pi_i^\star|$, so \eqref{eq:directional-radius} gives the coordinate interval
\[
\pi_i^\star\in[\pi_i-r_i,\pi_i+r_i].
\]
If $\pi_i-r_i>0$, inclusion of coordinate $i$ is certified. A finite symmetric interval does not certify exact exclusion; it gives only an upper tolerance on the magnitude of an excluded coordinate unless additional multiplier or margin information is available.

For a state-dependent nonnegative policy, let $\mathcal A\subseteq\mathsf S$ be a measurable set on which $\pi_i(s)\ge\eta>0$. If
\[
\int_{\mathcal A}(\pi_i-\pi_i^\star)^2d\rho^\pi\le r_{i,\mathcal A}^2,
\]
then
\begin{equation}
\label{eq:support-failure-mass}
\rho^\pi\bigl(\mathcal A\cap\{\pi_i^\star=0\}\bigr)
\le
\frac{r_{i,\mathcal A}^2}{\eta^2}.
\end{equation}
Thus an occupancy-weighted certificate can bound the visitation mass on which a claimed inclusion fails, but it is not a pointwise support guarantee.
\end{corollary}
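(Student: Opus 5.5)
The statement to prove is \Cref{cor:support-asymmetry}. It has two parts, and I would handle them separately, since each is a short consequence of \Cref{lem:raw-policy-certificate}.

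\textbf{Constant-policy part.} Let $\pi$ and $\pi^\star$ be constant vectors and take $v=e_i$. Then $v^\top(\pi-\pi^\star)=\pi_i-\pi_i^\star$ is a constant function on $\mathsf S$. Because $\bar\rho^\pi$ is a probability measure (this is where $\mathcal W_\rho>0$ is used), its $L^2(\bar\rho^\pi)$ norm equals $|\pi_i-\pi_i^\star|$. This identifies the seminorm in \eqref{eq:directional-curvature}.

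Next, set $r_i:=(2\chi_{e_i}\overline{\mathcal R}/\mathcal W_\rho)^{1/2}$. Substituting this into \eqref{eq:directional-radius} gives $|\pi_i-\pi_i^\star|\le r_i$, which is the coordinate interval $\pi_i^\star\in[\pi_i-r_i,\pi_i+r_i]$, valid on the confidence event. If $\pi_i-r_i>0$, then $\pi_i^\star>0$, so inclusion of coordinate $i$ is certified.

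For the negative claim about exclusion, I would argue by example. Suppose $\pi_i=0$ and $r_i>0$. The interval $[-r_i,r_i]$ then contains both $0$ and small positive values. So the data are consistent with $\pi_i^\star=0$ and also with $\pi_i^\star\in(0,r_i]$, and exclusion cannot be certified. The only conclusion available is the tolerance $\pi_i^\star\le r_i$. To make this rigorous, I would appeal to the sharpness in \Cref{prop:sqrt-sharp}: in the quadratic model, both values are compatible with the same value budget.

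\textbf{State-dependent part.} This is a Chebyshev/Markov argument. On the set $E:=\mathcal A\cap\{\pi_i^\star=0\}$ we have $\pi_i(s)\ge\eta$ and $\pi_i^\star(s)=0$, hence
\[
(\pi_i-\pi_i^\star)^2(s)\ge\eta^2 .
\]
Since the integrand is nonnegative, I would then chain
\[
\eta^2\rho^\pi(E)\le\int_E(\pi_i-\pi_i^\star)^2\,d\rho^\pi\le\int_{\mathcal A}(\pi_i-\pi_i^\star)^2\,d\rho^\pi\le r_{i,\mathcal A}^2 ,
\]
and divide by $\eta^2$ to obtain \eqref{eq:support-failure-mass}.

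Measurability of $E$ needs a brief remark: $\pi_i^\star$ is measurable because it is a policy, and $\mathcal A$ is measurable by assumption.

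I would also note how the hypothesis $\int_{\mathcal A}(\pi_i-\pi_i^\star)^2\,d\rho^\pi\le r_{i,\mathcal A}^2$ is obtained in practice. From \eqref{eq:raw-ellipsoid} together with $\Curv\succeq\lambda_{\min}I$, or equivalently from \eqref{eq:directional-radius} scaled by $\mathcal W_\rho$, one gets the valid choice $r_{i,\mathcal A}^2=2\chi_{e_i}\overline{\mathcal R}$. This follows because $\int(e_i^\top u)^2\,d\rho^\pi\le\chi_{e_i}\|u\|_{\rho,\Curv}^2$, which is Cauchy--Schwarz in the $\Curv$ inner product.

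The closing remark, that the certificate is not pointwise, also follows by example. Modifying $\pi^\star$ on a $\rho^\pi$-null set, or on a set of small mass, leaves the integral bound intact.

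\textbf{Main obstacle.} None of these steps is substantive. The point that needs care is stating precisely what \emph{does not} certify exclusion. It has to be framed as an inability claim, shown through the compatibility example, rather than proved as a bound.
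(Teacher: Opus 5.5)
Your proposal is correct and follows the paper's proof: the seminorm reduces to $|\pi_i-\pi_i^\star|$ for constant policies, and the mass bound is the same Markov-type chain $\eta^2\rho^\pi(\mathcal A\cap\{\pi_i^\star=0\})\le\int_{\mathcal A}(\pi_i-\pi_i^\star)^2\,d\rho^\pi\le r_{i,\mathcal A}^2$. Your extra remarks (the compatibility example for exclusion, and $r_{i,\mathcal A}^2=2\chi_{e_i}\overline{\mathcal R}$ via Cauchy--Schwarz) go beyond what the paper proves but are sound, except that the $\lambda_{\min}$ route is not equivalent: it gives the weaker constant $2\overline{\mathcal R}/\lambda_{\min}$.
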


\subsection{Statistical policy-distance certification}
\label{sec:statistical-layers}

The deterministic geometry in \Cref{sec:value-to-policy} becomes operational once the primal--dual residual is bounded from above and a model-specific lower curvature-weighted mass is bounded from below. This subsection makes that step explicit by separating their sample means and variances from the subsequent confidence construction.

Let $Z_1^{\rm gap},\ldots,Z_{N_{\mathcal R}}^{\rm gap}$ be independent copies of the compensated observation in \eqref{eq:pathwise-gap-residual}, with
\[
\E[Z_i^{\rm gap}]=\mathcal R(y,\mu;\pi).
\]
\phantomsection\label{def:weight-observation}
For the factorization above, $\mathcal W_\rho=\E_\pi\sum_k h\omega_k(S_k)$. Once the lower-weight field is fixed, a positive observation may be the full-path sum $W_i=\sum_k h\omega_k(S_k^{(i)})$ or the random-time form $W_i=h\omega_{K_i}(S_{K_i}^{(i)})/p_{K_i}$, where $\Pr(K_i=k)=p_k>0$; both have mean $\mathcal W_\rho$, and uniform dates give $W_i=T\omega_{K_i}(S_{K_i}^{(i)})$. Sections~\ref{sec:predictable}--\ref{sec:two-asset} use stratified random-time states whose local lower weights come from independent conditional-dual restarts. Let $W_1,\ldots,W_{N_W}$ denote these independent positive observations. Define the sample means and variances
\begin{align}
\label{eq:sample-gap-weight-means}
\widehat{\mathcal R}_{\rm comp}
&:=\frac1{N_{\mathcal R}}\sum_{i=1}^{N_{\mathcal R}}Z_i^{\rm gap},
&
\widehat{\mathcal W}
&:=\frac1{N_W}\sum_{i=1}^{N_W}W_i,\\
\label{eq:sample-gap-weight-variances}
\widehat\sigma_{\rm gap}^2
&:=\frac1{N_{\mathcal R}-1}
\sum_{i=1}^{N_{\mathcal R}}
\left(Z_i^{\rm gap}-\widehat{\mathcal R}_{\rm comp}\right)^2,
&
\widehat\sigma_W^2
&:=\frac1{N_W-1}
\sum_{i=1}^{N_W}
\left(W_i-\widehat{\mathcal W}\right)^2.
\end{align}

Under the usual finite-variance central-limit conditions, normal one-sided bounds are
\begin{align}
\label{eq:normal-gap-ucb}
\overline{\mathcal R}_{\rm N}
&:=
\widehat{\mathcal R}_{\rm comp}
+z_{1-\alpha_{\mathcal R}}
\frac{\widehat\sigma_{\rm gap}}{\sqrt{N_{\mathcal R}}},\\
\label{eq:normal-weight-lcb}
\underline{\mathcal W}_{\rm N}
&:=
\widehat{\mathcal W}
-z_{1-\alpha_W}
\frac{\widehat\sigma_W}{\sqrt{N_W}},
\end{align}
where $z_q$ is the $q$-quantile of the standard normal law. By one-sided coverage and a union bound,
\begin{equation}
\label{eq:normal-joint-policy-event}
\Pr\!\left(
\mathcal R(y,\mu;\pi)\le\overline{\mathcal R}_{\rm N},
\ \mathcal W_\rho\ge\underline{\mathcal W}_{\rm N}
\right)
\ge
1-\alpha_{\mathcal R}-\alpha_W+o(1).
\end{equation}
Whenever $\underline{\mathcal W}_{\rm N}>0$, the event in \eqref{eq:normal-joint-policy-event} gives
\begin{equation}
\label{eq:normal-policy-distance-certificate}
\pi^\star
\in
\mathfrak C_{\rho,\Curv}(\pi,\overline{\mathcal R}_{\rm N}),
\qquad
\|\pi-\pi^\star\|_{\bar\rho,v}
\le
\left(
\frac{2\chi_v\overline{\mathcal R}_{\rm N}}
{\underline{\mathcal W}_{\rm N}}
\right)^{1/2}.
\end{equation}
Thus \eqref{eq:sample-gap-weight-means}--\eqref{eq:normal-policy-distance-certificate} are the direct statistical implementation of the deterministic transfer in \eqref{eq:raw-ellipsoid}--\eqref{eq:directional-radius}.

Distribution-free Cantelli bounds and their resolution power are analyzed jointly with face certification in \Cref{thm:resolution-complexity}.

\subsection{Active-face certification by paired relaxation}
\label{sec:active-face}

A symmetric policy radius cannot certify exact exclusion. Polyhedral sensitivity can. Fix a restart time--state point $s=(t_k,x,z)\in\mathsf S$, where $z$ collects the auxiliary observable state coordinates and is absent in the wealth-only case, and fix a constraint row $j$. Write $\E_s[\cdot]$ for expectation under the declared grid restarted from $s$, with the restarted deflator normalized by $H_k=1$. Let $J_s(\pi)$ and $D_s(y,\mu)$ denote the corresponding conditional primal and dual values. Let $\mathcal A_s(0)$ be the original set of continuation policies, and let $\mathcal A_s^j(\varepsilon)$ relax only the current action constraint from
\[
\Gamma_j\pi_k\le g_j
\qquad\text{to}\qquad
\Gamma_j\pi_k\le g_j+\varepsilon,
\]
while leaving every later constraint unchanged. Write
\begin{equation}
\label{eq:relaxed-state-value}
V_s^j(\varepsilon)
=
\sup_{\pi\in\mathcal A_s^j(\varepsilon)}J_s(\pi),
\qquad \varepsilon\ge0.
\end{equation}
The relaxation size $\varepsilon$ is a predeclared design choice, not an estimated parameter; a finite grid may be tested under a simultaneous confidence correction. In the concave portfolio problems considered here, $V_s^j$ is concave and nondecreasing. \phantomsection\label{def:active-multiplier}Let $\lambda_{j,s}^\star$ denote an optimal current-action multiplier, equivalently a supergradient of $V_s^j$ at zero under the usual constraint qualification \citep{BonnansShapiro2000}.

\begin{proposition}[Finite-relaxation multiplier bound]
\label{prop:multiplier-secant}
For every $\varepsilon>0$,
\begin{equation}
\label{eq:multiplier-secant}
0\le
\frac{V_s^j(\varepsilon)-V_s^j(0)}{\varepsilon}
\le
\lambda_{j,s}^\star.
\end{equation}
\end{proposition}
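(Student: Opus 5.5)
The plan is to treat this as a statement about a concave, nondecreasing function of one variable together with the definition of $\lambda_{j,s}^\star$ as a supergradient at zero. There are two inequalities: the lower bound $0\le[V_s^j(\varepsilon)-V_s^j(0)]/\varepsilon$ and the upper bound by $\lambda_{j,s}^\star$. Both follow from properties the paper has already asserted for the relaxed value function in the concave portfolio problems considered.

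\emph{Lower bound.} Monotonicity is direct and does not need concavity. Relaxing only the current row from $\Gamma_j\pi_k\le g_j$ to $\Gamma_j\pi_k\le g_j+\varepsilon$ enlarges the admissible set, so $\mathcal A_s(0)=\mathcal A_s^j(0)\subseteq\mathcal A_s^j(\varepsilon)$. Taking suprema of $J_s$ over the nested sets gives $V_s^j(\varepsilon)\ge V_s^j(0)$. Dividing by $\varepsilon>0$ yields the first inequality. One should also check that $V_s^j(0)$ is finite, so that the difference quotient makes sense. It is bounded above by the restarted dual value $D_s(y,\mu)$, via the conditional version of \Cref{thm:polyhedral-budget}, and it is bounded below by any feasible policy. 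For $\varepsilon>0$, finiteness of $V_s^j(\varepsilon)$ follows the same way: relaxing $g_j$ to $g_j+\varepsilon$ is still a polyhedral constraint, so the same weak duality applies at the current date.

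\emph{Upper bound.} By definition, $\lambda_{j,s}^\star$ is a supergradient of $V_s^j$ at $0$. Concretely, $V_s^j(\varepsilon')\le V_s^j(0)+\lambda_{j,s}^\star\varepsilon'$ for all $\varepsilon'$ in the domain. Specializing to $\varepsilon'=\varepsilon$ and dividing by $\varepsilon$ gives the secant bound immediately. If one prefers to start from the multiplier interpretation, argue through Lagrangian duality for the current-action problem. Write $V_s^j(\varepsilon)=\sup_a\{Q_k^\star(s,a):\Gamma a\le g+\varepsilon e_j\}$, where the continuation value already encodes the unchanged later constraints. With an optimal KKT multiplier vector $\lambda\ge0$ for the unrelaxed problem, any $a$ feasible for the $\varepsilon$-relaxed problem satisfies
\[
Q_k^\star(s,a)\le Q_k^\star(s,a)+\lambda^\top(g+\varepsilon e_j-\Gamma a)\le V_s^j(0)+\lambda_j\varepsilon .
\]
The second inequality uses that the Lagrangian with optimal multiplier is maximized at value $V_s^j(0)$; this is strong duality, valid under the stated constraint qualification (Slater holds for polyhedral constraints with concave $Q_k^\star$). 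Taking the supremum over $a$ gives the bound with $\lambda_{j,s}^\star=\lambda_j$.

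\emph{Main obstacle.} The only nontrivial point is justifying that the optimal multiplier really is a supergradient of the perturbation function. That requires strong duality and concavity of $Q_k^\star(s,\cdot)$, which the paper takes from the concave CRRA structure and cites from \citet{BonnansShapiro2000}. Once this is granted, the chord of a concave function from $0$ lies below the tangent given by any supergradient, and the proof is one line. I would state explicitly that concavity is used only there. Monotonicity, and hence nonnegativity, holds unconditionally.
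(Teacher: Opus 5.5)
Your proof is correct and matches the paper's argument: the lower bound follows from monotonicity of $V_s^j$ via the nested admissible sets, and the upper bound is the supergradient inequality $V_s^j(\varepsilon)-V_s^j(0)\le\varepsilon\lambda_{j,s}^\star$ divided by $\varepsilon$. Your Lagrangian detour only unpacks why the optimal current-action multiplier is a supergradient, which the paper takes as given under the usual constraint qualification; for affine constraints with $Q_k^\star(s,\cdot)$ finite on all of $\R^n$, no strictly feasible Slater point is actually needed.
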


Fix before testing an exactly feasible conditional dual $(y,\mu)$ for the original problem, a deployed original-feasible policy $\pi$, and any relaxed-feasible policy $\pi^{j,\varepsilon}$. Let
\begin{equation}
\label{eq:conditional-gap-observation}
Z_s^{\rm gap}
:=F_{T\mid s}(y)+B_{T\mid s}
\end{equation}
be the conditional counterpart of \eqref{eq:pathwise-gap-residual}, and define the paired utility difference
\[
\Delta_{j,s,\varepsilon}
=
U(X_K^{j,\varepsilon})-U(X_K^\pi).
\]
Couple the relaxed and base paths by common innovations.

\begin{proposition}[Paired compensated sensitivity identity]
\label{prop:paired-face-identity}
The directly paired face variable
\begin{equation}
\label{eq:paired-sensitivity-variable}
Z_{j,s,\varepsilon}^{\rm face}
:=
\Delta_{j,s,\varepsilon}-Z_s^{\rm gap}
\end{equation}
satisfies
\begin{equation}
\label{eq:paired-sensitivity-expectation}
m_{j,s,\varepsilon}
:=
\E_s[Z_{j,s,\varepsilon}^{\rm face}]
=
J_s(\pi^{j,\varepsilon})-D_s(y,\mu).
\end{equation}
Equivalently,
\begin{equation}
\label{eq:resolving-power-identity}
m_{j,s,\varepsilon}
=
\underbrace{J_s(\pi^{j,\varepsilon})-J_s(\pi)}_{\text{relaxation gain}}
-
\underbrace{\bigl(D_s(y,\mu)-J_s(\pi)\bigr)}_{\text{base primal--dual residual}}.
\end{equation}
\end{proposition}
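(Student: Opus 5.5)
The identity is linearity of expectation combined with the conditional version of \Cref{thm:positive-gap}. By definition, $\E_s[Z^{\rm face}_{j,s,\varepsilon}]=\E_s[\Delta_{j,s,\varepsilon}]-\E_s[Z_s^{\rm gap}]$, provided each term is integrable. Because the relaxed and base paths are built from the same innovations on one probability space, the first term splits as $\E_s[U(X_K^{j,\varepsilon})]-\E_s[U(X_K^\pi)]=J_s(\pi^{j,\varepsilon})-J_s(\pi)$. Coupling changes only the joint law, not the marginals, so each expectation equals the corresponding conditional primal value. This is the relaxation gain in \eqref{eq:resolving-power-identity}.

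For the second term, I would rerun the proof of \Cref{thm:positive-gap} from the restart point $s$ with $H_k=1$ and initial wealth $x$. \Cref{thm:polyhedral-budget} is conditional and holds one step at a time, so \eqref{eq:general-conditional-budget} holds for every date $k'\ge k$ under $\E_s$. The base policy $\pi$ is original-feasible and $\mu\ge0$ is an exactly feasible dual for the original problem, so each $\kappa_{k'}\ge0$. Telescoping then gives $x-\E_s[H_KX_K^\pi]=\E_s\sum_{k'\ge k}H_{k'}X_{k'}(1-e^{-h\kappa_{k'}})$. Add the terminal Fenchel term and use $D_s(y,\mu)=\E_s[U^\star(yH_K)]+yx$. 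This yields $\E_s[Z_s^{\rm gap}]=\E_s[U^\star(yH_K)]+y\E_s[H_KX_K^\pi]-J_s(\pi)+y(x-\E_s[H_KX_K^\pi])=D_s(y,\mu)-J_s(\pi)$, which is the base residual. Subtracting the two pieces gives both \eqref{eq:paired-sensitivity-expectation} and \eqref{eq:resolving-power-identity}.

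The relaxed policy $\pi^{j,\varepsilon}$ appears only through its utility, so the identity needs no claim that the original dual is feasible for the relaxed problem. This matters because the relaxed action could make $\kappa_k$ negative.

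The only real obstacle is integrability. We need $U(X_K^{j,\varepsilon})$, $U(X_K^\pi)$, $U^\star(yH_K)$ and $H_KX_K^\pi$ to be $\E_s$-integrable so that linearity and the telescoping are legitimate. For $\gamma>1$, $U$ is bounded above and $U^\star$ is bounded above, so each expectation is well defined in $[-\infty,\infty)$. I would therefore assume, as the surrounding results implicitly do, that $J_s(\pi)$, $J_s(\pi^{j,\varepsilon})$ and $D_s(y,\mu)$ are finite. Under that assumption every term is integrable. The budget term is nonnegative with expectation at most $x$ by \eqref{eq:general-budget-telescope}, so it is integrable as well.
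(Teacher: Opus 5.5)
Your proof is correct and takes the paper's route. Linearity splits the paired mean into the relaxation gain $J_s(\pi^{j,\varepsilon})-J_s(\pi)$ and $\E_s[Z_s^{\rm gap}]$, and the conditional version of \Cref{thm:positive-gap} identifies the latter with $D_s(y,\mu)-J_s(\pi)$. Your restarted budget telescoping and integrability bookkeeping (finite $J_s$ and $D_s$, and $\E_s[H_KX_K^\pi]\le x$) spell out what the paper's two-line argument leaves implicit.
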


For $N_{\rm face}$ independent coupled continuations, define the empirical paired mean and variance
\begin{align}
\label{eq:face-sample-mean}
\widehat m_{j,s,\varepsilon}
&:=
\frac1{N_{\rm face}}
\sum_{\ell=1}^{N_{\rm face}}
Z_{j,s,\varepsilon}^{\rm face,(\ell)},\\
\label{eq:face-sample-variance}
\widehat\sigma_{j,s,\varepsilon}^2
&:=
\frac1{N_{\rm face}-1}
\sum_{\ell=1}^{N_{\rm face}}
\left(
Z_{j,s,\varepsilon}^{\rm face,(\ell)}
-\widehat m_{j,s,\varepsilon}
\right)^2.
\end{align}
Let $\mathcal I$ be the predeclared finite family of tested state--row--relaxation triples, and let $c_{1-\alpha}$ be a simultaneous one-sided critical value. The reported Bonferroni construction takes $c_{1-\alpha}=z_{1-\alpha/|\mathcal I|}$. The corresponding lower confidence bound is
\begin{equation}
\label{eq:face-normal-lcb}
L_{j,s}(\varepsilon)
:=
\widehat m_{j,s,\varepsilon}
-c_{1-\alpha}
\frac{\widehat\sigma_{j,s,\varepsilon}}
{\sqrt{N_{\rm face}}}.
\end{equation}
Under the same finite-variance central-limit conditions, these bounds are simultaneously lower-valid over $\mathcal I$ with probability at least $1-\alpha+o(1)$. Any finite-sample or robust simultaneous lower bound may be used in place of \eqref{eq:face-normal-lcb}.

\begin{corollary}[Operational active-face certificate]
\label{cor:operational-active-face}
Let $L_{j,s}(\varepsilon)$ be any simultaneous one-sided lower confidence bound for $m_{j,s,\varepsilon}$. Then
\begin{equation}
\label{eq:operational-multiplier-lb}
\underline\lambda_{j,s}(\varepsilon)
:=
\frac{[L_{j,s}(\varepsilon)]_+}{\varepsilon}
\le
\lambda_{j,s}^\star.
\end{equation}
If $\underline\lambda_{j,s}(\varepsilon)>0$, complementary slackness certifies
\begin{equation}
\label{eq:certified-active-face}
\Gamma_j\pi_k^\star(s)=g_j.
\end{equation}
For the no-short row $-\pi_i\le0$, this is the exact exclusion statement $\pi_{k,i}^\star(s)=0$.
\end{corollary}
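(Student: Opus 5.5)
The certificate follows by chaining the paired identity of \Cref{prop:paired-face-identity} with weak duality and the secant bound of \Cref{prop:multiplier-secant}; complementary slackness then converts a positive multiplier lower bound into activity of row $j$.

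\emph{Step 1: relate $m_{j,s,\varepsilon}$ to the secant.} By \eqref{eq:paired-sensitivity-expectation}, $m_{j,s,\varepsilon}=J_s(\pi^{j,\varepsilon})-D_s(y,\mu)$. The relaxed policy $\pi^{j,\varepsilon}$ lies in $\mathcal A_s^j(\varepsilon)$, so $J_s(\pi^{j,\varepsilon})\le V_s^j(\varepsilon)$. The dual $(y,\mu)$ is exactly feasible for the original conditional problem, so the conditional version of \Cref{thm:polyhedral-budget} gives $D_s(y,\mu)\ge V_s^j(0)$. Together these yield
\[
m_{j,s,\varepsilon}\le V_s^j(\varepsilon)-V_s^j(0).
\]
The conditional version of the theorem comes from restarting the grid at $s$ with $H_k=1$ and telescoping \eqref{eq:general-conditional-budget} from $k$ to $K$. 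Since $V_s^j$ is nondecreasing, the right-hand side is nonnegative, and therefore $[m_{j,s,\varepsilon}]_+\le V_s^j(\varepsilon)-V_s^j(0)$. Dividing by $\varepsilon>0$ and applying \eqref{eq:multiplier-secant} gives $[m_{j,s,\varepsilon}]_+/\varepsilon\le\lambda_{j,s}^\star$.

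\emph{Step 2: pass to the confidence bound.} On the simultaneous event where $L_{j,s}(\varepsilon)\le m_{j,s,\varepsilon}$ for every triple in $\mathcal I$, the monotonicity of $x\mapsto[x]_+$ gives $\underline\lambda_{j,s}(\varepsilon)\le[m_{j,s,\varepsilon}]_+/\varepsilon\le\lambda_{j,s}^\star$. This is \eqref{eq:operational-multiplier-lb}. Because the event is simultaneous, the inequality holds jointly over all tested triples with the stated confidence, so no further multiplicity correction is needed.

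\emph{Step 3: complementary slackness.} Suppose $\underline\lambda_{j,s}(\varepsilon)>0$. Then $\lambda_{j,s}^\star>0$, and the current-action KKT conditions at $s$ give $\lambda_{j,s}^\star(g_j-\Gamma_j\pi_k^\star(s))=0$. Hence $\Gamma_j\pi_k^\star(s)=g_j$. For the no-short row, $\Gamma_j=-e_i^\top$ and $g_j=0$, so this reads $\pi_{k,i}^\star(s)=0$.

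\emph{Main obstacle.} The delicate point is Step 3: complementary slackness must hold for the particular optimizer $\pi_k^\star(s)$ and the multiplier $\lambda_{j,s}^\star$ identified as a supergradient of $V_s^j$ at zero. I would handle it with the standard Lagrangian argument for the concave restarted problem. Under the constraint qualification cited with \Cref{prop:multiplier-secant}, any supergradient $\lambda$ of the perturbation function at $0$ is a Lagrange multiplier, meaning $V_s^j(0)=\sup_\pi\{J_s(\pi)+\lambda(g_j-\Gamma_j\pi_k)\}$ with the supremum taken over the unrelaxed later constraints. Evaluating at any optimizer $\pi^\star$ gives $J_s(\pi^\star)+\lambda(g_j-\Gamma_j\pi^\star_k)\le V_s^j(0)=J_s(\pi^\star)$. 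Since $\lambda\ge0$ and $g_j-\Gamma_j\pi^\star_k\ge0$, this forces $\lambda(g_j-\Gamma_j\pi^\star_k)=0$.

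A strictly positive lower bound therefore pins row $j$ as active at every optimizer at $s$, not just at one. The reason is that the lower bound $\lambda^\star_{j,s}>0$ transfers to every supergradient: by concavity of $V_s^j$, every supergradient at $0$ is at least the right derivative, which is at least the secant.
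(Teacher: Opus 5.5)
Your proof is correct and follows the paper's route. The paper chains $L_{j,s}(\varepsilon)\le m_{j,s,\varepsilon}=J_s(\pi^{j,\varepsilon})-D_s(y,\mu)\le V_s^j(\varepsilon)-V_s^j(0)\le\varepsilon\lambda_{j,s}^\star$ using conditional weak duality and the secant bound, takes positive parts, and invokes complementary slackness; your perturbation-function derivation of complementary slackness, valid at every optimizer, only makes explicit a step the paper asserts without proof.
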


\begin{theorem}[Finite-sample validity and resolution power]
\label{thm:resolution-complexity}
Assume
\[
\operatorname{Var}(Z^{\rm gap})\le v_{\rm gap},
\qquad
\operatorname{Var}(W)\le v_W,
\]
and fix $\alpha_{\mathcal R},\alpha_W\in(0,1/2)$. For a direction $v$ with $\chi_v>0$ and tolerance $\tau>0$, define
\begin{equation}
\label{eq:policy-resolution-margin}
a_{\tau,v}:=\frac{\tau^2}{2\chi_v},
\qquad
\Delta_{\tau,v}:=a_{\tau,v}\mathcal W_\rho-\mathcal R(y,\mu;\pi),
\end{equation}
and the Cantelli radii and reported bounds
\[
c_{\mathcal R}:=\sqrt{\frac{v_{\rm gap}(1-\alpha_{\mathcal R})}{N_{\mathcal R}\alpha_{\mathcal R}}},
\quad
c_W:=\sqrt{\frac{v_W(1-\alpha_W)}{N_W\alpha_W}},
\quad
U_{\mathcal R}:=\widehat{\mathcal R}_{\rm comp}+c_{\mathcal R},
\quad
L_W:=\widehat{\mathcal W}-c_W.
\]
Then:
\begin{enumerate}[label=(\roman*),leftmargin=1.6em]
\item \emph{Validity.} With probability at least $1-\alpha_{\mathcal R}-\alpha_W$, the implication
\begin{equation}
\label{eq:policy-valid-trigger}
L_W>0,\quad U_{\mathcal R}\le a_{\tau,v}L_W
\quad\Longrightarrow\quad
\|\pi-\pi^\star\|_{\bar\rho,v}\le\tau
\end{equation}
holds.
\item \emph{Resolution power.} If $\Delta_{\tau,v}>0$ and
\begin{equation}
\label{eq:policy-resolution-condition}
2c_{\mathcal R}+2a_{\tau,v}c_W<\Delta_{\tau,v},
\end{equation}
then the trigger in \eqref{eq:policy-valid-trigger} occurs, and its conclusion is valid, with probability at least $1-2\alpha_{\mathcal R}-2\alpha_W$. If $N_{\mathcal R}=N_W=N$, it is sufficient that
\begin{equation}
\label{eq:policy-resolution-sample-size}
N>
\frac{4}{\Delta_{\tau,v}^2}
\left(
\sqrt{\frac{v_{\rm gap}(1-\alpha_{\mathcal R})}{\alpha_{\mathcal R}}}
+a_{\tau,v}\sqrt{\frac{v_W(1-\alpha_W)}{\alpha_W}}
\right)^2.
\end{equation}
\end{enumerate}

For active faces, let $\mathcal I$ contain $M$ predeclared state--row--relaxation triples. If triple $i$ has paired mean $m_i>0$ and variance at most $v_i$, set $\beta=\alpha/(2M)$,
\[
c_i:=\sqrt{\frac{v_i(1-\beta)}{N_i\beta}},
\qquad
L_i:=\widehat m_i-c_i.
\]
If $m_i>2c_i$ for every target triple, equivalently
\begin{equation}
\label{eq:face-resolution-sample-size}
N_i>\frac{4v_i(1-\beta)}{\beta m_i^2},
\end{equation}
then, with probability at least $1-\alpha$, all corresponding lower bounds are simultaneously valid and positive, and all target faces are certified.
\end{theorem}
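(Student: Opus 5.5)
The plan is to rest everything on the one-sided Cantelli inequality $\Pr(\bar Y-\E\bar Y\ge c)\le \sigma^2/(\sigma^2+Nc^2)$ for a sample mean $\bar Y$ of $N$ independent copies with variance at most $\sigma^2$. Setting $\sigma^2/(\sigma^2+Nc^2)=\alpha$ gives exactly $c=\sqrt{\sigma^2(1-\alpha)/(N\alpha)}$. This is the form of $c_{\mathcal R}$, $c_W$ and $c_i$, with the variance bounds $v_{\rm gap}$, $v_W$, $v_i$ in place of $\sigma^2$ (Cantelli is monotone in the variance bound). The argument then applies this inequality in each direction and combines the resulting events by union bounds.

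\emph{Validity.} Let $E_{\mathcal R}=\{\mathcal R\le U_{\mathcal R}\}$ and $E_W=\{\mathcal W_\rho\ge L_W\}$. By Cantelli each fails with probability at most $\alpha_{\mathcal R}$ and $\alpha_W$ respectively, so $E_{\mathcal R}\cap E_W$ has probability at least $1-\alpha_{\mathcal R}-\alpha_W$. On this event, suppose the trigger holds. Then
\[
\mathcal R\le U_{\mathcal R}\le a_{\tau,v}L_W\le a_{\tau,v}\mathcal W_\rho,
\]
and $\mathcal W_\rho\ge L_W>0$. \Cref{lem:raw-policy-certificate}, applied with $\overline{\mathcal R}=\mathcal R$ (recall $G\le\mathcal R$ by weak duality), gives
\[
\|\pi-\pi^\star\|_{\bar\rho,v}^2\le 2\chi_v\mathcal R/\mathcal W_\rho\le 2\chi_v a_{\tau,v}=\tau^2.
\]

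\emph{Resolution power.} Apply Cantelli also in the opposite directions. The events $\widehat{\mathcal R}_{\rm comp}\le\mathcal R+c_{\mathcal R}$ and $\widehat{\mathcal W}\ge\mathcal W_\rho-c_W$ fail with probabilities at most $\alpha_{\mathcal R}$ and $\alpha_W$. Intersecting with the validity event costs at most $2\alpha_{\mathcal R}+2\alpha_W$ in total. On the intersection we have $U_{\mathcal R}\le\mathcal R+2c_{\mathcal R}$ and $L_W\ge\mathcal W_\rho-2c_W$. Therefore
\[
a_{\tau,v}L_W-U_{\mathcal R}\ge\Delta_{\tau,v}-2c_{\mathcal R}-2a_{\tau,v}c_W>0
\]
by \eqref{eq:policy-resolution-condition}. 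This also forces $L_W>0$, because $a_{\tau,v}L_W>U_{\mathcal R}\ge \widehat{\mathcal R}_{\rm comp}\ge0$, using $Z^{\rm gap}\ge0$ pathwise by \Cref{thm:positive-gap}. So the trigger fires, and the validity event gives the conclusion. For the sample-size statement, substitute $c_{\mathcal R}$ and $c_W$ with common $N$: the condition becomes $2N^{-1/2}(\cdot)<\Delta_{\tau,v}$, and squaring gives \eqref{eq:policy-resolution-sample-size}.

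\emph{Faces.} For each target triple use $\beta=\alpha/(2M)$ in both tails, giving $2M$ events each failing with probability at most $\beta$. Their intersection has probability at least $1-\alpha$. On it, $L_i\le m_i$ for all $i$, so the bounds are simultaneously valid. Also $L_i\ge m_i-2c_i>0$. The condition $m_i>2c_i$ is equivalent to \eqref{eq:face-resolution-sample-size} after squaring. \Cref{cor:operational-active-face} then gives $\underline\lambda>0$ and the face certificate, with \Cref{prop:paired-face-identity} and \Cref{prop:multiplier-secant} underlying it.

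The main obstacle is bookkeeping rather than analysis. One must check that the trigger in the resolution part is evaluated on the same statistics as in the validity part, so that validity and firing combine through a single union bound. One must also secure $L_W>0$, which needs the nonnegativity of the compensated estimator. The two-sided use of Cantelli is why the constants double to $2\alpha$ and $2c$.
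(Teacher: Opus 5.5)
Your proposal is correct and follows the paper's proof essentially step for step. It uses one-sided Cantelli bounds with a union bound for validity, adds the reverse tails for resolution power (four events, hence $1-2\alpha_{\mathcal R}-2\alpha_W$), and applies two-tailed Cantelli at level $\beta=\alpha/(2M)$ per triple for the face family. The only cosmetic difference is how $L_W>0$ is secured: you use the deterministic pathwise nonnegativity $\widehat{\mathcal R}_{\rm comp}\ge0$, whereas the paper uses $U_{\mathcal R}\ge\mathcal R\ge0$ on the validity event via weak duality, and both work.
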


The theorem separates correctness from detectability. One-sided tails make every reported policy certificate valid; the reverse tails determine whether the positive population margin $\Delta_{\tau,v}$ is detected. A face is detectable only when its paired mean is positive. Doob compensation reduces $v_{\rm gap}$ in \eqref{eq:policy-resolution-sample-size}, while common-path pairing reduces $v_i$ in \eqref{eq:face-resolution-sample-size}; both reduce the required path count.

\begin{remark}[Cantelli versus reported normal bounds]
The experiments use less conservative split-sample normal one-sided bounds, whereas \Cref{thm:resolution-complexity} records a distribution-free Cantelli route. The validity--power distinction is identical; only the critical radii and sample-size constants differ.
\end{remark}

\paragraph{Operational relevance of a certified face.}
For a no-short row, a positive certificate proves that the zero holding is optimal rather than an output clipping artifact. For a leverage row, it proves that the risk budget genuinely restricts the optimum. Such declarations support model validation, boundary-sensitive comparative statics, and auditable explanations of deployed decisions. A row that is not certified remains unresolved.

The resolving-power identity \eqref{eq:resolving-power-identity} shows exactly what must be overcome: the relaxation gain must exceed the base residual and the simultaneous one-sided sampling margin. A tighter primal--dual pair therefore resolves more active faces, all else equal. The directly paired variable is essential when the sensitivity is much smaller than the conditional primal and dual values; it avoids a separate confidence interval for a large conditional dual moment. A positive lower bound certifies the face, whereas a nonpositive bound leaves it unresolved.

The result is valid for any relaxed-feasible $\pi^{j,\varepsilon}$ fixed before the held-out test. This choice affects statistical power, not certificate validity. The model sections below state the boundary-exposing or relaxed-QP choices used in each experiment.

\paragraph{From certification theory to computation.}
Sections~\ref{sec:polyhedral}--\ref{sec:value-policy-face} require frozen feasible primal and dual objects but do not prescribe how those objects are produced. Analytical formulas, dynamic programming, regression, or neural approximation can therefore serve as candidate generators. Section~\ref{sec:policy-geometry} isolates the certificate geometry in a tractable Merton benchmark. Sections~\ref{sec:predictable} and~\ref{sec:two-asset} instantiate the same audit with state-dependent neural primal and dual approximations. In every study, the reference solution is queried only after the certificate has been computed.

\section{Operational policy-distance certificates in the Merton class}
\label{sec:policy-geometry}

The Merton class is the wealth-only, constant-coefficient special case of \Cref{sec:setup}, with no auxiliary state and $r_k=r$, $b_k=b$, and $A_k=A$. Unlike the state-dependent studies that follow, it is an analytical benchmark: its dual can be optimized directly and its optimum is available for external validation. It therefore isolates the geometry and statistical tightness of the residual-to-policy and multiplier transfers before introducing approximation error from state-dependent neural objects. This section specializes \Cref{lem:raw-policy-certificate} to the classical Merton portfolio structure \citep{Merton1969,Merton1971}, evaluated on the declared log-Euler grid, and constructs occupancy-weighted coordinate radii without using the analytical optimum in the certificate.

Define
\begin{equation}
\label{eq:f-def}
f(\pi)=b^\top\pi-\frac\gamma2\pi^\top\Sigma\pi
\end{equation}
and let
\begin{equation}
\label{eq:pi-star-general}
\pi^\star=\argmax_{\pi\in\cK}f(\pi).
\end{equation}
Let $\mu^\star\ge0$ satisfy
\begin{equation}
\label{eq:kkt-general}
b-\gamma\Sigma\pi^\star-\Gamma^\top\mu^\star=0,
\qquad
(\mu^\star)^\top(g-\Gamma\pi^\star)=0.
\end{equation}
The optimum is unique under $\Sigma\succ0$, and
\begin{equation}
\label{eq:ce-star-general}
\CE^\star
=x_0\exp([r+f(\pi^\star)]T).
\end{equation}
Set
\begin{equation}
\label{eq:value-coefficient}
a_k^\star
=
\exp\left((1-\gamma)[r+f(\pi^\star)](T-t_k)\right),
\qquad
w_k=a_k^\star(X_k^\pi)^{1-\gamma}.
\end{equation}

\begin{theorem}[Exact Merton performance difference]
\label{thm:performance}
For every admissible policy,
\begin{equation}
\label{eq:perf-exact}
V^\star-J(\pi)
=
\E\sum_{k=0}^{K-1}
\frac{w_k}{\gamma-1}
\left[
\exp((\gamma-1)h\Delta f_k)-1
\right],
\end{equation}
where
\begin{equation}
\label{eq:delta-decomp}
\Delta f_k
:=f(\pi^\star)-f(\pi_k)
=
\frac\gamma2(\pi_k-\pi^\star)^\top\Sigma(\pi_k-\pi^\star)
+(\mu^\star)^\top(g-\Gamma\pi_k).
\end{equation}
Consequently,
\begin{equation}
\label{eq:perf-lower}
V^\star-J(\pi)
\ge
\E\sum_{k=0}^{K-1}hw_k
\left[
\frac\gamma2(\pi_k-\pi^\star)^\top\Sigma(\pi_k-\pi^\star)
+(\mu^\star)^\top(g-\Gamma\pi_k)
\right].
\end{equation}
\end{theorem}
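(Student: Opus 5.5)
The plan is a discrete-time performance-difference argument built on an explicit same-grid value function. I would define the candidate value
\[
V_k(x):=\frac{a_k^\star x^{1-\gamma}-1}{1-\gamma},
\]
with $a_K^\star=1$, so that $V_K=U$. The first step is the one-step moment computation under the log-Euler step \eqref{eq:log-euler}. For any $\mathcal F_{t_k}$-measurable $\pi_k$, conditional Gaussianity of $\pi_k^\top A\Delta W_k$ gives
\[
\E_k\bigl[X_{k+1}^{1-\gamma}\bigr]
=X_k^{1-\gamma}\exp\Bigl((1-\gamma)h\bigl[r+b^\top\pi_k-\tfrac12\pi_k^\top\Sigma\pi_k\bigr]+\tfrac12(1-\gamma)^2h\,\pi_k^\top\Sigma\pi_k\Bigr).
\]
The quadratic terms combine to $-\tfrac{\gamma}{2}\pi_k^\top\Sigma\pi_k$, so this equals $X_k^{1-\gamma}e^{(1-\gamma)h(r+f(\pi_k))}$. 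Since $a_k^\star=a_{k+1}^\star e^{(1-\gamma)h(r+f(\pi^\star))}$, a short computation gives
\[
V_k(X_k)-\E_k[V_{k+1}(X_{k+1})]
=\frac{w_k}{1-\gamma}\bigl[1-e^{(1-\gamma)h(f(\pi_k)-f(\pi^\star))}\bigr]
=\frac{w_k}{\gamma-1}\bigl[e^{(\gamma-1)h\Delta f_k}-1\bigr].
\]

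Next I telescope. Writing $V_0(x_0)-\E[U(X_K)]=\E\sum_k\bigl(V_k(X_k)-\E_k[V_{k+1}(X_{k+1})]\bigr)$ requires integrability. I would argue that each conditional difference is nonnegative, because $\Delta f_k\ge0$ since $\pi^\star$ maximizes $f$ on $\cK$ and $\pi_k\in\cK$. Consequently the partial sums are controlled by monotone convergence/Tonelli, possibly after localizing via finite moments of $X_k^{1-\gamma}$, which hold for admissible policies with finite $J(\pi)$. This gives $V_0(x_0)-J(\pi)=$ the right-hand side of \eqref{eq:perf-exact}. It also shows $J(\pi)\le V_0(x_0)$ for every admissible $\pi$, with equality for the constant policy $\pi^\star$ (where $\Delta f_k\equiv0$). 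Hence $V^\star=V_0(x_0)$, the verification step, and \eqref{eq:perf-exact} follows.

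For \eqref{eq:delta-decomp} I use the exact second-order expansion of the quadratic $f$ around $\pi^\star$:
\[
f(\pi^\star)-f(\pi_k)=-\nabla f(\pi^\star)^\top(\pi_k-\pi^\star)+\tfrac\gamma2(\pi_k-\pi^\star)^\top\Sigma(\pi_k-\pi^\star),
\]
and substitute $\nabla f(\pi^\star)=b-\gamma\Sigma\pi^\star=\Gamma^\top\mu^\star$ from \eqref{eq:kkt-general}. The linear term then becomes $(\mu^\star)^\top(\Gamma\pi^\star-\Gamma\pi_k)$. Complementary slackness, $(\mu^\star)^\top\Gamma\pi^\star=(\mu^\star)^\top g$, turns this into $(\mu^\star)^\top(g-\Gamma\pi_k)$. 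Finally, \eqref{eq:perf-lower} follows from $e^{x}-1\ge x$ applied with $x=(\gamma-1)h\Delta f_k\ge0$, together with $w_k>0$.

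The only delicate point is the integrability/verification step: the policy may be path-dependent and unbounded within $\cK$. I would handle it by the nonnegativity of each conditional increment, so no cancellation or dominated-convergence argument is needed beyond Tonelli.
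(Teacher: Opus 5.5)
Your proposal is correct and follows essentially the same route as the paper's proof: the explicit value $V_k^\star(x)=(a_k^\star x^{1-\gamma}-1)/(1-\gamma)$, the one-step moment $\E_k[X_{k+1}^{1-\gamma}]=X_k^{1-\gamma}e^{(1-\gamma)h(r+f(\pi_k))}$, Bellman telescoping, the KKT expansion with complementary slackness, and $e^z-1\ge z$. You also show explicitly that $V_0(x_0)=V^\star$, using nonnegative increments and attainment by the constant policy $\pi^\star$, and you address integrability; both are details the paper leaves implicit, and the integrability remark is easy to make precise, since $\E_k[X_{k+1}^{1-\gamma}]\ge e^{(1-\gamma)h(r+f(\pi^\star))}X_k^{1-\gamma}$ shows that finiteness of $J(\pi)$ forces $\E[X_k^{1-\gamma}]<\infty$ for every $k$.
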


\subsection{Dual-derived lower continuation weight}
\label{subsec:merton-continuation-weight}

Let $\overline C\ge\CE^\star$ be any valid dual CE upper bound and set
\begin{equation}
\label{eq:lower-coefficient}
\overline r_{\rm CE}=\frac1T\log\frac{\overline C}{x_0},
\qquad
\underline a_k(\overline C)=\exp\left((1-\gamma)\overline r_{\rm CE}(T-t_k)\right).
\end{equation}
Since $\overline r_{\rm CE}\ge r+f(\pi^\star)$ and $1-\gamma<0$, we have $\underline a_k\le a_k^\star$. Define
\begin{equation}
\label{eq:lower-W}
\underline{\mathcal W}(\overline C)
=
\E\sum_{k=0}^{K-1}h\underline a_k(\overline C)(X_k^\pi)^{1-\gamma}.
\end{equation}
Then $\underline{\mathcal W}\le\mathcal W:=\E\sum_khw_k$, and $\underline{\mathcal W}$ is estimable from the deployed paths and dual CE. In the notation of Section~\ref{sec:value-policy-face}, \eqref{eq:perf-lower} sets $C_0=\gamma\Sigma$ and absorbs $h w_k$ into the on-policy measure, so the abstract integral is exactly the displayed expectation--sum, $W_\rho=\mathcal W$, and $\chi_v=\gamma^{-1}v^\top\Sigma^{-1}v$. The bar in $\bar\rho_{\rm M}^\pi$ denotes normalization; the operational denominator is a one-sided held-out LCB for $\underline{\mathcal W}(\overline C)$.

\begin{corollary}[Operational Merton coordinate radius]
\label{cor:operational-radius}
Let $\overline{\mathcal R}$ be any upper bound on $V^\star-J(\pi)$, and let $\underline{\mathcal W}>0$ be any lower bound of the form \eqref{eq:lower-W}. Then
\begin{equation}
\label{eq:coord-radius}
\|\pi-\pi^\star\|_{\bar\rho_{\rm M},e_i}
\le
\Rad_{{\rm M},i}
:=
\left[
\frac{2\overline{\mathcal R}}{\gamma\underline{\mathcal W}}
(\Sigma^{-1})_{ii}
\right]^{1/2}.
\end{equation}
If a one-sided held-out LCB for \eqref{eq:lower-W} and a one-sided UCB for the compensated residual hold jointly, the resulting radius is a high-probability operational certificate within the Merton class.
\end{corollary}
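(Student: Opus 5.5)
The plan is to obtain the corollary as a direct specialization of \Cref{lem:raw-policy-certificate}, with \Cref{thm:performance} supplying the curvature and the dual-derived continuation weight supplying the mass lower bound.

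\textbf{Step 1: identify the growth inequality.} Start from \eqref{eq:perf-lower}. Drop the nonnegative complementary-slackness term $(\mu^\star)^\top(g-\Gamma\pi_k)\ge0$; this is legitimate because $\pi_k\in\cK$ and $\mu^\star\ge0$. Then replace $w_k=a_k^\star(X_k^\pi)^{1-\gamma}$ by the smaller weight $\underline a_k(\overline C)(X_k^\pi)^{1-\gamma}$, which is valid since $\underline a_k\le a_k^\star$. With $\underline\rho(ds)$ the measure putting mass $h\underline a_k(X_k^\pi)^{1-\gamma}$ on the visited state, and with the same $\pi^\star$ as in the lemma, this gives
\[
V^\star-J(\pi)\ \ge\ \tfrac12\,\|\pi-\pi^\star\|^2_{\underline\rho,\gamma\Sigma}.
\]
Hence \Cref{ass:quadratic-growth} holds with $\Curv\equiv\gamma\Sigma$ and total mass $\mathcal W_{\underline\rho}=\underline{\mathcal W}(\overline C)$.

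\textbf{Step 2: compute the directional factor.} For a constant curvature field, $\chi_{e_i}=e_i^\top(\gamma\Sigma)^{-1}e_i=(\Sigma^{-1})_{ii}/\gamma$. Applying \eqref{eq:directional-radius} with the upper bound $\overline{\mathcal R}\ge G(\pi)$ then yields
\[
\|\pi-\pi^\star\|_{\bar\rho,e_i}\le\bigl(2\chi_{e_i}\overline{\mathcal R}/\underline{\mathcal W}\bigr)^{1/2}=\Rad_{{\rm M},i}.
\]
The lemma's directional seminorm normalizes by the total mass, so the normalized measure $\bar\rho_{\rm M}$ is exactly $\underline\rho/\underline{\mathcal W}$. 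This is the one point to check with care. If the paper instead intends $\bar\rho_{\rm M}$ to be normalized by the true $\mathcal W$, I will note that the bound is still valid when $\underline{\mathcal W}$ in the denominator is replaced by any lower bound: passing from $\underline\rho$ to the larger weight only shrinks the left side after renormalization, provided the weights are proportional. Here they are proportional state by state only up to the ratio $a_k^\star/\underline a_k$, which depends on $k$. So I will commit to defining $\bar\rho_{\rm M}$ via the lower weight, and use the lemma's normalization argument directly.

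\textbf{Step 3: replace $\underline{\mathcal W}$ by a smaller bound.} If only a lower bound $\underline{\mathcal W}'\le\underline{\mathcal W}$ is available, use the intermediate inequality in \eqref{eq:directional-radius}:
\[
\|\pi-\pi^\star\|^2_{\bar\rho,e_i}\le\chi_{e_i}\,\|\pi-\pi^\star\|^2_{\underline\rho,\Curv}/\underline{\mathcal W}\le 2\chi_{e_i}\overline{\mathcal R}/\underline{\mathcal W}'.
\]
The radius is monotone decreasing in the denominator, so it remains valid.

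\textbf{Step 4: the probabilistic statement.} On the joint event where the held-out LCB lies below $\underline{\mathcal W}(\overline C)$ and the compensated UCB lies above $\mathcal R\ge G$, the deterministic argument applies verbatim. This uses $G\le\mathcal R$ from \eqref{eq:residual-chain}. The coverage of that event is exactly the joint coverage, as in \eqref{eq:normal-joint-policy-event}. One further requirement is that $\overline C$ be a valid dual CE bound on the same event, for instance the \Cref{rem:dual-ucb} bound included in the union. The main obstacle is therefore bookkeeping rather than analysis: ensuring that the normalizing measure and the weight used in the LCB are the same lower-weight object.
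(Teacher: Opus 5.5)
Your ingredients match the paper's: drop the nonnegative complementary-slackness term in \eqref{eq:perf-lower}, use the constant curvature $\gamma\Sigma$ so that $\chi_{e_i}=(\Sigma^{-1})_{ii}/\gamma$ (equivalently $z_i^2\le(\Sigma^{-1})_{ii}z^\top\Sigma z$), and use the fact that the radius is monotone in the denominator. The gap is the choice you commit to in Steps~1--2.

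\textbf{What goes wrong.} You replace $w_k$ by $\underline a_k(X_k^\pi)^{1-\gamma}$ inside the growth inequality and normalize by $\underline{\mathcal W}$. That bounds the seminorm under $\underline\rho/\underline{\mathcal W}$, which is not the $\bar\rho_{\rm M}$ of the statement. In the paper, $\bar\rho_{\rm M}$ normalizes the measure carrying the \emph{true} continuation weight $hw_k=ha_k^\star(X_k^\pi)^{1-\gamma}$, whose total mass is $\mathcal W$. The text before the corollary says $hw_k$ is absorbed into the on-policy measure so that $\mathcal W_\rho=\mathcal W$. The later claim that swapping the analytical coefficient for the dual-derived one inflates the radius by at most $1.0000023$ only makes sense if the left-hand side is unchanged.

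The two seminorms are not comparable. The ratio $a_k^\star/\underline a_k=\exp\{(\gamma-1)(\overline r_{\rm CE}-r-f(\pi^\star))(T-t_k)\}$ varies with $k$ whenever $\overline C>\CE^\star$, so neither normalized seminorm dominates the other. Your inequality is valid, but it concerns a different, $\overline C$-dependent quantity and does not imply the stated one.

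\textbf{The fix.} The proportionality obstacle you raised comes only from that detour: you never need to transfer a bound from $\underline\rho$ to $\rho$. Keep the true weight in Step~1. Then \eqref{eq:perf-lower} gives Assumption~\ref{ass:quadratic-growth} directly with $\Curv=\gamma\Sigma$ and $\mathcal W_\rho=\mathcal W$, hence
\[
\E\sum_k hw_k(\pi_{k,i}-\pi_i^\star)^2
\le(\Sigma^{-1})_{ii}\,\E\sum_k hw_k(\pi_k-\pi^\star)^\top\Sigma(\pi_k-\pi^\star)
\le\frac{2\overline{\mathcal R}}{\gamma}(\Sigma^{-1})_{ii}.
\]
Divide by $\mathcal W$ and use $\mathcal W\ge\underline{\mathcal W}$. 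That inequality follows from $\underline a_k\le a_k^\star$, which in turn follows from $\overline r_{\rm CE}\ge r+f(\pi^\star)$ and $1-\gamma<0$. This yields \eqref{eq:coord-radius}, and it is the paper's proof: your own Step~3 mechanism applied to $\rho$ rather than $\underline\rho$. The lower coefficient $\underline a_k$ should enter only through the computable denominator, never through the geometry.

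With that change your Step~4 goes through as written. Your remark that $\overline C$ must be valid on the same event is correct, since that is exactly what guarantees $\underline{\mathcal W}(\overline C)\le\mathcal W$.
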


\begin{remark}[On-policy scope]
\label{rem:occupancy}
The projected-policy seminorm on the left side of \eqref{eq:coord-radius} is induced by the deployed policy's own visitation law and continuation weight. It is therefore an on-policy, occupancy-weighted certificate. It does not imply a uniform statewise bound in regions that the deployed policy rarely or never visits. \Cref{thm:restarted-operational-radius} constructs a restarted conditional curvature bound on that same on-policy support; a genuinely uniform statewise certificate still requires an explicit coverage distribution or a concentrability argument.
\end{remark}

At the final no-short checkpoint, the ratio of the dual-derived coefficient lower bound to the analytical coefficient is at least
\[
(\CE_{\mathrm{dual}}/\CE^\star)^{1-\gamma}
=0.9999954,
\]
so replacing the analytical coefficient by the dual-derived one inflates the deterministic part of the radius by at most a factor $1.0000023$ before the held-out weight-estimation margin.

\begin{figure}[!htbp]
\centering
\includegraphics[width=\textwidth]{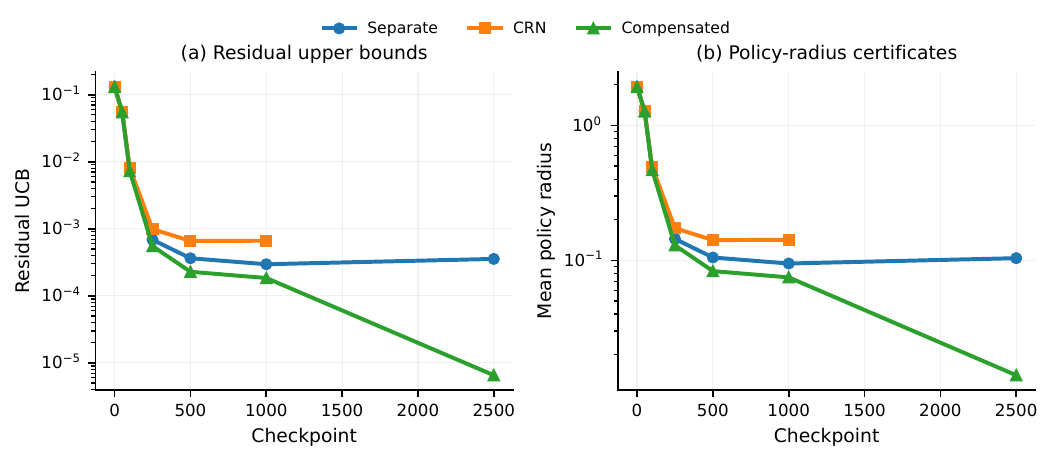}
\caption{Doob compensation makes the Merton residual and the induced policy radius operational across frozen-policy checkpoints. Nonpositive uncompensated CRN upper bounds are omitted on the log scale rather than clipped.}
\label{fig:merton-gap-radius}
\end{figure}

\Cref{fig:merton-gap-radius} shows final-checkpoint coverage of all five occupancy-weighted coordinate RMS errors. The mean radius is $0.01409$ under compensation versus $0.10385$ under separate evaluation; the actual coordinate RMS errors range from $8.4\times10^{-5}$ to $5.52\times10^{-3}$. Repeated audits at checkpoints $500$, $1000$, and $2500$ preserve coordinate coverage throughout and attain nominal $95\%$ residual-UCB coverage in $228/240$ cells. Full checkpoint, path-budget, and replication tables are in the Electronic Companion.

\subsection{Operational exclusion and leakage bounds}
\label{sec:merton-exclusion}

For the no-short Merton problem, relax the $i$th constraint to $\pi_i\ge-\varepsilon$. Let $\underline C_{i,\varepsilon}$ be a one-sided lower confidence bound for the relaxed policy's certainty equivalent, and let $\overline C_0$ be an original-problem dual CE upper bound. Concavity of the optimal growth rate gives the operational lower bound
\begin{equation}
\label{eq:merton-multiplier-lb}
\underline\nu_i(\varepsilon)
=
\frac{1}{T\varepsilon}
\left[\log\frac{\underline C_{i,\varepsilon}}{\overline C_0}\right]_+
\le \nu_i^\star.
\end{equation}
For a preselected relaxation level $\varepsilon_i^\star$, write $\underline\nu_i:=\underline\nu_i(\varepsilon_i^\star)$. Whenever $\underline\nu_i>0$, the optimal portfolio excludes asset $i$. Moreover, combining \eqref{eq:perf-lower} with a residual UCB $\overline{\mathcal R}$ and a continuation-weight LCB $\underline{\mathcal W}$ yields the operational leakage bound
\begin{equation}
\label{eq:merton-leakage-bound}
\frac{\E\sum_k h w_k\pi_{k,i}}{\mathcal W}
\le
\frac{\overline{\mathcal R}}{\underline\nu_i\,\underline{\mathcal W}}.
\end{equation}
As shown in \Cref{fig:merton-exclusion}, the solved benchmark validates both statements without the analytical multiplier. Among five assets, the certificate identifies exactly assets $2$ and $4$, the two analytically excluded assets, and produces no false exclusion. The certified multiplier lower bounds are $9.16\times10^{-4}$ and $1.04\times10^{-2}$, compared with analytical multipliers $2.73\times10^{-3}$ and $1.35\times10^{-2}$. The corresponding weighted-allocation bounds are $7.63\times10^{-3}$ and $6.70\times10^{-4}$, covering actual weighted allocations $4.50\times10^{-4}$ and $8.34\times10^{-5}$.

\begin{figure}[!htbp]
\centering
\includegraphics[width=\textwidth]{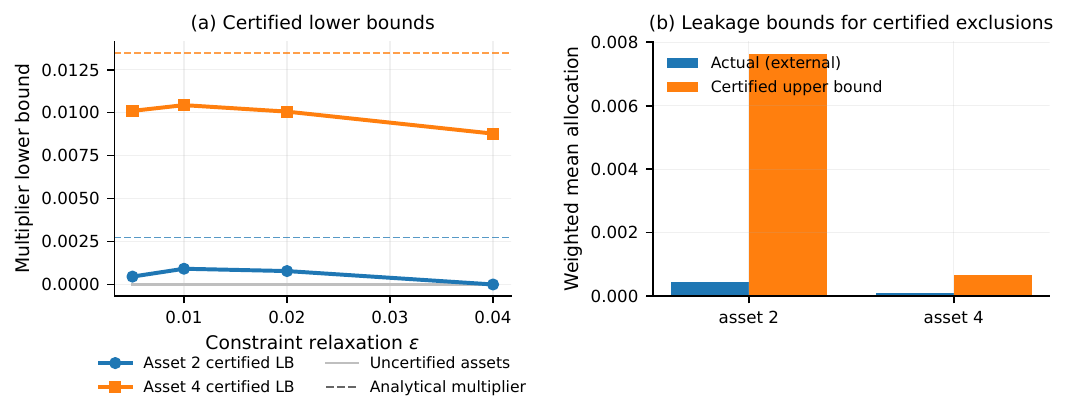}
\caption{Operational active-face certification in the solved five-asset benchmark. Certified multiplier lower bounds are shown against analytical multipliers, and the certified leakage bounds cover the weighted allocations of the two excluded assets. The analytical solution is used only for external validation.}
\label{fig:merton-exclusion}
\end{figure}

\subsection{High-dimensional stress test under an exact dual wrapper}
\label{sec:highdim-merton}
To separate dimensional scalability of the verification layer from the harder problem of learning an expressive state-dependent dual, we consider no-short Merton markets with a shared risky-weight cap at dimensions $d\in\{5,20,50\}$. A short, deliberately imperfect feasible neural policy is frozen; the exactly feasible dual is then obtained from the deterministic convex direct-dual program. Three pilot markets fixed the protocol, after which five new market seeds were run without retuning. The external primal QP is solved only after the residual and radius are fixed.

\begin{table}[!htbp]
\centering
\caption{Locked high-dimensional Merton replication under an exact dual wrapper.}
\label{tab:highdim-merton-main}
\small
\begin{tabular}{rcccc}
\toprule
$d$ & Radius coverage & Mean radius/error & Maximum radius/error & Mean inactive assets\\
\midrule
5  & $5/5$ & 1.005 & 1.008 & 1.0\\
20 & $5/5$ & 1.007 & 1.023 & 7.8\\
50 & $5/5$ & 1.005 & 1.008 & 32.2\\
\bottomrule
\end{tabular}
\end{table}

\Cref{tab:highdim-merton-main} shows $15/15$ radius coverage; the direct dual and external QP also match the no-short active set and cap status in every cell. The prespecified cellwise $97.5\%$ residual UCB covers $14/15$ external true gaps; the sole miss is $4.39\times10^{-9}$. A familywise Bonferroni recalibration, reported as a sensitivity analysis, covers all $15$ while raising the overall maximum radius-to-error ratio only from $1.023$ to $1.025$. Thus action dimension alone does not degrade the certificate when a tight feasible wrapper is available; the result does not establish tightness of a generic learned state-dependent dual. Seed-level results, the direct-dual program, and timing are in the Electronic Companion. A separate predictor-driven pilot at $d=20,50$ preserves coverage, conditional-dual coverage, and stable curvature-weight effective sample sizes, but its radius-to-error ratios rise to $28.5$ and $44.8$. Both the terminal Fenchel and dynamic complementarity components increase with dimension. This diagnostic locates the practical bottleneck in learning a tight state-dependent feasible dual, rather than in the restart curvature or residual-to-policy transfer itself.

\section{State-dependent operational study}
\label{sec:predictable}

The predictable-return model is the first state-dependent case. It supports the locked ten-seed replication, detailed restart-budget audits, and pointwise active-face certification at visited states. \Cref{sec:two-asset} then repeats the full operational chain in the smallest state-dependent multivariate market with multiple interacting faces.

\paragraph{Neural instantiation and freezing protocol.}
The state-dependent experiments implement the solver-neutral theory through the common pipeline
\[
\text{candidate generation}
\;\longrightarrow\;
\text{exact feasibility wrapper}
\;\longrightarrow\;
\text{held-out same-grid certification}.
\]
The primal network is trained first and frozen. Each state-dependent study then uses one vector-output dual network followed by the algebraic feasibility wrapper of Section~\ref{sec:polyhedral}. The one-asset output is $(\nu_\phi,\theta_{\perp,\phi},c_{\perp,\phi})$ with $d_\phi=\nu_\phi\ge0$; the two-asset output is $(d_{1,\phi},d_{2,\phi},\theta_{\perp,\phi},c_{\perp,\phi})$. In both cases, the remaining multipliers and traded market price of risk are reconstructed algebraically. After snapshot selection and scalar calibration, the primal and dual objects are frozen and evaluated on disjoint held-out paths. Exact parameterizations are summarized in the Electronic Companion.

\subsection{Same-grid model and external dynamic program}
\label{subsec:predictable-model-dp}

There is one risky asset and one Ornstein--Uhlenbeck predictor, following the predictable-return portfolio tradition of \citet{KimOmberg1996,Liu2007}. With independent standard normals $(Z_{1,k},Z_{2,k})$, let
\begin{align}
\label{eq:predictable-wealth}
\log X_{k+1}
&=
\log X_k
+
\left[r+\pi_k b(Y_k)-\frac12\pi_k^2\sigma_S^2\right]h
+
\pi_k\sigma_S\sqrt h\,Z_{1,k},\\
\label{eq:predictable-factor}
Y_{k+1}
&=
(1-\kappa_Yh)Y_k
+
\sigma_Y\sqrt h\left(\rho Z_{1,k}+\rho_{\perp}Z_{2,k}\right),
\qquad
\rho_{\perp}=\sqrt{1-\rho^2},
\end{align}
with
\begin{equation}
\label{eq:predictable-premium}
b(y)=b_0+b_1\tanh(y/b_{\mathrm{sc}}).
\end{equation}
The numerical specification is
\[
T=1,
\quad K=30,
\quad r=0.02,
\quad \gamma=3,
\quad \sigma_S=0.2,
\quad \kappa_Y=1.5,
\quad \sigma_Y=0.35,
\quad \rho=-0.35,
\]
\[
b_0=0.025,
\qquad b_1=0.10,
\qquad b_{\mathrm{sc}}=0.30,
\qquad (X_0,Y_0)=(1,0).
\]
The policy is constrained by $\pi_k\ge0$. A loose numerical output cap $\pi_{\max}=1.5$ stabilizes the neural and grid implementations; the finest-grid optimum stays strictly below the cap, with $\pi_0(Y_0)=0.22344$ and a maximum grid action of approximately $1.05$.

CRRA homogeneity reduces the same-grid Bellman value to
\begin{equation}
\label{eq:predictable-value-reduction}
V_k(x,y)
=
\frac{x^{1-\gamma}q_k^\star(y)-1}{1-\gamma},
\qquad q_K^\star(y)=1.
\end{equation}
The external dynamic program computes
\begin{equation}
\label{eq:predictable-dp-recursion}
q_k^\star(y)
=
\inf_{0\le\pi\le\pi_{\max}}
\E\left[
 e^{(1-\gamma)\ell_k(\pi,y,Z_{1,k})}
 q_{k+1}^\star(Y_{k+1})
\mid Y_k=y
\right],
\end{equation}
where $\ell_k$ is the log-wealth increment in \eqref{eq:predictable-wealth}. This solver is not used in primal training, anchor fitting, dual refinement, snapshot selection, scalar-$y$ calibration, or final certification. It is only an external same-grid numerical cross-check.

\subsection{Hybrid conditional curvature and an operational radius}
\label{sec:restarted-radius-theory}

Write $\beta=1-\gamma<0$ and define
\begin{equation}
\label{eq:reduced-one-step-objective}
\mathcal Q_k(y,\pi;q)
=
\E\left[q(Y_{k+1})\exp\{\beta\ell_k(\pi,y,Z_{1,k})\}\mid Y_k=y\right].
\end{equation}
For $q_{k+1}^\star$, the optimal coefficient is $q_k^\star(y)=\min_{\pi\in[0,\pi_{\max}]}\mathcal Q_k(y,\pi;q_{k+1}^\star)$.

\begin{proposition}[Exact and hybrid predictable-return curvature]
\label{prop:predictable-curvature}
Let
\[
u_k(\pi,y,z)
=(b(y)-\pi\sigma_S^2)h+\sigma_S\sqrt h\,z,
\qquad
\mathcal L_{k+1}=q_{k+1}^\star(Y_{k+1})e^{\beta\ell_k}.
\]
Then
\begin{equation}
\label{eq:predictable-curvature-exact}
\partial_{\pi\pi}\mathcal Q_k(y,\pi;q_{k+1}^\star)
=
\E\left[\mathcal L_{k+1}\left\{(\gamma-1)\sigma_S^2h+(\gamma-1)^2u_k(\pi,y,Z_{1,k})^2\right\}\mid Y_k=y\right].
\end{equation}
The first term gives the basic bound
\begin{equation}
\label{eq:basic-curvature-lower}
\partial_{\pi\pi}\mathcal Q_k
\ge
(\gamma-1)\sigma_S^2h\,q_k^\star(y).
\end{equation}
Suppose additionally that $\underline q_{k+1}^{\rm unif}\le q_{k+1}^\star(y')$ for every declared predictor state. Define
\begin{align}
\label{eq:gaussian-square-factor}
\Psi_k(y,\pi)
&=
\E\left[e^{\beta\ell_k(\pi,y,Z)}u_k(\pi,y,Z)^2\right]\\
&=
\exp\left(\beta\left[r+\pi b(y)-\frac\gamma2\pi^2\sigma_S^2\right]h\right)
\left\{\sigma_S^2h+[b(y)-\gamma\pi\sigma_S^2]^2h^2\right\},
\end{align}
\begin{equation}
\label{eq:psi-min}
\underline\Psi_k(y)=\min_{0\le\pi\le\pi_{\max}}\Psi_k(y,\pi),
\qquad
\argmin\Psi_k
=
\Proj_{[0,\pi_{\max}]}
\left(\frac{b(y)}{\gamma\sigma_S^2}\right).
\end{equation}
For any lower bound $\underline q_k^{\rm rst}(y)\le q_k^\star(y)$, the effective coefficient
\begin{equation}
\label{eq:hybrid-q-lower}
\underline q_k^{\rm hyb}(y)
=
\underline q_k^{\rm rst}(y)
+
\frac{\gamma-1}{\sigma_S^2h}
\underline q_{k+1}^{\rm unif}\underline\Psi_k(y)
\end{equation}
obeys the performance transfer
\begin{equation}
\label{eq:predictable-performance-hybrid}
V_0^\star-J(\pi)
\ge
\frac{\sigma_S^2}{2}
\E\sum_{k=0}^{K-1}h(X_k^\pi)^{1-\gamma}
\underline q_k^{\rm hyb}(Y_k)
(\pi_k-\pi_k^\star(Y_k))^2.
\end{equation}
\end{proposition}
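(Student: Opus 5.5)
The plan is a direct computation of the second derivative of the reduced objective $\mathcal Q_k$, followed by two lower bounds on its two nonnegative terms, and then an application of \Cref{prop:bellman-curvature} through the CRRA reduction \eqref{eq:predictable-value-reduction}.

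\emph{Step 1: exact curvature.} From \eqref{eq:predictable-wealth}, $\ell_k(\pi,y,z)=[r+\pi b(y)-\tfrac12\pi^2\sigma_S^2]h+\pi\sigma_S\sqrt h\,z$. Hence
\[
\partial_\pi\ell_k=u_k,
\qquad
\partial_{\pi\pi}\ell_k=-\sigma_S^2h,
\]
and
\[
\partial_{\pi\pi}e^{\beta\ell_k}=\bigl(\beta^2u_k^2-\beta\sigma_S^2h\bigr)e^{\beta\ell_k}.
\]
Since $\beta=1-\gamma$, this equals $\{(\gamma-1)^2u_k^2+(\gamma-1)\sigma_S^2h\}e^{\beta\ell_k}$. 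The factor $q_{k+1}^\star(Y_{k+1})$ does not depend on $\pi$, because the predictor transition \eqref{eq:predictable-factor} is uncontrolled. Differentiating twice under the conditional expectation then gives \eqref{eq:predictable-curvature-exact}. The interchange is justified by dominated convergence: on the compact set $[0,\pi_{\max}]$ the integrand and its $\pi$-derivatives are bounded by $C(1+z^2)e^{c|z|}$ times $q_{k+1}^\star$, which is integrable under the Gaussian law given finiteness of the moments of $q_{k+1}^\star$ that the DP already uses.

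\emph{Step 2: the two lower bounds.} Both terms inside the braces are nonnegative. For the basic bound \eqref{eq:basic-curvature-lower}, drop the square term. What remains is $(\gamma-1)\sigma_S^2h\,\mathcal Q_k(y,\pi;q_{k+1}^\star)$, and $\mathcal Q_k(y,\pi;q_{k+1}^\star)\ge q_k^\star(y)$ for every $\pi\in[0,\pi_{\max}]$, since $q_k^\star(y)$ is the minimum. For the square term, bound $q_{k+1}^\star(Y_{k+1})\ge\underline q_{k+1}^{\rm unif}$. Because $u_k$ and $\ell_k$ depend only on $Z_{1,k}$, the remaining expectation is exactly $\Psi_k(y,\pi)$.

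To evaluate $\Psi_k$ in closed form, use the Gaussian tilt $e^{\beta\pi\sigma_S\sqrt h\,Z}$. It contributes the normalizer $e^{\beta^2\pi^2\sigma_S^2h/2}$ and shifts the mean of $Z$ to $\beta\pi\sigma_S\sqrt h$. The exponent collapses to $\beta[r+\pi b-\tfrac\gamma2\pi^2\sigma_S^2]h$. Under the tilted law, $u_k$ has mean $(b-\gamma\pi\sigma_S^2)h$ and variance $\sigma_S^2h$, which gives the displayed formula.

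For the argmin claim in \eqref{eq:psi-min}, consider the two factors of $\Psi_k$. The exponent is $\beta<0$ times a concave quadratic, hence a convex quadratic with vertex $b(y)/(\gamma\sigma_S^2)$. The bracket is a convex quadratic with the same vertex. So each factor is positive and nondecreasing in $|\pi-b/(\gamma\sigma_S^2)|$, and so is their product. Its minimum over an interval is therefore attained at the projection of the vertex. Summing the two bounds gives, uniformly in $\pi\in[0,\pi_{\max}]$,
\[
\partial_{\pi\pi}\mathcal Q_k\ge(\gamma-1)\sigma_S^2h\,\underline q_k^{\rm hyb}(y),
\]
after replacing $q_k^\star$ by any smaller $\underline q_k^{\rm rst}$.

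\emph{Step 3: transfer.} By \eqref{eq:predictable-value-reduction}, $Q_k^\star((x,y),a)=(x^{1-\gamma}\mathcal Q_k(y,a;q_{k+1}^\star)-1)/(1-\gamma)$. Therefore
\[
-\partial_{aa}Q_k^\star=\frac{x^{1-\gamma}}{\gamma-1}\,\partial_{aa}\mathcal Q_k\ge x^{1-\gamma}\sigma_S^2h\,\underline q_k^{\rm hyb}(y).
\]
Because the Step~2 bound holds for every $a\in[0,\pi_{\max}]$, it holds in particular on the segment between $\pi_k$ and a measurable DP optimizer $\pi_k^\star(Y_k)$; the action set $[0,\pi_{\max}]$ is convex. \Cref{prop:bellman-curvature} with $\underline{\Curv}_k=x^{1-\gamma}\sigma_S^2h\,\underline q_k^{\rm hyb}(y)$ then yields \eqref{eq:predictable-performance-hybrid}.

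I expect the main obstacle to be bookkeeping rather than depth. The delicate points are justifying the differentiation under the integral with an unbounded continuation coefficient, and making sure the lower bounds are uniform in the action, so that the segment hypothesis of \Cref{prop:bellman-curvature} holds without knowing $\pi_k^\star$.
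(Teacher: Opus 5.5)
Your proposal is correct and follows essentially the same route as the paper's proof: differentiate the exponential under the conditional expectation, drop the square term and use $\mathcal Q_k\ge q_k^\star$ for the basic bound, and apply the uniform next-stage bound with a Gaussian tilt to evaluate $\Psi_k$ and locate its minimizer at the common vertex $b(y)/(\gamma\sigma_S^2)$. You then pass the resulting action-uniform Hessian bound through the CRRA reduction and \Cref{prop:bellman-curvature}, and your monotone-in-$|\pi-b(y)/(\gamma\sigma_S^2)|$ argument for the argmin is a slightly more explicit version of the paper's sign-of-derivative remark.
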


The restart component in \eqref{eq:hybrid-q-lower} is obtained from the same conditional dual used for value certification. Start a feasible density at $(t_k,Y_k=y)$ with $H_k=1$, write
\[
H_{k,K}:=H_K/H_k
\]
for the restarted deflator increment, and set
\begin{equation}
\label{eq:restart-moment}
m_k(y)=\E[H_{k,K}^{p}\mid Y_k=y],
\qquad p=\frac{\gamma-1}{\gamma}.
\end{equation}
Since the conditional dual CE is $m_k(y)^{-\gamma/(\gamma-1)}$, weak duality gives
\begin{equation}
\label{eq:restart-q-lower}
q_k^\star(y)\ge m_k(y)^\gamma.
\end{equation}
A one-sided moment LCB $\underline m_k(y)$ therefore gives $\underline q_k^{\rm rst}(y)=\underline m_k(y)^\gamma$.

\begin{theorem}[Operational restarted policy radius]
\label{thm:restarted-operational-radius}
Let $\overline{\mathcal R}$ be any valid upper bound for $V_0^\star-J(\pi)$ and define
\begin{equation}
\label{eq:hybrid-restart-weight}
\mathcal W_{\rm hyb}
=
\E\sum_{k=0}^{K-1}h(X_k^\pi)^{1-\gamma}\underline q_k^{\rm hyb}(Y_k).
\end{equation}
Let $\bar\rho_{\rm hyb}^\pi$ normalize the weight in \eqref{eq:hybrid-restart-weight} by $\mathcal W_{\rm hyb}$.
On any event on which the gap, restart moments, uniform next-stage bounds, and a positive lower bound $\underline{\mathcal W}_{\rm hyb}\le\mathcal W_{\rm hyb}$ all hold,
\begin{equation}
\label{eq:hybrid-restart-radius}
\|\pi-\pi^\star\|_{\bar\rho_{\rm hyb},1}
\le
\Rad_{\rm hyb}
:=
\left(\frac{2\overline{\mathcal R}}{\sigma_S^2\underline{\mathcal W}_{\rm hyb}}\right)^{1/2}.
\end{equation}
Thus any joint $(1-\alpha)$ confidence construction for those one-sided events yields a $(1-\alpha)$ on-policy policy certificate. No optimal policy or dynamic-programming value enters its construction.
\end{theorem}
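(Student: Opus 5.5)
The plan is to reduce \Cref{thm:restarted-operational-radius} to the deterministic transfer of \Cref{lem:raw-policy-certificate}. The quadratic-growth input will come from \Cref{prop:predictable-curvature}, and the continuation weight will be made computable through the restart bound \eqref{eq:restart-q-lower}. Concretely, we specialize the abstract geometry to one dimension. We take $n=1$ and $\Curv\equiv\sigma_S^2$, and let the on-policy measure be
\[
\rho^\pi_{\rm hyb}(A)=\E\sum_{k}h(X_k^\pi)^{1-\gamma}\underline q_k^{\rm hyb}(Y_k)\mathbf 1\{(t_k,S_k)\in A\}.
\]
Its total mass is $\mathcal W_{\rm hyb}$ by \eqref{eq:hybrid-restart-weight}. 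With $v=1$ we get $\chi_1=\sigma_S^{-2}$, and the seminorm $\|\cdot\|_{\bar\rho_{\rm hyb},1}$ is the normalized $L^2$ norm appearing in \eqref{eq:hybrid-restart-radius}.

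The first step is to verify \Cref{ass:quadratic-growth} on the stated event. Inequality \eqref{eq:predictable-performance-hybrid} is exactly $G(\pi)\ge\frac12\|\pi-\pi^\star\|^2_{\rho_{\rm hyb},\sigma_S^2}$, provided $\underline q_k^{\rm hyb}$ is a genuine lower coefficient in the sense required by \Cref{prop:predictable-curvature}. That proposition needs two things. The first is $\underline q_k^{\rm rst}(y)\le q_k^\star(y)$. On the event, the moment LCB satisfies $\underline m_k\le m_k$, and monotonicity of $m\mapsto m^\gamma$ together with \eqref{eq:restart-q-lower} gives $\underline m_k^\gamma\le m_k^\gamma\le q_k^\star$. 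The second is $\underline q_{k+1}^{\rm unif}\le q_{k+1}^\star$ uniformly, which is one of the events assumed to hold. Positivity of $\underline q_k^{\rm hyb}$ holds because $\underline\Psi_k\ge0$ and the moment bounds are positive. This gives a nonnegative weight field, and $\underline{\mathcal W}_{\rm hyb}>0$ makes the normalization legitimate.

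The second step applies \eqref{eq:directional-radius} with $G(\pi)\le V_0^\star-J(\pi)\le\overline{\mathcal R}$, which yields
\[
\|\pi-\pi^\star\|_{\bar\rho_{\rm hyb},1}^2\le\frac{2\overline{\mathcal R}}{\sigma_S^2\mathcal W_{\rm hyb}}.
\]
Replacing $\mathcal W_{\rm hyb}$ by the smaller positive $\underline{\mathcal W}_{\rm hyb}$ only enlarges the right side, which gives $\Rad_{\rm hyb}$. For the probabilistic statement, all inputs are one-sided events. A joint $(1-\alpha)$ construction, for example by a union bound over the gap UCB, the restart moment LCBs, the uniform bounds and the weight LCB, gives an event of probability at least $1-\alpha$ on which the deterministic implication holds. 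Finally, we check that nothing in $\overline{\mathcal R}$, $\underline m_k$, $\underline q^{\rm unif}$, $\underline\Psi_k$ (an explicit minimization via \eqref{eq:psi-min}) or $\underline{\mathcal W}_{\rm hyb}$ uses $\pi^\star$ or $q^\star$.

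The main obstacle I expect is a subtlety in the definition of the normalized seminorm. The weight $\underline q_k^{\rm hyb}$ that defines $\bar\rho_{\rm hyb}$ is itself built from estimated quantities. I would handle this by conditioning on the calibration split, so that the field is frozen and the measure is deterministic given it. The nominal target measure is then fixed before the held-out test. A secondary point is that \eqref{eq:predictable-performance-hybrid} is stated for $\underline q^{\rm rst}$ and $\underline q^{\rm unif}$ as true lower bounds. On the intersection event this holds simultaneously for all restart states actually used, which requires the moment LCBs to be simultaneous over the stratified restart states, or the weight observations to be defined so that their LCB already accounts for this.
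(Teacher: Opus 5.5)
Your proposal is correct and follows essentially the paper's route: confirm that the hybrid coefficient is a genuine lower bound using the restart inequality \eqref{eq:restart-q-lower} and the uniform next-stage bound, insert it into the performance transfer \eqref{eq:predictable-performance-hybrid}, and then bound the loss above by $\overline{\mathcal R}$ and the weight below by $\underline{\mathcal W}_{\rm hyb}$. Your passage through \Cref{lem:raw-policy-certificate} with $\Curv\equiv\sigma_S^2$, $v=1$ is the same substitution written in the abstract geometry, and the frozen-field and simultaneity caveats you raise are real implementation points that the paper absorbs into ``on any event on which \ldots all hold,'' so they do not affect the deterministic implication.
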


\begin{remark}[Reconciliation with the homothetic benchmark]
The basic bound \eqref{eq:basic-curvature-lower} discards the entire positive square term. In the homothetic Merton limit, retaining it recovers the exact factor $\gamma$ in the action curvature from \Cref{thm:performance}. In the state-dependent problem, \eqref{eq:hybrid-q-lower} recovers only the portion certified by the uniform next-stage bound; it is therefore a finite-grid lower bound rather than a claim of an exact factor-$\gamma$ improvement.
\end{remark}

\subsection{Locked ten-seed policy-radius replication}
\label{sec:operational-ten-seed}

The reported protocol uses $96$ independently generated on-policy restart states per seed, stratified over the $30$ dates, and $16{,}384$ conditional continuations per state. Their outer weights are the stratified random-time observations of \Cref{sec:statistical-layers}. The first three seeds were used during development; all hyperparameters, checkpoint selection rules, confidence allocation, restart budget, and relaxation grid were then frozen before seven new seeds were run. The gap and curvature construction uses no external dynamic program.

In the three development seeds, adding the closed-form term in \eqref{eq:hybrid-q-lower} reduces the mean radius from $0.06246$ to $0.04167$, a $33.29\%$ reduction. It changes the number of certified policies from one to two at tolerance $0.05$ and from two to three at tolerance $0.075$. The improvement is therefore operational rather than cosmetic.

Seed-level development/replication results are reported in the Electronic Companion.

\begin{figure}[!htbp]
\centering
\includegraphics[width=0.66\textwidth]{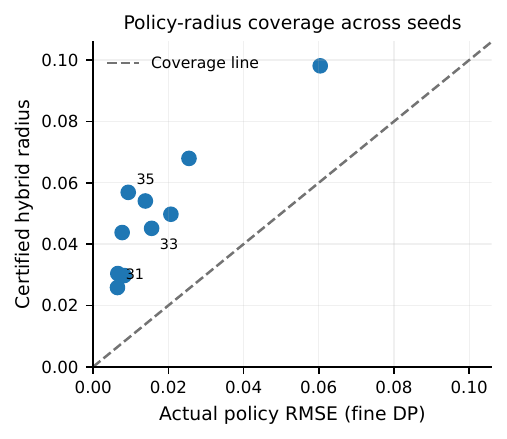}
\caption{Locked one-asset policy-radius validation. Every fine-grid weighted policy error lies below the operational hybrid radius.}
\label{fig:ten-seed-radius}
\end{figure}

\Cref{fig:ten-seed-radius} shows all ten radii covering their fine-grid weighted errors. The mean radius is $0.05016$, the mean fine-grid error is $0.01741$, and the mean radius-to-error ratio is $3.76$ (range $1.63$--$6.11$). The difficult Seed 20260737 has a fine-grid error of $0.06037$ and is still covered by radius $0.09812$, showing that coverage is not restricted to nearly converged policies. Across seeds, the restarted continuation lower bound retains $99.03\%$ of the fine-grid continuation coefficient on average; the mean of the seedwise minimum ratios is $97.37\%$.

In the short-budget learned-dual audit, the mean terminal Fenchel and dynamic complementarity residuals are $3.42\times10^{-5}$ and $4.10\times10^{-5}$, respectively. Their comparable scale explains why the extreme path gains of the analytically calibrated Merton benchmark should not be transferred unchanged to the learned setting: the terminal component contains scalar-calibration and density mismatch in addition to the dynamic KKT component.

\subsection{State-dependent active-face certification}
\label{sec:state-exclusion}

At each of the same $96$ restart states, the current no-short constraint is relaxed to $\pi_k\ge-\varepsilon$ for
\[
\varepsilon\in\{0.005,0.01,0.02,0.04,0.08,0.12,0.16\}.
\]
For each candidate, the relaxed first action is set to $\pi_k^{\varepsilon}=-\varepsilon$; from the next date onward the frozen deployed feedback is used. The scalar dual is calibrated on $16{,}384$ independent paths, fixed before testing, and the paired face variable \eqref{eq:paired-sensitivity-variable} is evaluated on $32{,}768$ common held-out paths. A single Bonferroni family covers all state--relaxation candidates in each seed.

\begin{figure}[!htbp]
\centering
\includegraphics[width=0.66\textwidth]{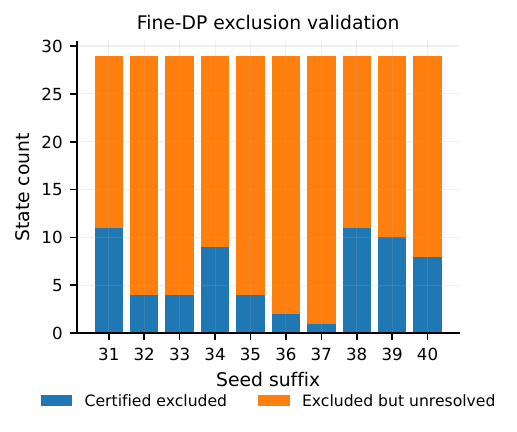}
\caption{One-asset active-face validation against the independently refined dynamic program. Across seeds, certified exclusions are a subset of the fine-grid excluded states, with no false declaration.}
\label{fig:ten-seed-exclusion}
\end{figure}

\Cref{fig:ten-seed-exclusion} shows $64$ of $290$ fine-grid exclusions certified, with zero false exclusions among the other $670$ states. Every reported multiplier lower bound is below the corresponding fine-grid oracle secant. The seven new seeds alone certify $45$ of $203$ externally excluded states, again with zero false exclusions.

The certified fraction varies with the balance in \eqref{eq:resolving-power-identity}. The five lower-gap seeds certify $45$ of $145$ externally excluded audited states ($31.0\%$), whereas the five higher-gap seeds certify $19$ of $145$ ($13.1\%$). The corresponding ten-seed Spearman diagnostic is $-0.960$. We report it as a finite-sample check of the exact resolving-power identity, not as a population scaling law: a face becomes certifiable only after the relaxation gain clears the base residual and the simultaneous sampling margin.

The relaxation-grid audit also supports the chosen range. Expanding the development grid from five values ending at $0.08$ to seven values ending at $0.16$ preserves all $18$ original certificates, adds one, and causes no multiplicity-induced loss. Only two of the resulting $19$ development-seed certificates select the upper endpoint. Failure to certify remains inconclusive; the method is designed to avoid false structural statements rather than maximize recall.

\subsection{External refinement and accept/retrain decisions}
\label{sec:external-refinement-decision}
A separate restart-budget audit finds coverage in every tested cell; at the reported $(N,M)=(96,16{,}384)$, the mean weight ESS is $92.0/96$ and the radius is already stable to further path increases. Full budget surfaces are reported in the Electronic Companion.

For final external validation, the predictor/action/quadrature grid is refined from $401\times321\times7$ to $1201\times961\times11$. The fine-grid CE is $1.030130895586$, and the two finest CE levels differ by $4.54\times10^{-7}$. Refinement changes the audited optimal action by $5.99\times10^{-4}$ on average and by at most $1.56\times10^{-3}$. It causes no exclusion-label flip in any seed, preserves all ten radius coverages and all $64$ active-face certificates, and leaves every multiplier lower bound below its oracle secant.

\paragraph{Accept/retrain decision rule.}
\label{sec:accept-retrain}
For a predeclared occupancy-weighted tolerance $\tau$, the operational rule is
\[
\Rad_{\rm hyb}\le\tau
\quad\Longrightarrow\quad
\text{certify the current policy at tolerance }\tau.
\]
Otherwise the rule recommends further primal training, tighter dual optimization, or a larger certification budget. At $\tau=0.05$, $6$ of $10$ policies are certified; at $0.075$, $9$ are certified; and at $0.10$, all ten are certified. Seed 20260737 is the informative rejection: its external error is $0.06037$, its valid radius is $0.09812$, and the rule declines certification at $\tau=0.075$. The same seed certifies only one of $29$ externally excluded audited states. The framework therefore preserves distance coverage for a materially inaccurate policy while withholding unsupported structural conclusions. These are single-policy decisions under each policy's own occupancy law; value comparison remains the role of the CE brackets.

\section{Two-asset state-dependent validation}
\label{sec:two-asset}

The two-asset experiment repeats the frozen-checkpoint and held-out-certification pipeline of \Cref{sec:predictable} with a genuinely multivariate action and three interacting faces. It rules out a scalar-action explanation while retaining an independently refinable external reference; \Cref{sec:highdim-merton} separately isolates action-dimension scaling.

\subsection{Model and exact dual feasibility}
\label{subsec:two-asset-model-dual}

The feasible set is the simplex
\begin{equation}
\label{eq:two-asset-simplex}
\pi_1\ge0,
\qquad
\pi_2\ge0,
\qquad
\pi_1+\pi_2\le1.
\end{equation}
Let $Z_{1,k},Z_{2,k},Z_{3,k}$ be independent standard normals. Wealth and the predictor follow
\begin{align}
\label{eq:two-asset-wealth}
\log X_{k+1}
&=
\log X_k
+
\left[r+\pi_k^\top b(Y_k)-\frac12\pi_k^\top\Sigma\pi_k\right]h
+\sqrt h\,\pi_k^\top A(Z_{1,k},Z_{2,k})^\top,\\
\label{eq:two-asset-factor}
Y_{k+1}
&=
(1-\kappa_Yh)Y_k
+\sigma_Y\sqrt h
\left(\rho_1Z_{1,k}+\rho_2Z_{2,k}+\rho_3Z_{3,k}\right),
\qquad
\rho_3=\sqrt{1-\rho_1^2-\rho_2^2},
\end{align}
where $AA^\top=\Sigma$ and
\begin{equation}
\label{eq:two-asset-premia}
b_1(y)=0.045+0.105\tanh(y/0.25),
\qquad
b_2(y)=0.060-0.120\tanh(y/0.25).
\end{equation}
The remaining parameters are $T=1$, $K=30$, $r=0.02$, $\gamma=3$, $\sigma_1=0.18$, $\sigma_2=0.22$, return correlation $0.25$, $\kappa_Y=1.5$, $\sigma_Y=0.30$, $(\rho_1,\rho_2)=(-0.35,0.15)$, and $(X_0,Y_0)=(1,0)$. \Cref{fig:two-asset-structure} shows the three resulting faces: asset~1 is excluded at low predictor values, asset~2 at high values, and the borrowing cap binds in both tails.

The dual network outputs a bounded traded distortion $d_\phi(t,y)=(d_{1,\phi},d_{2,\phi})\in\R^2$ together with the orthogonal-tilt coefficients $(\theta_{\perp,\phi},c_{\perp,\phi})$ for $Z_{3,k}$. Only $d_\phi$ enters the simplex recovery:
\begin{equation}
\label{eq:two-asset-canonical-dual}
\xi_\phi=\max\{0,-\min_i d_{i,\phi}\},
\qquad
\nu_\phi=d_\phi+\xi_\phi\mathbf 1.
\end{equation}
The coordinates of $\nu_\phi$ are the no-short multipliers, $\xi_\phi$ is the borrowing-cap multiplier, and $\theta_{\parallel,\phi}=A^\top\Sigma^{-1}(b+d_\phi)$ is the reconstructed traded market price of risk. Hence $(d_\phi,\theta_{\perp,\phi},c_{\perp,\phi})$ are neural outputs, whereas $(\nu_\phi,\xi_\phi,\theta_{\parallel,\phi})$ are algebraic outputs; exact feasibility and row-level attribution follow from the canonical decomposition.

\begin{figure}[!htbp]
\centering
\includegraphics[width=\textwidth]{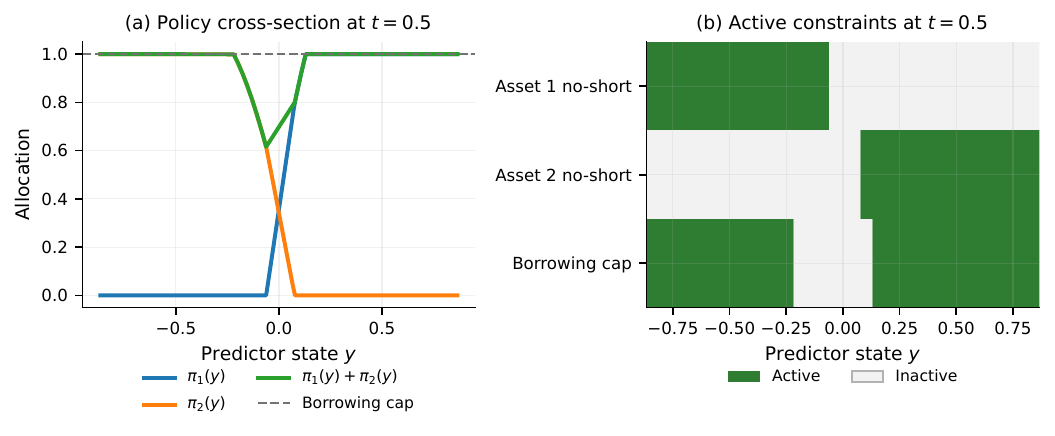}
\caption{Two-asset policy structure at mid-horizon. The policy cross-section exhibits switching between the two risky assets and binding total-risk exposure; the constraint strips display simultaneous activity of a no-short face and the shared borrowing cap in the tails.}
\label{fig:two-asset-structure}
\end{figure}

\subsection{Multivariate policy-distance and face constructions}
\label{subsec:two-asset-constructions}

Conditional restarts and the multivariate Gaussian square term give a lower curvature coefficient $\underline q_k^{\rm hyb}(y)$ in the covariance metric; the Electronic Companion provides the derivation. Define the associated total weight
\begin{equation}
\label{eq:two-asset-outer-weight}
\mathcal W_2
:=
\E\sum_{k=0}^{K-1}h(X_k^\pi)^{1-\gamma}\underline q_k^{\rm hyb}(Y_k),
\end{equation}
and let $\underline{\mathcal W}_2>0$ be a joint one-sided lower confidence bound obtained by the same stratified random-time construction as in \Cref{sec:statistical-layers}. Normalize the weight in \eqref{eq:two-asset-outer-weight} to $\bar\rho_2^\pi$ and set $\|u\|_{\bar\rho_2,\Sigma}^2:=\int u^\top\Sigma u\,d\bar\rho_2^\pi$. With $\|u\|_{\bar\rho_2,v}$ from \eqref{eq:directional-curvature}, the covariance-metric and projected-policy radii are
\begin{equation}
\label{eq:two-asset-radii}
\|\pi-\pi^\star\|_{\bar\rho_2,\Sigma}
\le
\Rad_\Sigma
:=
\left(\frac{2\overline{\mathcal R}}{\underline{\mathcal W}_2}\right)^{1/2},
\qquad
\|\pi-\pi^\star\|_{\bar\rho_2,v}
\le
\Rad_v
:=
\left(\frac{2\overline{\mathcal R}\,v^\top\Sigma^{-1}v}{\underline{\mathcal W}_2}\right)^{1/2}.
\end{equation}
For active-face tests, one current row of \eqref{eq:two-asset-simplex} is relaxed at a time. The current candidate action is the exact relaxed myopic quadratic-program solution, and the frozen deployed feedback resumes at the next date. This choice affects power but not validity.

\subsection{Locked protocol and policy-distance results}
\label{subsec:two-asset-policy-results}

The five-seed protocol was fixed before inspection and inherits the restart and confidence architecture from the one-asset study. Exact path budgets, checkpoint rules, and the declared five-point relaxation grid are documented in the Electronic Companion; the external dynamic program is reserved for post hoc validation.

The mean covariance-norm radius is $0.01327$, compared with mean external error $0.00340$; the seedwise radius-to-error ratio ranges from $3.60$ to $4.83$. Its mean ratio, $3.95$, is close to the one-asset value $3.76$, so moving to two-dimensional controls with three interacting faces does not materially worsen the observed conservatism. All four reported directions in \eqref{eq:two-asset-radii} also cover in every seed. The restarted continuation lower bound retains $98.1\%$ of the external coefficient on average and at least $95.5\%$ over all audited states. The mean outer-weight ESS fraction is $89.6\%$.

\subsection{Active-face and KKT-ledger results}
\label{subsec:two-asset-face-results}

\Cref{fig:two-asset-summary} summarizes the five-seed results: the procedure certifies $59/140$ asset-1 no-short, $46/120$ asset-2 no-short, and $6/114$ borrowing-cap face--state pairs ($111/374$ overall), with no false declaration. Every multiplier lower bound lies below its fine-grid oracle secant. The KKT ledger is genuinely multiconstraint: averaged over seeds, the terminal Fenchel residual contributes $40.4\%$ of the positive residual, asset-1 no-short complementarity $13.3\%$, asset-2 no-short complementarity $21.6\%$, and the borrowing cap $24.6\%$. The low cap-face resolving rate despite a material ledger share illustrates that loss attribution and positive multiplier certification are different tasks.

The covariance-norm radius covers in all five seeds, and the four reported directions---the two coordinates, aggregate exposure $(1,1)$, and spread $(1,-1)$---also cover in every seed. Mean radius-to-error ratios are $3.95$, $7.03$, $4.86$, $6.26$, and $5.53$, respectively. The complete directional table is in the Electronic Companion.

\begin{figure}[!htbp]
\centering
\includegraphics[width=0.88\textwidth]{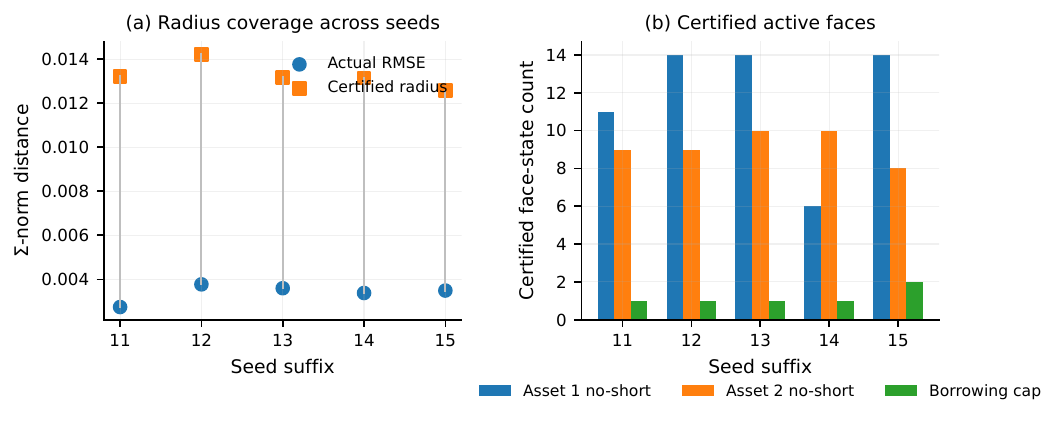}
\caption{Locked five-seed two-asset validation. Every covariance-norm radius covers its external error, and all three face types are certified without a false declaration.}
\label{fig:two-asset-summary}
\end{figure}

\subsection{External dynamic-program refinement}
\label{subsec:two-asset-dp-refinement}

The external program uses active-set Newton steps and is refined through $401\times5^3$, $801\times7^3$, and $1201\times9^3$ predictor/quadrature levels. The last two levels differ in CE by $2.03\times10^{-6}$ and in policy RMSE by $5.04\times10^{-5}$. Detailed computational and recall accounting is in the Electronic Companion.

\section{Discussion and conclusion}
\label{sec:discussion-conclusion}

A primal--dual bracket can support more than a value statement. On the declared grid, the Fenchel--complementarity representation and its Doob compensation make the residual auditable. Bellman curvature converts a residual upper bound into a policy region, paired relaxation lower-bounds a binding-constraint multiplier, and the resolution theorem shows how variance and margin determine the required path budget.

The audits cover every reported external policy error and make no false face declaration. Exact-wrapper replication through $50$ assets shows that action dimension alone need not degrade the certificate; the conservative state-dependent pilot instead identifies tight learned-dual construction as the practical bottleneck. The guarantees remain local to the grid and deployed occupancy law, and uncertified faces are unresolved. Tighter learned duals, sharper simultaneous inference, and transaction-cost extensions remain open. Within this scope, the method answers two questions unavailable from a value bracket alone: where the optimal policy can lie and which tested constraints must bind.

\phantomsection\label{last:main-body}

\clearpage
\begin{center}
{\LARGE\bfseries Electronic Companion}\\[0.35em]
{\large to \emph{From Value Bounds to Policy-Distance and Active-Face Certificates:\\
Same-Grid Duality for Constrained Dynamic Portfolios}}
\end{center}
\vspace{0.75em}

This Electronic Companion provides the full proofs, numerical audits, and implementation details supporting the main paper's same-grid policy certificates. We use CE, RMSE, UCB, and CRN for certainty equivalent, root-mean-square error, upper confidence bound, and common random numbers, respectively. The sections below document the locked state-dependent protocols, derive the multivariate curvature bound, report dynamic-program convergence and Merton calibration, present the locked high-dimensional exact-wrapper stress test and the learned state-dependent dual diagnostic, and give computational-budget and wall-clock accounting. Reference solutions are used only for external validation after each certificate has been computed.

\setcounter{section}{0}
\setcounter{subsection}{0}
\setcounter{figure}{0}
\setcounter{table}{0}
\setcounter{equation}{0}
\setcounter{assumption}{0}
\setcounter{theorem}{0}
\setcounter{proposition}{0}
\setcounter{corollary}{0}
\setcounter{lemma}{0}
\setcounter{remark}{0}
\renewcommand{\thesection}{S\arabic{section}}
\renewcommand{\thesubsection}{S\arabic{section}.\arabic{subsection}}
\renewcommand{\thefigure}{S\arabic{figure}}
\renewcommand{\thetable}{S\arabic{table}}
\renewcommand{\theequation}{S\arabic{section}.\arabic{equation}}
\renewcommand{\theassumption}{S\arabic{assumption}}
\renewcommand{\thetheorem}{S\arabic{theorem}}
\renewcommand{\theproposition}{S\arabic{proposition}}
\renewcommand{\thecorollary}{S\arabic{corollary}}
\renewcommand{\thelemma}{S\arabic{lemma}}
\renewcommand{\theremark}{S\arabic{remark}}
\renewcommand{\theHsection}{supp.S\arabic{section}}
\renewcommand{\theHsubsection}{supp.S\arabic{section}.\arabic{subsection}}
\renewcommand{\theHfigure}{supp.S\arabic{figure}}
\renewcommand{\theHtable}{supp.S\arabic{table}}
\renewcommand{\theHequation}{supp.S\arabic{section}.\arabic{equation}}
\providecommand{\theHassumption}{}
\providecommand{\theHtheorem}{}
\providecommand{\theHproposition}{}
\providecommand{\theHcorollary}{}
\providecommand{\theHlemma}{}
\providecommand{\theHremark}{}
\renewcommand{\theHassumption}{supp.S\arabic{assumption}}
\renewcommand{\theHtheorem}{supp.S\arabic{theorem}}
\renewcommand{\theHproposition}{supp.S\arabic{proposition}}
\renewcommand{\theHcorollary}{supp.S\arabic{corollary}}
\renewcommand{\theHlemma}{supp.S\arabic{lemma}}
\renewcommand{\theHremark}{supp.S\arabic{remark}}

\section{Additional theoretical and implementation details}
\label{ec:additional-theory}

\subsection{Normalized orthogonal density tilt}
\label{ec:normalized-orthogonal-tilt}
For a scalar orthogonal standard normal $z$, the state-dependent experiments use
\begin{equation}
\label{eq:quadratic-orthogonal-tilt}
\Lambda^{\perp}(z;\alpha,c_\perp)
=
\exp\{\alpha z+c_\perp(z^2-1)-\psi(\alpha,c_\perp)\},
\qquad c_\perp<\frac12,
\end{equation}
where
\begin{equation}
\label{eq:quadratic-log-normalizer}
\psi(\alpha,c_\perp)
=
-c_\perp-\frac12\log(1-2c_\perp)
+\frac{\alpha^2}{2(1-2c_\perp)}.
\end{equation}
Then $\E[\Lambda^\perp]=1$ exactly. The linear parameter $\alpha=-\theta_\perp\sqrt h$ represents the unspanned market-price-of-risk anchor, while $c_\perp$ permits a normalized nonlinear correction without changing the traded-asset martingale restriction. The normalizer requires $c_\perp<1/2$. Since $p=(\gamma-1)/\gamma<1$, this same structural bound also implies $pc_\perp<1/2$, so the moment required by the dual CE is finite. We reserve $\chi_v$ in the main paper for the directional inverse-curvature factor.

\subsection{Finite-sample policy-distance alternative}
\label{ec:finite-sample-alternative}
This subsection provides an operational finite-sample fallback for the policy-distance ellipsoid. It complements, rather than duplicates, the main paper's general finite-sample validity-and-resolution theorem: the result below gives a direct Cantelli plug-in once deterministic variance bounds are supplied, whereas the main theorem also analyzes resolution power and sample-size requirements.
\begin{proposition}[Finite-sample one-sided plug-in under variance bounds]
\label{prop:cantelli-policy-radius}
Suppose the observations above are independent and identically distributed within each bank, with
\[
\operatorname{Var}(Z_i^{\rm gap})\le v_{\rm gap},
\qquad
\operatorname{Var}(W_i)\le v_W.
\]
For $\alpha_{\mathcal R},\alpha_W\in(0,1)$ set
\begin{align}
\label{eq:cantelli-gap}
\overline{\mathcal R}_{\mathrm{Can}}
&=
\widehat{\mathcal R}_{\rm comp}
+
\sqrt{\frac{v_{\rm gap}(1-\alpha_{\mathcal R})}
{N_{\mathcal R}\alpha_{\mathcal R}}},\\
\label{eq:cantelli-weight}
\underline{\mathcal W}_{\mathrm{Can}}
&=
\widehat{\mathcal W}
-
\sqrt{\frac{v_W(1-\alpha_W)}
{N_W\alpha_W}}.
\end{align}
Then, with probability at least $1-\alpha_{\mathcal R}-\alpha_W$,
\[
G(\pi)
\le
\mathcal R(y,\mu;\pi)
\le
\overline{\mathcal R}_{\mathrm{Can}},
\qquad
\mathcal W_\rho\ge\underline{\mathcal W}_{\mathrm{Can}}.
\]
Whenever $\underline{\mathcal W}_{\mathrm{Can}}>0$, substituting these quantities into \eqref{eq:raw-ellipsoid} and \eqref{eq:directional-radius} gives a finite-sample valid policy-distance certificate.
\end{proposition}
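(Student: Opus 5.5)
The plan is to apply the one-sided Cantelli inequality separately to each bank, combine the two events by a union bound, and then invoke weak duality together with \Cref{lem:raw-policy-certificate}. For the residual bank, $\widehat{\mathcal R}_{\rm comp}$ is a mean of $N_{\mathcal R}$ i.i.d.\ copies of $Z^{\rm gap}$. By \Cref{thm:positive-gap}, each copy has mean $\mathcal R(y,\mu;\pi)$, and independence gives $\operatorname{Var}(\widehat{\mathcal R}_{\rm comp})\le v_{\rm gap}/N_{\mathcal R}$. Cantelli's inequality states that $\Pr(\mathcal R-\widehat{\mathcal R}_{\rm comp}>t)\le \sigma^2/(\sigma^2+t^2)$, where $\sigma^2$ is the variance of the mean. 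I will choose
\[
t=\sqrt{v_{\rm gap}(1-\alpha_{\mathcal R})/(N_{\mathcal R}\alpha_{\mathcal R})}.
\]
Since $\sigma^2/(\sigma^2+t^2)$ is increasing in $\sigma^2$, I may replace $\sigma^2$ by the bound $v_{\rm gap}/N_{\mathcal R}$. With that substitution the ratio equals exactly $\alpha_{\mathcal R}$, so $\Pr(\mathcal R>\overline{\mathcal R}_{\mathrm{Can}})\le\alpha_{\mathcal R}$.

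The weight bank is handled symmetrically, using the lower tail. The observations $W_i$ have mean $\mathcal W_\rho$; this holds both for the full-path form and for the random-time form, as verified in \Cref{sec:statistical-layers} from $\Pr(K_i=k)=p_k$. Applying Cantelli to $\widehat{\mathcal W}-\mathcal W_\rho$ gives $\Pr(\mathcal W_\rho<\underline{\mathcal W}_{\mathrm{Can}})\le\alpha_W$. A union bound then shows that both one-sided events hold simultaneously with probability at least $1-\alpha_{\mathcal R}-\alpha_W$. The union bound does not require the two banks to be independent, although the statement assumes independence anyway.

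On the joint event, weak duality (\Cref{thm:polyhedral-budget}) gives $G(\pi)\le\mathcal R(y,\mu;\pi)\le\overline{\mathcal R}_{\mathrm{Can}}$. Under \Cref{ass:quadratic-growth}, \Cref{lem:raw-policy-certificate} then yields the ellipsoid $\mathfrak C_{\rho,\Curv}(\pi,\overline{\mathcal R}_{\mathrm{Can}})$. For the directional radius, I will use the first inequality of \eqref{eq:directional-radius}, which is deterministic, together with $\|\pi-\pi^\star\|_{\rho,\Curv}^2\le2\overline{\mathcal R}_{\mathrm{Can}}$. This gives
\[
\|\pi-\pi^\star\|_{\bar\rho,v}\le\bigl(2\chi_v\overline{\mathcal R}_{\mathrm{Can}}/\mathcal W_\rho\bigr)^{1/2}.
\]
Whenever $\underline{\mathcal W}_{\mathrm{Can}}>0$, the joint event gives $\mathcal W_\rho\ge\underline{\mathcal W}_{\mathrm{Can}}>0$, so I can replace $\mathcal W_\rho$ by $\underline{\mathcal W}_{\mathrm{Can}}$ in the denominator.

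The main subtlety is the normalization $\bar\rho=\rho/\mathcal W_\rho$. It uses the true mass, which is positive on the event, so the left-hand seminorm is well defined. Only the bound changes when the denominator is replaced. No step is genuinely hard; the rest is careful bookkeeping of finite variances and of the monotonicity used in Cantelli's bound.
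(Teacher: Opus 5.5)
Your proposal is correct and follows the paper's proof step for step. Both apply Cantelli's inequality to each sample mean using the variance bounds $v_{\rm gap}/N_{\mathcal R}$ and $v_W/N_W$, take a union bound over the two one-sided events, use weak duality $G(\pi)\le\mathcal R(y,\mu;\pi)$, and then apply the deterministic transfer of \Cref{lem:raw-policy-certificate} with $\mathcal W_\rho$ replaced by its lower bound. One small correction to your aside: the statement assumes independence only within each bank, not across banks, but as you note the union bound does not need cross-bank independence.
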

\prooflocation{See \Cref{proof:cantelli-policy-radius}.}

\begin{remark}[Reported implementation]
The empirical tables use sample splitting and the normal one-sided bounds in \eqref{eq:normal-gap-ucb}--\eqref{eq:normal-weight-lcb}, with Bonferroni allocation across the jointly required events. Their coverage statements are therefore asymptotic rather than distribution free. \Cref{prop:cantelli-policy-radius} records a finite-sample route when deterministic variance bounds are available; median-of-means or other robust one-sided estimators can be substituted without changing the deterministic policy geometry.
\end{remark}

\begin{remark}[Constructive variance bounds in the bounded implementation]
\label{rem:constructive-variance}
In the bounded model class used below, deterministic variance bounds can in principle be made explicit. The risky-weight set is compact, the multiplier and orthogonal-tilt outputs are clipped to declared bounds, and the log-wealth dynamics have bounded predictable coefficients with Gaussian increments. Consequently, the terminal utility, the dual payoff, the support-function defect, the accumulated compensator, and the positive curvature weights possess explicit finite moments of every order required here. Combining their elementary moment bounds can yield conservative deterministic choices of $v_{\rm gap}$ and $v_W$. We use empirical variances and asymptotic one-sided intervals in the reported experiments because those analytic bounds are substantially looser; once deterministic choices are instantiated, \Cref{prop:cantelli-policy-radius} supplies a fully finite-sample route to the same policy geometry.
\end{remark}

The policy-distance certificate is now complete: a residual UCB and a curvature-weight LCB are inserted into the deterministic ellipsoid. Exact support identification requires asymmetric information that a symmetric radius cannot provide, and is therefore treated separately next.

\FloatBarrier

\section{Locked replication, relaxation, and restart audits}
\label{app:locked-replication}

\paragraph{Protocol separation.}
The first three predictable-return seeds were available during method development. The additional seven seeds were run only after fixing the primal training budget, checkpoint-selection rule, feasible-dual architecture, common-test protocol, hybrid-curvature formula, $96$ restart states, $16{,}384$ conditional paths, and the seven-point relaxation grid. No seed-specific hyperparameter tuning was permitted. \Cref{tab:ec-ten-seed-certificates} reports both groups separately so that development evidence is not presented as fully held-out replication.\par
The full two-asset protocol was fixed before inspecting any result from the five reported seeds. It inherited the restart count, conditional-path budget, and confidence-allocation architecture from the one-asset development phase; the three-face relaxed-action rule and five-point relaxation grid were declared before the reported runs. Primal checkpoints and dual snapshots were selected by fixed held-out rules, with no seed-specific tuning.

\begin{table}[!htbp]
\centering
\caption{Locked-protocol one-asset policy and active-face certificates. Seeds 20260731--20260733 are development seeds and 20260734--20260740 are locked replications.}
\label{tab:ec-ten-seed-certificates}
\scriptsize
\begin{tabular}{lcrrrrr}
\toprule
Seed & Role & Residual UCB & Fine error & $\Rad_{\rm hyb}$ & Radius/error & Certified exclusions\\
\midrule
20260731 & Dev. & $2.71\times10^{-5}$ & 0.00645 & 0.02584 & 4.01 & 11/29\\
20260732 & Dev. & $1.19\times10^{-4}$ & 0.01386 & 0.05408 & 3.90 & 4/29\\
20260733 & Dev. & $8.28\times10^{-5}$ & 0.01553 & 0.04515 & 2.91 & 4/29\\
20260734 & Rep. & $7.78\times10^{-5}$ & 0.00769 & 0.04376 & 5.69 & 9/29\\
20260735 & Rep. & $1.31\times10^{-4}$ & 0.00931 & 0.05686 & 6.11 & 4/29\\
20260736 & Rep. & $1.87\times10^{-4}$ & 0.02547 & 0.06794 & 2.67 & 2/29\\
20260737 & Rep. & $3.91\times10^{-4}$ & 0.06037 & 0.09812 & 1.63 & 1/29\\
20260738 & Rep. & $3.59\times10^{-5}$ & 0.00825 & 0.02974 & 3.61 & 11/29\\
20260739 & Rep. & $3.75\times10^{-5}$ & 0.00654 & 0.03039 & 4.65 & 10/29\\
20260740 & Rep. & $1.00\times10^{-4}$ & 0.02066 & 0.04972 & 2.41 & 8/29\\
\midrule
Mean/total & -- & $1.19\times10^{-4}$ & 0.01741 & 0.05016 & 3.76 & 64/290\\
\bottomrule
\end{tabular}
\end{table}

\paragraph{Relaxation-grid robustness.}
In the development seeds, the candidate grids are
\[
\mathcal E_5=\{0.005,0.01,0.02,0.04,0.08\},
\qquad
\mathcal E_7=\mathcal E_5\cup\{0.12,0.16\}.
\]
Enlarging $\mathcal E_5$ to $\mathcal E_7$ changes the certified count from $18$ to $19$. Applying the more stringent seven-candidate critical value to the original five candidates loses no certificate. The selected relaxation is $0.16$ in only two of the $19$ states, so the result is not primarily an upper-endpoint artifact.

\paragraph{Residual UCB versus structural resolving power.}
Across the ten seeds, the Pearson and Spearman correlations between the positive residual UCB and the certified audited-state fraction are $-0.800$ and $-0.960$, respectively. The five lower-gap seeds certify $45/145=31.0\%$ of externally excluded audited states, while the five higher-gap seeds certify $19/145=13.1\%$. The sample is too small for these numbers to be treated as a general statistical law. They are consistent with the exact paired identity: the relaxation gain must exceed the original primal--dual residual before a positive multiplier can be certified.

\section{Two-asset state-dependent certification}
\label{app:two-asset-details}

\subsection{Model and multivariate hybrid curvature}
\label{ec:two-asset-curvature}

The market in \Cref{sec:two-asset} has two traded Gaussian shocks and one orthogonal predictor shock. For $\beta=1-\gamma<0$, write the reduced one-step objective as
\begin{equation}
\label{eq:two-asset-reduced-q}
\mathcal Q_k^{(2)}(y,\pi;q)
=
\E\left[q(Y_{k+1})e^{\beta\ell_k(\pi,y,Z)}\mid Y_k=y\right],
\end{equation}
where
\[
\ell_k(\pi,y,Z)
=
\left[r+\pi^\top b(y)-\frac12\pi^\top\Sigma\pi\right]h
+\sqrt h\,\pi^\top AZ_{1:2}.
\]
Define
\begin{equation}
\label{eq:two-asset-u}
u_k(\pi,y,Z)
=
(b(y)-\Sigma\pi)h+\sqrt h\,AZ_{1:2},
\qquad
\mathcal L_{k+1}=q_{k+1}^\star(Y_{k+1})e^{\beta\ell_k}.
\end{equation}

\begin{proposition}[Multivariate hybrid curvature]
\label{prop:two-asset-hybrid-curvature}
The action Hessian satisfies
\begin{equation}
\label{eq:two-asset-hessian}
\nabla_{\pi\pi}^2\mathcal Q_k^{(2)}
=
\E\left[
\mathcal L_{k+1}
\left
\{(\gamma-1)h\Sigma+(\gamma-1)^2u_ku_k^\top\right\}
\mid Y_k=y
\right].
\end{equation}
Let $\underline q_k^{\rm rst}(y)\le q_k^\star(y)$ and let $\underline q_{k+1}^{\rm unif}\le q_{k+1}^\star(y')$ uniformly over the declared predictor domain. Define
\begin{align}
\label{eq:two-asset-cmin}
\underline c_k(y)
&=
\min_{\pi\in\cK}
\exp\left((1-\gamma)h
\left[r+\pi^\top b(y)-\frac\gamma2\pi^\top\Sigma\pi\right]\right),\\
\label{eq:two-asset-qhyb}
\underline q_k^{\rm hyb}(y)
&=
\underline q_k^{\rm rst}(y)
+(\gamma-1)\underline q_{k+1}^{\rm unif}\underline c_k(y).
\end{align}
Then
\begin{equation}
\label{eq:two-asset-performance-transfer}
V_0^\star-J(\pi)
\ge
\frac12\E\sum_{k=0}^{K-1}
h(X_k^\pi)^{1-\gamma}
\underline q_k^{\rm hyb}(Y_k)
(\pi_k-\pi_k^\star)^\top\Sigma(\pi_k-\pi_k^\star).
\end{equation}
Consequently, the covariance-metric norm bound and the projected-policy seminorm bounds in \eqref{eq:two-asset-radii} follow from the same joint one-sided construction as \Cref{thm:restarted-operational-radius}.
\end{proposition}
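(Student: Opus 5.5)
The proof has three stages: compute the Hessian, bound it below by a scalar multiple of $\Sigma$, and feed that bound into \Cref{prop:bellman-curvature} through the CRRA value reduction.

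\textbf{Hessian identity.} First I establish \eqref{eq:two-asset-hessian} by differentiating under the conditional expectation. This is justified by Gaussian moments and compactness of $\cK$.

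Because the predictor innovation does not depend on $\pi$, the factor $q(Y_{k+1})$ is action-free. Direct computation gives
\[
\nabla_\pi\ell_k=(b(y)-\Sigma\pi)h+\sqrt h\,AZ_{1:2}=u_k,
\qquad
\nabla_{\pi\pi}^2\ell_k=-h\Sigma.
\]
Hence
\[
\nabla_{\pi\pi}^2 e^{\beta\ell_k}=e^{\beta\ell_k}\bigl(\beta^2u_ku_k^\top-\beta h\Sigma\bigr),
\]
and with $\beta=1-\gamma$ this is exactly the displayed integrand.

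\textbf{Lower bound by a multiple of $\Sigma$.} Both terms of the integrand are positive semidefinite. The first term contributes
\[
(\gamma-1)h\,\Sigma\,\E[\mathcal L_{k+1}\mid y]=(\gamma-1)h\,\Sigma\,\mathcal Q_k^{(2)}(y,\pi;q^\star_{k+1}).
\]
Since $q_k^\star(y)$ is the minimum of $\mathcal Q_k^{(2)}$ over $\cK$, this is at least $(\gamma-1)h\,\underline q_k^{\rm rst}(y)\Sigma$.

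For the square term I bound $q^\star_{k+1}$ below by $\underline q^{\rm unif}_{k+1}$ and compute the Gaussian tilt. Under the measure with density proportional to $e^{\beta\ell_k}$, $Z_{1:2}$ is Gaussian with shifted mean and identity covariance. The normalizer is
\[
\E[e^{\beta\ell_k}]=\exp\bigl(\beta h[r+\pi^\top b-\tfrac{\gamma}{2}\pi^\top\Sigma\pi]\bigr)\ge \underline c_k(y).
\]
Under the tilt, $\operatorname{Cov}(u_k)=h\Sigma$. Therefore $\E_{\rm tilt}[u_ku_k^\top]\succeq h\Sigma$, discarding the nonnegative mean outer product.

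The square term is thus at least $(\gamma-1)^2h\,\underline q^{\rm unif}_{k+1}\underline c_k(y)\Sigma$. Adding the two pieces gives
\[
-\nabla^2_{\pi\pi}(\cdots)\succeq(\gamma-1)h\,\underline q_k^{\rm hyb}(y)\,\Sigma
\]
uniformly over $\pi\in\cK$, which covers the whole segment in \Cref{prop:bellman-curvature}.

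\textbf{Transfer to the value gap.} By homogeneity,
\[
V_{k+1}^\star(x,y)=\frac{x^{1-\gamma}q^\star_{k+1}(y)-1}{1-\gamma},
\]
so
\[
Q_k^\star((x,y),\pi)=\frac{x^{1-\gamma}\mathcal Q_k^{(2)}(y,\pi;q^\star_{k+1})-1}{1-\gamma}.
\]
Its negative Hessian is therefore $x^{1-\gamma}(\gamma-1)^{-1}\nabla^2_{\pi\pi}\mathcal Q_k^{(2)}$. Using the previous step, it dominates $x^{1-\gamma}h\,\underline q_k^{\rm hyb}(y)\Sigma$. Applying \Cref{prop:bellman-curvature} with
\[
\underline\Curv_k(s)=h(X_k)^{1-\gamma}\underline q_k^{\rm hyb}(Y_k)\Sigma
\]
yields \eqref{eq:two-asset-performance-transfer}.

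\textbf{Radii.} The radii \eqref{eq:two-asset-radii} then follow from \Cref{lem:raw-policy-certificate} with $\Curv=\Sigma$ and $\chi_v=v^\top\Sigma^{-1}v$. The weight and gap bounds are taken on the same joint event as in \Cref{thm:restarted-operational-radius}.

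\textbf{Main obstacle.} I expect the delicate step to be the tilted covariance computation with $q^\star_{k+1}(Y_{k+1})$ correlated with $Z_{1:2}$. This is why the uniform lower bound $\underline q^{\rm unif}_{k+1}$ is needed to decouple it, and it must hold over the whole declared predictor domain, not only over visited states.
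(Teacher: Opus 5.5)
Your proposal is correct and follows essentially the same route as the paper: differentiate the exponential to get the Hessian, bound the $\Sigma$ term through $\mathcal Q_k^{(2)}\ge q_k^\star\ge\underline q_k^{\rm rst}$, and bound the square term by replacing $q_{k+1}^\star$ with $\underline q_{k+1}^{\rm unif}$. The Gaussian tilt (mean $(b-\gamma\Sigma\pi)h$, covariance $h\Sigma$) then gives $\E[e^{\beta\ell_k}u_ku_k^\top]\succeq c_k h\Sigma\succeq\underline c_k h\Sigma$, and the CRRA-reduced Bellman telescoping transfers this to the value gap. Your explicit appeal to \Cref{prop:bellman-curvature} with $\underline{\Curv}_k=h X_k^{1-\gamma}\underline q_k^{\rm hyb}(Y_k)\Sigma$ simply spells out the step the paper summarizes as the telescoping cancelling the factor $\gamma-1$ (only note that the quantity you lower-bound is $+\nabla^2_{\pi\pi}\mathcal Q_k^{(2)}$, not a negative Hessian).
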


\begin{proof}
Differentiating the exponential in \eqref{eq:two-asset-reduced-q} gives \eqref{eq:two-asset-hessian}. Moreover,
\[
\mathcal Q_k^{(2)}(y,\pi;q_{k+1}^\star)\ge q_k^\star(y),
\]
so the first positive term yields $(\gamma-1)h\underline q_k^{\rm rst}(y)\Sigma$. For the square term, Gaussian exponential tilting gives
\begin{align}
\label{eq:two-asset-gaussian-matrix}
\E\left[e^{\beta\ell_k}u_ku_k^\top\right]
&=
c_k(y,\pi)
\left[
h\Sigma+h^2(b(y)-\gamma\Sigma\pi)(b(y)-\gamma\Sigma\pi)^\top
\right]\\
&\succeq c_k(y,\pi)h\Sigma,
\end{align}
where
\[
c_k(y,\pi)
=
\exp\left((1-\gamma)h
\left[r+\pi^\top b(y)-\frac\gamma2\pi^\top\Sigma\pi\right]\right).
\]
Replacing $q_{k+1}^\star$ by its uniform lower bound and $c_k$ by \eqref{eq:two-asset-cmin} gives the second term in \eqref{eq:two-asset-qhyb}. The Bellman performance-difference telescoping cancels the factor $\gamma-1$ and proves \eqref{eq:two-asset-performance-transfer}. Because $1-\gamma<0$, the minimizer in \eqref{eq:two-asset-cmin} is the constrained static QP maximizer of $b(y)^\top\pi-\frac\gamma2\pi^\top\Sigma\pi$.
\end{proof}

The two-asset dual uses one four-output neural map
\[
(t,y)\longmapsto
\bigl(d_{1,\phi}(t,y),d_{2,\phi}(t,y),
\theta_{\perp,\phi}(t,y),c_{\perp,\phi}(t,y)\bigr).
\]
The first two outputs form the bounded traded distortion $d_\phi(t,y)\in\R^2$. The canonical decomposition \eqref{eq:two-asset-canonical-dual} then recovers $(\nu_\phi,\xi_\phi)$, and $\theta_{\parallel,\phi}=A^\top\Sigma^{-1}(b+d_\phi)$ is computed algebraically. The remaining two outputs parameterize the analytically normalized orthogonal density tilt for the unspanned third shock. Thus exact traded-part feasibility holds for every network parameter, while the orthogonal outputs improve tightness without altering the budget identity.

\subsection{Locked protocol and policy-radius results}
\label{ec:two-asset-locked-radius}

The five seeds $20260811$--$20260815$ use the fixed protocol described in \Cref{app:locked-replication}: $4{,}000$ finite-horizon primal iterations after a static-QP anchor, $2{,}500$ feasible dual-refinement iterations, $262{,}144$ final residual paths, $96$ restart states, and $16{,}384$ conditional paths per state. Primal checkpoints and dual snapshots are selected on held-out splits without the external DP; \Cref{tab:two-asset-seeds,tab:two-asset-directions} report the resulting seed-level and directional audits.

\begin{table}[t]
\centering
\caption{Locked five-seed two-asset results. Seed-specific checkpoint and dual-step entries are outcomes of the same predeclared held-out selection rule, not seed-specific tuning. ``Faces'' reports certified over externally active face--state pairs.}
\label{tab:two-asset-seeds}
\scriptsize
\begin{tabular}{lrrrrrrr}
\toprule
Seed & Primal ckpt. & Dual step & Residual UCB & $\Rad_\Sigma$ & External error & Radius/error & Faces\\
\midrule
20260811 & 3000 & 2500 & $1.356\times10^{-4}$ & 0.01322 & 0.00274 & 4.83 & 21/66\\
20260812 & 2000 & 2500 & $1.531\times10^{-4}$ & 0.01422 & 0.00377 & 3.77 & 24/76\\
20260813 & 4000 & 1500 & $1.348\times10^{-4}$ & 0.01318 & 0.00360 & 3.66 & 25/80\\
20260814 & 4000 & 2500 & $1.334\times10^{-4}$ & 0.01315 & 0.00338 & 3.89 & 17/84\\
20260815 & 4000 & 2500 & $1.216\times10^{-4}$ & 0.01258 & 0.00349 & 3.60 & 24/68\\
\midrule
Mean/total & -- & -- & $1.357\times10^{-4}$ & 0.01327 & 0.00340 & 3.95 & 111/374\\
\bottomrule
\end{tabular}
\end{table}

\begin{table}[t]
\centering
\caption{Two-asset projected-policy seminorm certificates, averaged over five seeds.}
\label{tab:two-asset-directions}
\small
\begin{tabular}{lrrrr}
\toprule
Direction/metric & Mean radius & Mean external error & Mean radius/error & Coverage\\
\midrule
$\Sigma$ norm & 0.01327 & 0.00340 & 3.95 & 5/5\\
Asset 1 coordinate & 0.07614 & 0.01091 & 7.03 & 5/5\\
Asset 2 coordinate & 0.06229 & 0.01328 & 4.86 & 5/5\\
Aggregate $(1,1)$ & 0.08548 & 0.01393 & 6.26 & 5/5\\
Spread $(1,-1)$ & 0.10977 & 0.02002 & 5.53 & 5/5\\
\bottomrule
\end{tabular}
\end{table}

The mean restart ratio $\underline q_k^{\rm rst}/q_k^\star$ is $0.981$, the minimum over all audited states is $0.955$, and the mean outer-weight ESS fraction is $0.896$. Conditional dual CE coverage is one in every seed.

\subsection{Active faces, relaxation scale, and the KKT ledger}
\label{ec:two-asset-active-faces}

At each restart state, the three current faces are relaxed separately for
\[
\varepsilon\in\{0.01,0.02,0.04,0.08,0.16\}.
\]
The current candidate action is the exact relaxed static-QP solution under the declared coefficients; from the next date onward, the frozen deployed feedback is used. Calibration uses $8{,}192$ paths per state and the paired final test uses $32{,}768$. One simultaneous Bonferroni family covers $96\times3\times5=1{,}440$ state--face--relaxation candidates per seed.

\begin{table}[t]
\centering
\caption{Two-asset active-face and KKT-ledger summaries. Residual means are componentwise averages across seeds; gap fractions are averages of seedwise fractions and therefore need not equal ratios of the displayed means.}
\label{tab:two-asset-face-ledger}
\small
\begin{tabular}{lrrrr}
\toprule
\multicolumn{5}{l}{\textit{Panel A: active-face certificates}}\\
Face & Certified & Externally active & Certified fraction & False\\
\midrule
Asset 1 no-short & 59 & 140 & 42.1\% & 0\\
Asset 2 no-short & 46 & 120 & 38.3\% & 0\\
Borrowing cap & 6 & 114 & 5.3\% & 0\\
Total & 111 & 374 & 29.7\% & 0\\
\midrule
\multicolumn{5}{l}{\textit{Panel B: positive-residual ledger}}\\
Component & Mean residual & Gap fraction & \multicolumn{2}{c}{Interpretation}\\
\midrule
Terminal Fenchel & $5.419\times10^{-5}$ & 40.4\% & \multicolumn{2}{c}{terminal marginal mismatch}\\
Asset 1 no-short & $1.805\times10^{-5}$ & 13.3\% & \multicolumn{2}{c}{row-1 complementarity}\\
Asset 2 no-short & $2.943\times10^{-5}$ & 21.6\% & \multicolumn{2}{c}{row-2 complementarity}\\
Borrowing cap & $3.367\times10^{-5}$ & 24.6\% & \multicolumn{2}{c}{shared-cap complementarity}\\
\bottomrule
\end{tabular}
\end{table}

The face-specific counts and ledger decomposition are reported in \Cref{tab:two-asset-face-ledger}, while \Cref{fig:two-asset-diagnostics} shows the external policy transitions and the seedwise gap--power relation. The selected relaxation is $0.16$ for $53$ of the $59$ asset-1 certificates, $28$ of the $46$ asset-2 certificates, and $4$ of the $6$ cap certificates. The remaining selections are $0.08$ except for two asset-2 certificates at $0.04$. The results are therefore finite-secant certificates over a declared relaxation scale, not estimates of infinitesimal derivatives. The simultaneous family includes every candidate, and all $111$ multiplier lower bounds remain below their fine-grid oracle secants.

\begin{figure}[t]
\centering
\begin{subfigure}{0.49\textwidth}
\centering
\includegraphics[width=\textwidth]{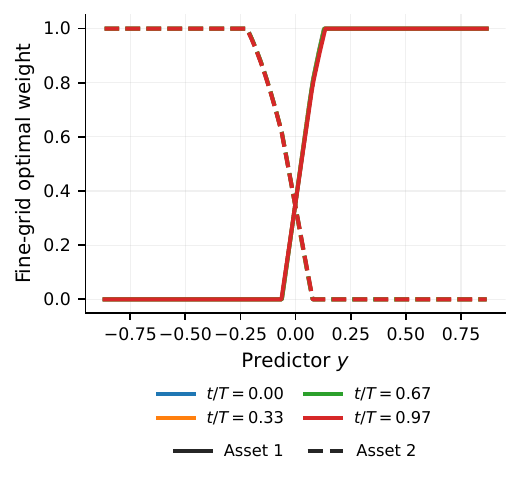}
\caption{Fine-grid optimal policies and face transitions.}
\end{subfigure}
\begin{subfigure}{0.49\textwidth}
\centering
\includegraphics[width=\textwidth]{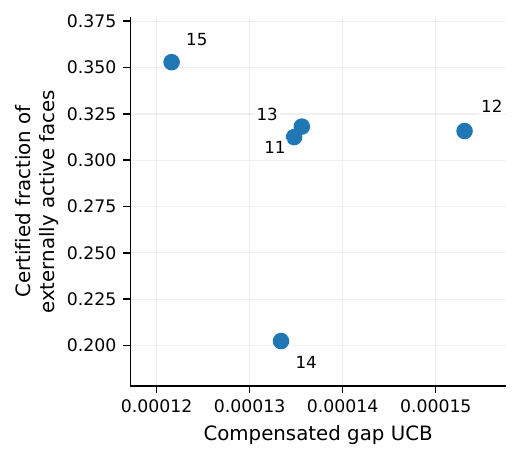}
\caption{Residual UCB and certified fraction by seed.}
\end{subfigure}
\caption{Two-asset external diagnostics. The dynamic program and face labels are queried only after the certificates are fixed. The five-seed scatter is a mechanism diagnostic, not a scaling law.}
\label{fig:two-asset-diagnostics}
\end{figure}

The borrowing cap accounts for roughly one quarter of the positive residual but is certified in only $6/114$ externally active state--face pairs. A ledger entry measures the deployed policy's complementary-slackness loss, whereas multiplier certification requires the gain from a specific relaxed candidate to exceed the entire base residual and the simultaneous sampling margin. The two quantities need not rank faces identically.

\subsection{Post hoc relaxation-endpoint robustness}
\label{ec:relaxation-endpoint-robustness}
\label{sec:two-asset-endpoint-robustness}

The locked five-point protocol above is retained as the primary result. Because the upper endpoint $\varepsilon=0.16$ was selected frequently, we ran a post hoc endpoint audit that extended the declared grid to
\[
\{0.01,0.02,0.04,0.08,0.16,0.24,0.32\}.
\]
The count and selected-scale effects are summarized in \Cref{tab:two-asset-endpoint-robustness} and \Cref{fig:two-asset-endpoint-robustness}. The audit separates the multiplicity cost from the benefit of the two new candidates. The original five-point family certifies $111$ face--state pairs. Applying the stricter seven-candidate threshold while retaining only the original five candidates yields $109$ certificates. Allowing all seven candidates yields $112$. Thus the stricter simultaneous threshold removes two certificates and the new relaxation scales add three, for a net change of one. Every extended-grid multiplier lower bound remains below its fine-grid oracle secant, and the procedure again produces no false declaration.

\begin{table}[!htbp]
\centering
\caption{Two-asset endpoint-robustness audit. The locked result remains the original five-grid count.}
\label{tab:two-asset-endpoint-robustness}
\small
\begin{tabular}{lrrr}
\toprule
Protocol & Certified & Externally active & False\\
\midrule
Locked five-grid threshold & 111 & 374 & 0\\
Original five candidates under seven-grid threshold & 109 & 374 & 0\\
Extended seven-grid protocol & 112 & 374 & 0\\
\bottomrule
\end{tabular}
\end{table}

\begin{figure}[!htbp]
\centering
\begin{subfigure}{0.49\textwidth}
\centering
\includegraphics[width=\textwidth]{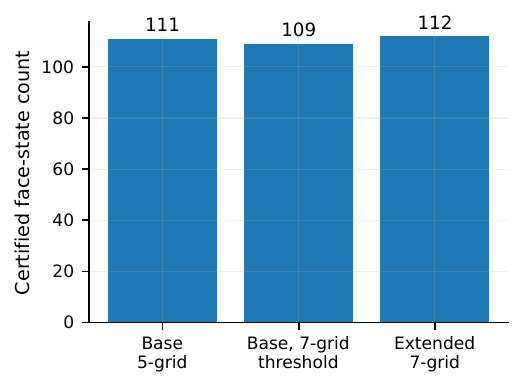}
\caption{Multiplicity cost and net certificate change.}
\end{subfigure}
\begin{subfigure}{0.49\textwidth}
\centering
\includegraphics[width=\textwidth]{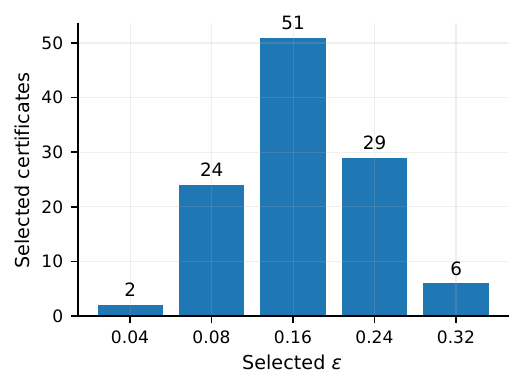}
\caption{Selected relaxation scales under the extended protocol.}
\end{subfigure}
\caption{Post hoc two-asset endpoint robustness. Extending the grid from $0.16$ to $0.32$ changes the total certified count only from $111$ to $112$, with no false declaration or oracle-secant violation.}
\label{fig:two-asset-endpoint-robustness}
\end{figure}

Among the $112$ extended-grid certificates, the selected relaxation counts are $2$, $24$, $51$, $29$, and $6$ at $\varepsilon=0.04,0.08,0.16,0.24,$ and $0.32$, respectively. Hence $35/112$ certificates select a scale above the original endpoint, but only three face--state pairs become newly certifiable after accounting for the stronger multiplicity correction. The larger scales therefore improve many finite-secant multiplier lower bounds without materially changing the structural classification. This supports the robustness of the locked five-point conclusion while confirming that the reported multiplier bounds should be interpreted as finite-scale secants rather than infinitesimal derivatives.

\subsection{External dynamic-program refinement}
\label{ec:two-asset-dp-refinement}

CRRA homogeneity leaves one continuous state variable despite the two-dimensional action. At every predictor node, the Bellman minimization is solved by active-set Newton iterations rather than a dense two-dimensional action grid. The external program uses three-dimensional Gauss--Hermite quadrature for the two traded and one unspanned shocks.

\begin{table}[t]
\centering
\caption{Two-asset external dynamic-program refinement. Policy RMSE is measured against the preceding level.}
\label{tab:two-asset-dp}
\small
\begin{tabular}{rrrrr}
\toprule
Predictor points & Gauss--Hermite (GH) order & CE & CE change & Policy RMSE\\
\midrule
401 & 5 & 1.063435394 & -- & --\\
801 & 7 & 1.063424777 & $-1.062\times10^{-5}$ & $1.194\times10^{-4}$\\
1201 & 9 & 1.063422746 & $-2.031\times10^{-6}$ & $5.037\times10^{-5}$\\
\bottomrule
\end{tabular}
\end{table}

The refinement in \Cref{tab:two-asset-dp} leaves a final policy change that is small relative to the external errors in \Cref{tab:two-asset-seeds}. No certificate uses the DP value, policy, or active labels in training, selection, residual construction, curvature estimation, or multiplier testing.

\section{External dynamic-programming convergence and validation}
\label{app:dp-validation}

As shown in \Cref{fig:predictable-dp} and \Cref{tab:predictable-dp}, the change between the two finest CE levels is $4.54\times10^{-7}$, and the coarse DP used during exploratory analysis differs from the finest value by only $3.02\times10^{-6}$. This convergence is much smaller than the common-test policy gaps and supports using the finest DP as an external numerical reference. No theorem or certificate depends on its exactness.

\begin{figure}[!htbp]
\centering
\begin{subfigure}{0.49\textwidth}
\centering
\includegraphics[width=\textwidth]{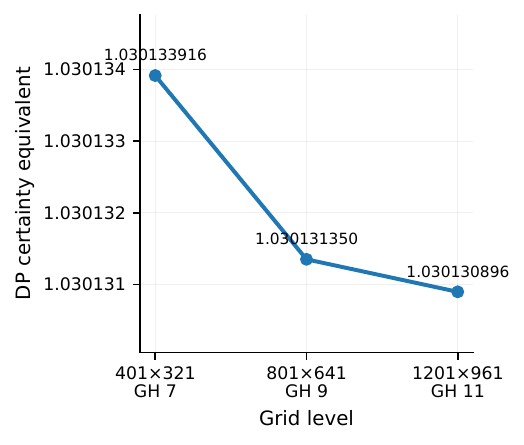}
\caption{Certainty-equivalent convergence.}
\end{subfigure}
\begin{subfigure}{0.49\textwidth}
\centering
\includegraphics[width=\textwidth]{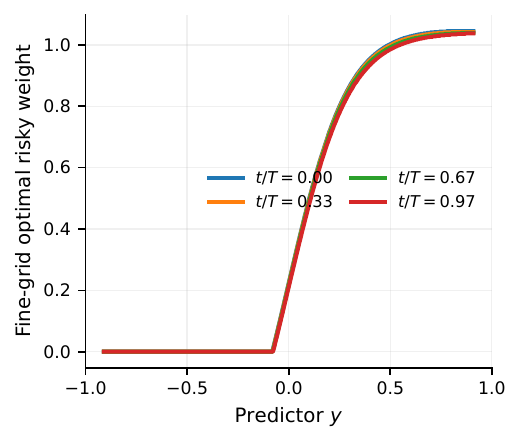}
\caption{State-dependent fine-grid policy.}
\end{subfigure}
\caption{External same-grid dynamic-programming diagnostics. The DP is never used by primal or dual learning.}
\label{fig:predictable-dp}
\end{figure}

\begin{table}[!htbp]
\centering
\caption{Fine-grid dynamic-programming convergence. Grid levels report predictor points $\times$ action points $\times$ two-dimensional Gauss--Hermite order.}
\label{tab:predictable-dp}
\small
\begin{tabular}{lrrrr}
\toprule
Grid & CE & $|\mathrm{CE}-\mathrm{CE}_{\mathrm{fine}}|$ & Policy RMSE & $\pi_0(Y_0)$\\
\midrule
$401\times321\times7$ & 1.030133916 & $3.02\times10^{-6}$ & $8.70\times10^{-4}$ & 0.225000\\
$801\times641\times9$ & 1.030131350 & $4.54\times10^{-7}$ & $5.52\times10^{-4}$ & 0.222656\\
$1201\times961\times11$ & 1.030130896 & 0 & 0 & 0.223438\\
\bottomrule
\end{tabular}
\end{table}

\section{Solved Merton calibration and statistical audit}

\subsection{Configuration}
\label{ec:merton-configuration}

This solved benchmark serves as a calibration laboratory for the certificate engine rather than as an application benchmark. It follows the classical continuous-time portfolio structure of \citet{Merton1969,Merton1971} and is deliberately used as a controlled validation case. Because the optimal constrained Merton weight is constant and \eqref{eq:log-euler} is the exact constant-weight GBM step, the grid and continuous-time optima coincide here. The experiment therefore validates feasibility restoration, KKT accounting, and statistical calibration; it does not demonstrate a grid--continuum mismatch. We use $n=d_W=5$, $T=1$, $K=20$, $r=0.03$, $\gamma=3$, and $x_0=1$. A fixed random loading matrix is row-normalized to volatilities between $0.16$ and $0.24$. The excess-return vector is generated from an alternating-sign unconstrained target and then held fixed. The resulting analytical solution is
\begin{equation}
\label{eq:numerical-opt}
\pi^\star
=
(0.034163,\ 0,\ 0.157889,\ 0,\ 0.141818)^\top,
\end{equation}
\begin{equation}
\nu^\star
=
(0,\ 0.002732,\ 0,\ 0.013454,\ 0)^\top,
\qquad
\CE^\star=1.0339299139.
\end{equation}

The direct-policy-optimization network has three hidden layers of width $128$, SiLU activations, and componentwise sigmoid outputs. It is trained for $2{,}500$ Adam iterations with batch size $8{,}192$. A candidate dual is recovered from each frozen checkpoint, projected into the exact feasible class, and then refined before evaluation. Unless otherwise stated, paired residual evaluation uses $262{,}144$ paths. Certificate validity depends on the final feasibility and held-out audit, not on the initialization used for the dual refinement.

\subsection{Checkpoint sweep and feasible-dual recovery}
\label{ec:merton-checkpoint-sweep}

\begin{figure}[t]
\centering
\includegraphics[width=0.72\textwidth]{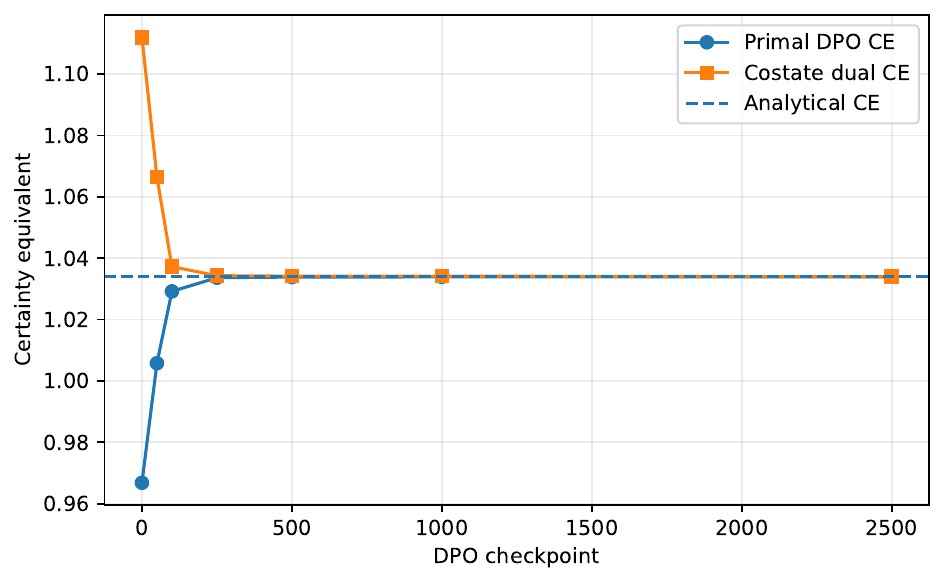}
\caption{Primal and dual certainty equivalents across the frozen-policy checkpoint sweep. As direct policy optimization progresses, the primal value rises and the recovered feasible dual falls toward the same analytical optimum.}
\label{fig:checkpoint}
\end{figure}

\Cref{fig:checkpoint} and \Cref{tab:checkpoint} show that better frozen policies are accompanied by tighter recovered duals. The candidate-kernel RMSE falls from $0.2643$ at initialization to $0.00166$ at checkpoint $2{,}500$, while the recovered dual CE falls from $1.11195$ to $1.0339323$, only $2.30\times10^{-4}\%$ above the analytical optimum. These diagnostics are external to certificate validity: every reported residual is computed only after the candidate dual has been transformed into the exact feasible class.

\begin{table}[t]
\centering
\caption{Selected checkpoint diagnostics in the solved Merton benchmark. Kernel and policy errors use the analytical solution only for external validation.}
\label{tab:checkpoint}
\small
\setlength{\tabcolsep}{5pt}
\begin{tabular}{rcccc}
\toprule
Checkpoint & Primal CE & Recovered dual CE & Kernel RMSE & Policy action RMSE\\
\midrule
0    & 0.966801 & 1.111947 & $2.643\times10^{-1}$ & $4.462\times10^{-1}$\\
50   & 1.005772 & 1.066414 & $1.607\times10^{-1}$ & $3.511\times10^{-1}$\\
100  & 1.029166 & 1.037233 & $5.692\times10^{-2}$ & $1.457\times10^{-1}$\\
250  & 1.033623 & 1.034272 & $8.006\times10^{-3}$ & $1.773\times10^{-2}$\\
500  & 1.033830 & 1.034115 & $4.008\times10^{-3}$ & $7.551\times10^{-3}$\\
1000 & 1.033883 & 1.034116 & $2.087\times10^{-3}$ & $4.177\times10^{-3}$\\
2500 & 1.033902 & 1.033932 & $1.657\times10^{-3}$ & $2.925\times10^{-3}$\\
\bottomrule
\end{tabular}
\end{table}

\subsection{Doob compensation versus common random numbers}
\label{ec:merton-compensation-audit}

\begin{figure}[t]
\centering
\begin{subfigure}{0.49\textwidth}
\centering
\includegraphics[width=\textwidth]{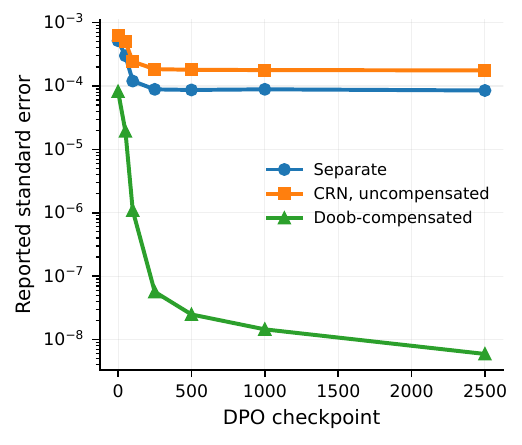}
\caption{Separate, uncompensated CRN, and compensated standard errors.}
\end{subfigure}
\begin{subfigure}{0.49\textwidth}
\centering
\includegraphics[width=\textwidth]{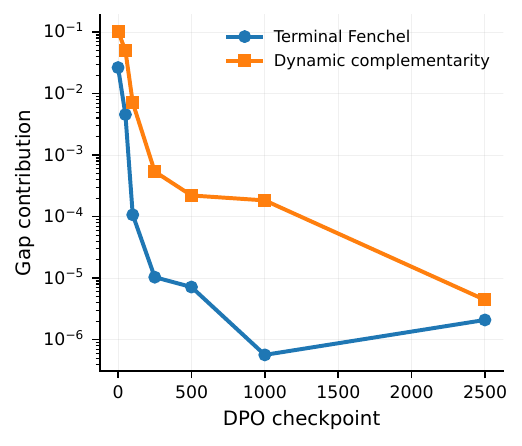}
\caption{Positive compensated-residual components.}
\end{subfigure}
\caption{Common random numbers alone do not remove the terminal budget martingale. Doob compensation isolates the positive Fenchel and complementarity residuals and produces near-zero-variance behavior close to optimality.}
\label{fig:gap}
\end{figure}

\Cref{fig:gap} visualizes the estimator and residual-component comparisons, and \Cref{tab:gap} reports their checkpoint values. At checkpoint $2{,}500$,
\[
\widehat{\mathcal R}_{\mathrm{comp}}
=
6.5260\times10^{-6},
\qquad
\SE(\widehat{\mathcal R}_{\mathrm{comp}})
=
5.92\times10^{-9}.
\]
The separate estimator has standard error $8.50\times10^{-5}$, while the uncompensated CRN estimator has standard error $1.76\times10^{-4}$ and sample mean $-3.70\times10^{-4}$. The latter is a direct empirical illustration of the martingale term in \eqref{eq:crn-doob}: path coupling alone does not create a certificate. In this benchmark, the squared ratio of separate to compensated standard errors is $2.06\times10^8$. The positive residual decomposes as
\[
2.0912\times10^{-6}
\quad\text{(terminal Fenchel)}
\]
plus
\[
4.4348\times10^{-6}
\quad\text{(budget/complementarity)}.
\]
The remaining certified defect is therefore driven primarily by dynamic complementary slackness rather than terminal marginal matching.

\begin{table}[t]
\centering
\caption{Estimator comparison. The CRN estimator is unbiased but uncompensated and can be negative. ``Path gain'' is the squared ratio of separate and compensated standard errors.}
\label{tab:gap}
\scriptsize
\setlength{\tabcolsep}{3.6pt}
\begin{tabular}{rcccccc}
\toprule
Checkpoint & Residual mean & Separate SE & CRN mean & CRN SE & Compensated SE & Path gain\\
\midrule
100  & $7.225\times10^{-3}$ & $1.197\times10^{-4}$ & $7.539\times10^{-3}$ & $2.423\times10^{-4}$ & $1.090\times10^{-6}$ & $1.21\times10^4$\\
250  & $5.507\times10^{-4}$ & $8.853\times10^{-5}$ & $6.255\times10^{-4}$ & $1.846\times10^{-4}$ & $5.658\times10^{-8}$ & $2.45\times10^6$\\
500  & $2.283\times10^{-4}$ & $8.669\times10^{-5}$ & $3.043\times10^{-4}$ & $1.804\times10^{-4}$ & $2.488\times10^{-8}$ & $1.21\times10^7$\\
1000 & $1.843\times10^{-4}$ & $8.867\times10^{-5}$ & $3.139\times10^{-4}$ & $1.783\times10^{-4}$ & $1.447\times10^{-8}$ & $3.75\times10^7$\\
2500 & $6.526\times10^{-6}$ & $8.500\times10^{-5}$ & $-3.704\times10^{-4}$ & $1.764\times10^{-4}$ & $5.920\times10^{-9}$ & $2.06\times10^8$\\
\bottomrule
\end{tabular}
\end{table}

\subsection{Merton policy-radius validation}
\label{ec:merton-radius-validation}

\begin{figure}[t]
\centering
\includegraphics[width=0.72\textwidth]{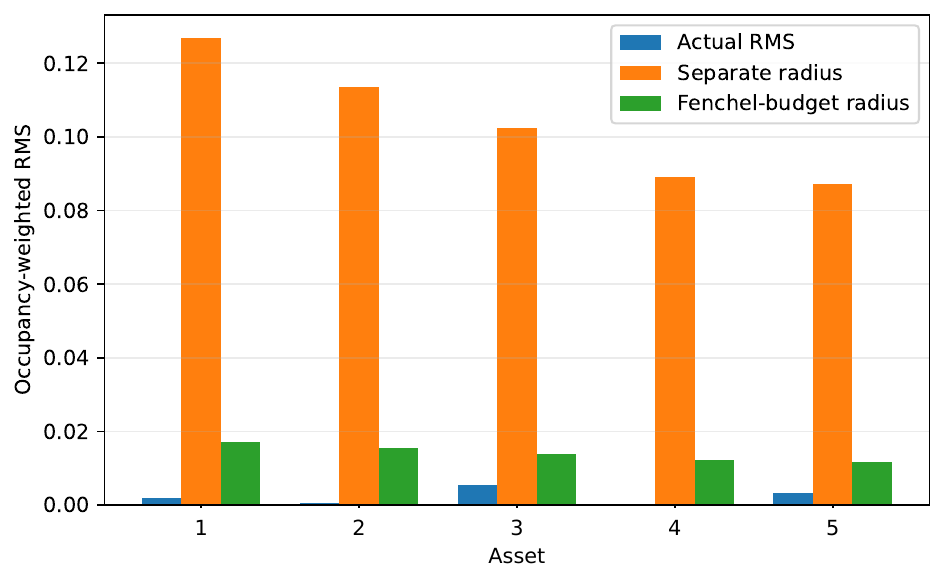}
\caption{Actual coordinate projected-policy seminorm errors (equivalently, occupancy-weighted coordinate RMS errors) and certified radii at checkpoint $2{,}500$. The positive Fenchel--budget gap reduces the mean radius from $0.10385$ under separate evaluation to $0.01409$, while retaining coverage in every coordinate.}
\label{fig:coordinate}
\end{figure}

\begin{table}[t]
\centering
\caption{Merton policy-radius validation at checkpoint $2{,}500$. Inactive-coordinate bounds use the analytical $\nu_i^\star$ and are oracle diagnostics.}
\label{tab:coordinate}
\small
\begin{tabular}{rccccc}
\toprule
Asset &
$\pi_i^\star$ &
Actual RMS &
Separate radius &
Fenchel--budget radius &
Inactive mean / bound\\
\midrule
1 & 0.034163 & 0.002033 & 0.126765 & 0.017202 & --\\
2 & 0        & 0.000453 & 0.113597 & 0.015415 & $0.000450/0.002558$\\
3 & 0.157889 & 0.005520 & 0.102480 & 0.013907 & --\\
4 & 0        & 0.000084 & 0.089239 & 0.012110 & $0.000083/0.000519$\\
5 & 0.141818 & 0.003304 & 0.087154 & 0.011827 & --\\
\bottomrule
\end{tabular}
\end{table}

The coordinate comparison in \Cref{fig:coordinate} and the detailed values in \Cref{tab:coordinate} show that all five projected-policy seminorm errors are covered. The two inactive-coordinate mean checks also cover, but they use analytical multipliers and are not operational certificates. The table uses the analytical continuation weight to validate the geometry directly. The operational construction in \Cref{cor:operational-radius} replaces that coefficient by the dual-derived lower coefficient. At this checkpoint the deterministic coefficient ratio is at least $0.9999954$, so the corresponding radius inflation is at most $1.0000023$ before the held-out lower-confidence adjustment for the weight. Thus the coordinate radius is operational within the Merton class; only the inactive-coordinate column remains oracle.

\subsection{Replication audit and benchmark interpretation}
\label{ec:merton-replication-audit}

For checkpoints $500$, $1000$, and $2500$, we construct a $2^{21}$-path reference and run 20 independent replications at
\[
M\in\{2^{12},2^{14},2^{16},2^{18}\}.
\]
Across the resulting $240$ cells, the nominal one-sided $95\%$ Fenchel--budget UCB covers the reference gap in $228$ cases:
\[
228/240=95.0\%.
\]
Coordinate and inactive-coordinate certificates cover in every tested cell.

\begin{figure}[t]
\centering
\includegraphics[width=0.68\textwidth]{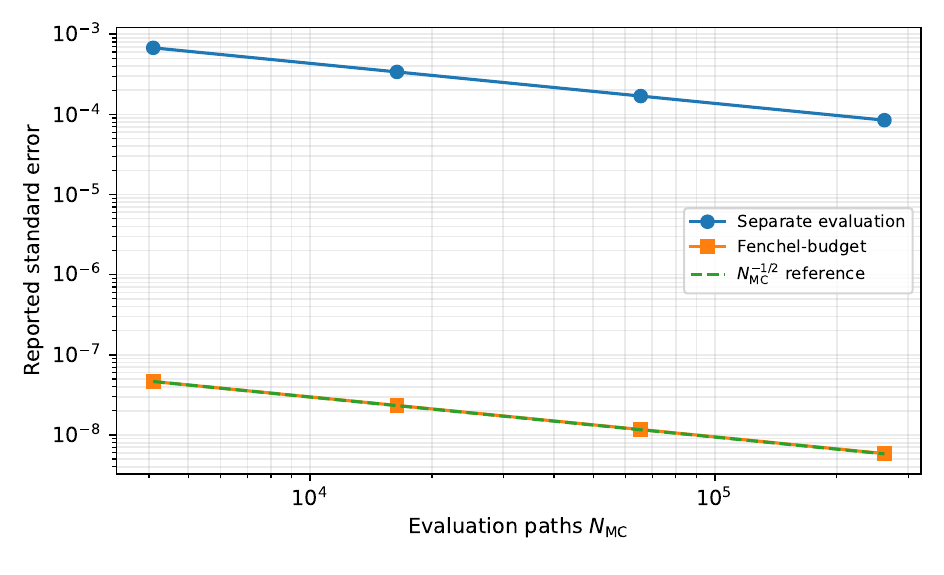}
\caption{Standard-error scaling at checkpoint $2{,}500$. The reported Fenchel--budget standard error follows $N_{\rm MC}^{-1/2}$, but at a level several orders of magnitude below separate evaluation.}
\label{fig:scaling}
\end{figure}

The scaling plot and numerical values in \Cref{fig:scaling} and \Cref{tab:scaling-appendix} give fitted slopes of reported Fenchel--budget SE against $N_{\rm MC}$ of
\[
-0.5008,\qquad -0.5000,\qquad -0.4980
\]
at checkpoints $500$, $1000$, and $2500$, respectively. At $N_{\rm MC}=2^{16}$, the empirical effective path gains are approximately
\[
1.18\times10^7,\qquad
1.63\times10^7,\qquad
2.39\times10^8.
\]
At the final checkpoint the mean coordinate radius is essentially unchanged as $N_{\rm MC}$ increases:
\[
0.014175,\ 0.014122,\ 0.014097,\ 0.014084.
\]
This indicates that statistical uncertainty has already become negligible by $N_{\rm MC}=2^{12}$; the remaining radius reflects the actual primal--dual policy discrepancy rather than Monte Carlo noise.

\begin{table}[ht]
\centering
\caption{Reported standard-error scaling in the no-short replication audit.}
\label{tab:scaling-appendix}
\small
\begin{tabular}{rcc}
\toprule
Paths & Separate reported SE & Compensated reported SE\\
\midrule
$2^{12}$ & $6.773\times10^{-4}$ & $4.639\times10^{-8}$\\
$2^{14}$ & $3.387\times10^{-4}$ & $2.333\times10^{-8}$\\
$2^{16}$ & $1.693\times10^{-4}$ & $1.168\times10^{-8}$\\
$2^{18}$ & $8.468\times10^{-5}$ & $5.850\times10^{-9}$\\
\bottomrule
\end{tabular}
\end{table}

The benchmark closes three downstream components: same-grid validity, polyhedral dual recovery, and compensated statistical certification. It is a controlled calibration study rather than a claim about the relative merits of alternative policy or dual-learning algorithms.

\section{High-dimensional Merton stress test}
\label{ec:highdim-merton}

The two experiments in this section separate action dimension from state-dependent dual approximation. The exact-wrapper Merton study asks whether action dimension and polyhedral geometry alone degrade the certificate; the predictor-driven pilot then restores learned state dependence to locate the remaining bottleneck. For each $d\in\{5,20,50\}$, the exact-wrapper feasible set is
\[
\mathcal K_d=\{\pi\in\R^d:\pi\ge0,\;\mathbf 1^\top\pi\le1\}.
\]
A short feasible neural policy is trained without access to the primal or dual optimum and then frozen. The exactly feasible dual wrapper is obtained from the deterministic convex program
\begin{equation}
\label{eq:highdim-direct-dual}
\min_{\nu\ge0,\,\lambda\ge0}
\left\{
\lambda+
\frac{1}{2\gamma}
(b+\nu-\lambda\mathbf 1)^\top
\Sigma^{-1}
(b+\nu-\lambda\mathbf 1)
\right\},
\end{equation}
where the cap equals one. The compensated residual and covariance-metric radius are fixed on $131{,}072$ locked paths before the external primal QP is solved. Three pilot markets were used to freeze the protocol; the five reported market seeds were new, and no hyperparameter was changed after the pilot.

\begin{table}[!htbp]
\centering
\caption{Locked high-dimensional Merton replication by dimension.}
\label{tab:ec-highdim-dimension}
\small
\begin{tabular}{rcccccc}
\toprule
$d$ & Cells & Radius coverage & Gap-UCB coverage & Mean ratio & Max ratio & Mean inactive\\
\midrule
5  & 5 & $5/5$ & $5/5$ & 1.005 & 1.008 & 1.0\\
20 & 5 & $5/5$ & $4/5$ & 1.007 & 1.023 & 7.8\\
50 & 5 & $5/5$ & $5/5$ & 1.005 & 1.008 & 32.2\\
\bottomrule
\end{tabular}
\end{table}

As reported in \Cref{tab:ec-highdim-dimension,tab:ec-highdim-cells}, all policy-radius certificates cover. With the locked $10^{-7}$ activity tolerance, the direct dual and external primal QP recover identical coordinatewise no-short active sets in all $15$ cells (Jaccard index $1.000$, with no false positive or false negative), and cap activity also agrees in every cell. The single nominal gap-UCB miss occurs at seed 20260831 and $d=20$:
\[
\overline{\mathcal R}=5.8201054\times10^{-5},
\qquad
G_{\rm ext}=5.8205448\times10^{-5},
\]
a difference of $4.39\times10^{-9}$. Applying the familywise sensitivity level $z_{1-0.025/15}=2.9352$ covers all $15$ external gaps and changes the maximum radius-to-error ratio from $1.0227$ to $1.0246$.

\begin{table}[!htbp]
\centering
\caption{Cell-level locked high-dimensional Merton results. ``Gap covered'' uses the prespecified cellwise $97.5\%$ UCB.}
\label{tab:ec-highdim-cells}
\scriptsize
\begin{tabular}{rrrrrrc}
\toprule
Market seed & $d$ & Residual UCB & Radius & External error & Radius/error & Gap covered\\
\midrule
20260831 & 5 & $4.754e-07$ & $0.0006$ & $0.0006$ & 1.005 & yes \\
20260831 & 20 & $5.820e-05$ & $0.0066$ & $0.0066$ & 1.000 & no \\
20260831 & 50 & $8.353e-05$ & $0.0080$ & $0.0079$ & 1.007 & yes \\
20260832 & 5 & $1.841e-07$ & $0.0004$ & $0.0004$ & 1.008 & yes \\
20260832 & 20 & $1.092e-05$ & $0.0029$ & $0.0029$ & 1.003 & yes \\
20260832 & 50 & $7.220e-05$ & $0.0074$ & $0.0074$ & 1.003 & yes \\
20260833 & 5 & $4.580e-07$ & $0.0006$ & $0.0006$ & 1.005 & yes \\
20260833 & 20 & $6.682e-05$ & $0.0071$ & $0.0069$ & 1.023 & yes \\
20260833 & 50 & $1.751e-04$ & $0.0116$ & $0.0115$ & 1.005 & yes \\
20260834 & 5 & $6.048e-08$ & $0.0002$ & $0.0002$ & 1.005 & yes \\
20260834 & 20 & $8.533e-05$ & $0.0080$ & $0.0080$ & 1.007 & yes \\
20260834 & 50 & $1.958e-04$ & $0.0123$ & $0.0122$ & 1.008 & yes \\
20260835 & 5 & $6.109e-07$ & $0.0007$ & $0.0007$ & 1.003 & yes \\
20260835 & 20 & $2.643e-05$ & $0.0044$ & $0.0044$ & 1.004 & yes \\
20260835 & 50 & $5.126e-05$ & $0.0063$ & $0.0062$ & 1.003 & yes \\
\bottomrule
\end{tabular}
\end{table}

The median certificate-audit times are $0.079$, $0.079$, and $0.091$ seconds at dimensions $5$, $20$, and $50$, respectively, after a separate GPU and QP warm-up. The full 15-cell run takes $4.41$ seconds. These timings concern the exact-wrapper Merton stress test; they do not imply that a generic learned state-dependent dual can be trained in comparable time.

\subsection{State-dependent high-dimensional dual diagnostic}
\label{ec:state-dependent-highdim-diagnostic}
The exact-wrapper experiment above separates action dimension from dual approximation. To diagnose the remaining bottleneck, we also retained the predictor-driven model and learned an exactly feasible state-dependent dual at $d=20$ and $50$. The certificate used $64$ restart states and $8{,}192$ conditional paths per state; the dynamic program was queried only after the residual and radius were fixed.

\begin{table}[!htbp]
\centering
\caption{State-dependent high-dimensional pilot. The certificates remain valid but become conservative as the learned feasible dual loosens.}
\label{tab:ec-state-dependent-highdim}
\small
\begin{tabular}{rccccccc}
\toprule
$d$ & Residual UCB & Weight LCB & Radius & External error & Radius/error & Weight ESS & Coverage\\
\midrule
20 & $5.32\times10^{-3}$ & 0.787 & 0.1163 & 0.00408 & 28.5 & 0.839 & yes\\
50 & $9.45\times10^{-3}$ & 0.831 & 0.1509 & 0.00337 & 44.8 & 0.910 & yes\\
\bottomrule
\end{tabular}
\end{table}

The pilot summary in \Cref{tab:ec-state-dependent-highdim} shows restart continuation ratios averaging $0.980$ and $0.974$, with conditional-dual CE coverage equal to one in both dimensions. Thus neither curvature recovery nor outer-weight concentration collapses. Instead, the residual rises through both components: the terminal Fenchel and dynamic complementarity means are $(3.46,1.83)\times10^{-3}$ at $d=20$ and $(6.55,2.85)\times10^{-3}$ at $d=50$. The selected dual remains at the final declared refinement checkpoint in both cells. The contrast with the exact-wrapper results isolates learned feasible-dual tightness---rather than action dimension, curvature recovery, outer-weight concentration, or residual-to-policy transfer---as the practical bottleneck. The experiment is therefore reported as a conservative pilot rather than as a high-dimensional learned-dual success.

\FloatBarrier

\section{Computational budget and wall-clock accounting}
\label{ec:computational-accounting}

The state-dependent studies are deliberately low dimensional because they admit external dynamic-program validation. In these benchmark cases the DP is much cheaper than the full neural-policy, feasible-dual, restart, and face-certification pipeline. \Cref{tab:ec-computational-budgets,tab:ec-wall-clock} report the declared budgets and stagewise timings; the intended comparison is inferential validity, not computational dominance over DP.

\begin{table}[!htbp]
\centering
\caption{Declared state-dependent certification budgets.}
\label{tab:ec-computational-budgets}
\small
\begin{tabular}{lccc}
\toprule
Study & Radius audit & Face family & Paired face paths\\
\midrule
One asset & $96\times16{,}384$ & $96\times7=672$ & $16{,}384$ calibration; $32{,}768$ test\\
Two assets & $96\times16{,}384$ & $96\times3\times5=1{,}440$ & $8{,}192$ calibration; $32{,}768$ test\\
\bottomrule
\end{tabular}
\end{table}

For \Cref{tab:ec-wall-clock}, ``Candidate'' denotes generation of the frozen primal--dual pair: primal-policy training, feasible-dual fitting and refinement, and held-out checkpoint and snapshot selection. The radius and face audits are timed separately.

\begin{table}[!htbp]
\centering
\caption{Approximate median wall-clock accounting for the reported state-dependent runs. Times are reconstructed from sequential stage-output timestamps and are hardware specific.}
\label{tab:ec-wall-clock}
\scriptsize
\begin{tabular}{lccccc}
\toprule
Study & Candidate & Radius & Face & Total & Fine DP\\
\midrule
One asset (7 new seeds) & $16.5$ min & $9$ s & $96$ s & $18.9$ min & $0.79$ s\\
Two assets (5 seeds) & $19.8$ min & $9$ s & $223$ s & $23.6$ min & $1.34$ s\\
\bottomrule
\end{tabular}
\end{table}

The face audits use a single Bonferroni family, prioritizing false-declaration control over recall. This multiplicity correction, together with the requirement that the relaxation gain exceed the full base residual, helps explain the certified fractions of $64/290$ and $111/374$. Step-down or resampling-based simultaneous procedures could improve power while retaining one-sided familywise control, but are not evaluated here.

\section{Implementation details}

The solver-neutral objects and their neural realizations are mapped in \Cref{tab:ec-computational-instantiation}.

\begin{table}[!htbp]
\centering
\caption{Map from solver-neutral certificate objects to the state-dependent neural realizations. Each study uses one vector-output dual network.}
\label{tab:ec-computational-instantiation}
\scriptsize
\renewcommand{\arraystretch}{1.08}
\begin{tabular}{p{0.27\textwidth}p{0.65\textwidth}}
\toprule
Theoretical role & Computational realization\\
\midrule
Primal policy $\pi_\vartheta(t,s)$
& Feasible neural feedback trained by direct policy optimization with BPTT and then frozen.\\
One-asset dual output
& One neural map $(t,y)\mapsto(\nu_\phi,\theta_{\perp,\phi},c_{\perp,\phi})$, with $\nu_\phi\ge0$ and $c_{\perp,\phi}<1/2$.\\
One-asset algebraic wrapper
& Set $d_\phi=\nu_\phi$ and $\theta_{\parallel,\phi}=(b(y)+\nu_\phi)/\sigma_S$; $(\theta_{\perp,\phi},c_{\perp,\phi})$ define the normalized orthogonal tilt.\\
Two-asset dual output
& One neural map $(t,y)\mapsto(d_{1,\phi},d_{2,\phi},\theta_{\perp,\phi},c_{\perp,\phi})$ with all four outputs bounded.\\
Two-asset algebraic wrapper
& Recover $\xi_\phi=\max\{0,-\min_i d_{i,\phi}\}$, $\nu_\phi=d_\phi+\xi_\phi\mathbf1$, and $\theta_{\parallel,\phi}=A^\top\Sigma^{-1}(b+d_\phi)$; the last two neural outputs define the normalized orthogonal tilt.\\
Certification stage
& Evaluate frozen primal and exactly feasible dual objects on held-out same-grid paths to obtain residual, radius, and face tests.\\
\bottomrule
\end{tabular}
\end{table}

All policies, dual fields, and residual calculations use double precision. The no-short covariance-metric projection is solved exactly by active-set enumeration in five dimensions, and the two-asset simplex decomposition uses the canonical map in \eqref{eq:two-asset-canonical-dual}. In the one-asset anchor fit, $\nu_\phi$ and $\theta_{\perp,\phi}$ are fitted to costate-based targets while $c_{\perp,\phi}$ is initially regularized toward zero. In the two-asset anchor fit, $d_\phi$ is fitted to the policy-transport target while both orthogonal outputs are initially regularized toward zero. All outputs are then refined jointly under the feasible dual objective. In both studies the traded market price of risk is reconstructed algebraically, the orthogonal density factor is normalized by \eqref{eq:quadratic-log-normalizer}, and snapshot selection, scalar-$y$ calibration, and final certification use disjoint Brownian banks.

The locked ten-seed radius audit uses $96$ independent on-policy restart states and $16{,}384$ conditional dual paths per state. Its active-face audit uses $16{,}384$ calibration paths, $32{,}768$ paired test paths, and one Bonferroni family over $96$ states and seven relaxation levels. The two-asset audit uses the same restart count and conditional budget, $8{,}192$ calibration paths, $32{,}768$ paired test paths, and one family over three faces and five relaxation levels. Its two-dimensional Bellman actions and relaxed candidates are solved by active-set quadratic-program/Newton routines. In both state-dependent studies, the residual and curvature tail budgets use normal one-sided bounds, and the external dynamic program is queried only after each radius or face certificate is computed.

The solved Merton checkpoint sweep uses a three-layer width-$128$ primal network trained for $2{,}500$ Adam iterations with batch size $8{,}192$, followed by exact feasibility restoration and feasible dual refinement. Unless otherwise stated, the paired residual audit uses $262{,}144$ paths. The replication study uses a $2^{21}$-path reference and 20 independent replications at each of four path budgets. These implementation choices make the audits reproducible; the certification theory itself does not depend on a particular primal or dual learning algorithm.

\section{Full proofs}
\label{app:proofs}

\subsection[Proof of the polyhedral budget theorem and CE identities]{Proof of \Cref{thm:polyhedral-budget} and \eqref{eq:y-general}--\eqref{eq:ce-general}}
\label{proof:polyhedral-budget}

Combining \eqref{eq:log-euler} and \eqref{eq:deflator-general}, conditional Gaussian integration gives
\begin{align*}
\E_k\left[\frac{H_{k+1}X_{k+1}}{H_kX_k}\right]
&=
\exp\left(
 h\pi_k^\top b_k
 -hg^\top\mu_k
 -h\pi_k^\top A_k\theta_k
\right)\\
&=
\exp\left(
 h\pi_k^\top b_k
 -hg^\top\mu_k
 -h\pi_k^\top(b_k-\Gamma^\top\mu_k)
\right)\\
&=
\exp\{-h\mu_k^\top(g-\Gamma\pi_k)\}.
\end{align*}
This proves \eqref{eq:general-conditional-budget}. Summing the expected decrements proves \eqref{eq:general-budget-telescope}. Fenchel's inequality
\[
U(X_K)\le U^\star(yH_K)+yH_KX_K
\]
and the budget inequality prove \eqref{eq:general-dual-bound}.

For $p=(\gamma-1)/\gamma$ and $m=\E[H_K^p]$,
\[
D(y,\mu)
=
\frac1{\gamma-1}
-\frac\gamma{\gamma-1}my^p
+yx_0.
\]
The first-order condition is $x_0=my^{-1/\gamma}$, yielding \eqref{eq:y-general}. Substitution gives \eqref{eq:ce-general}.

\subsection[Proof of the orthogonal-enrichment proposition]{Proof of \Cref{prop:orthogonal-enrichment}}
\label{proof:orthogonal-enrichment}

Conditionally on $\F_{t_k}$, the base deflator increment and the wealth return depend on the traded innovation $Z_{k+1}^{\parallel}$, while $\Lambda_{k+1}^{\perp}$ depends only on the conditionally independent orthogonal innovation. Hence
\[
\E_k[H_{k+1}X_{k+1}\Lambda_{k+1}^{\perp}]
=
\E_k[H_{k+1}X_{k+1}]\,
\E_k[\Lambda_{k+1}^{\perp}]
=
H_kX_ke^{-h\kappa_k}.
\]
The telescoping budget inequality and Fenchel bound follow exactly as in \Cref{thm:polyhedral-budget}. For \eqref{eq:quadratic-orthogonal-tilt}, Gaussian integration gives
\[
\log\E[e^{\alpha z+c_\perp(z^2-1)}]
=
-c_\perp-\frac12\log(1-2c_\perp)
+\frac{\alpha^2}{2(1-2c_\perp)},
\]
which is \eqref{eq:quadratic-log-normalizer}.

\subsection[Proof of the LP-attainment lemma]{Proof of \Cref{lem:lp-attainment}}
\label{proof:lp-attainment}

For fixed $d\in\cC_{\cK}$, consider the primal linear program
\[
\sup_{\pi}\{-d^\top\pi:\Gamma\pi\le g\}.
\]
Its Lagrangian is
\[
-d^\top\pi+\mu^\top(g-\Gamma\pi)
=
\mu^\top g-(d+\Gamma^\top\mu)^\top\pi,
\qquad \mu\ge0.
\]
The supremum over $\pi\in\R^n$ is finite exactly when $-\Gamma^\top\mu=d$, in which case the dual objective is $g^\top\mu$. Because $\cK$ is nonempty and $d\in- \Gamma^\top\R_+^m$, both primal and dual are feasible. Finite-dimensional linear-programming strong duality therefore gives
\[
\sigma_{\cK}(d)
=
\sup_{\pi\in\cK}(-d^\top\pi)
=
\min_{\mu\ge0:\,-\Gamma^\top\mu=d}g^\top\mu,
\]
and the dual minimum is attained. This also identifies the least-cost multiplier value with the support function used in constrained-portfolio duality.

\subsection[Proof of the polyhedral-recovery theorem]{Proof of \Cref{thm:polyhedral-recovery}}
\label{proof:polyhedral-recovery}

Because $\widehat\theta^{\mathrm{raw}}\in\operatorname{Range}(A^\top)$,
\[
\widehat\theta^{\mathrm{raw}}
=
A^\top\Sigma^{-1}(b+\widehat d^{\mathrm{raw}}).
\]
For every candidate $d$,
\begin{align*}
\norm{A^\top\Sigma^{-1}(b+d)-\widehat\theta^{\mathrm{raw}}}^2
&=
\norm{A^\top\Sigma^{-1}(d-\widehat d^{\mathrm{raw}})}^2\\
&=
(d-\widehat d^{\mathrm{raw}})^\top\Sigma^{-1}AA^\top\Sigma^{-1}(d-\widehat d^{\mathrm{raw}})\\
&=
(d-\widehat d^{\mathrm{raw}})^\top\Sigma^{-1}(d-\widehat d^{\mathrm{raw}}).
\end{align*}
This proves \eqref{eq:general-nearest-kernel}. The projected $\widehat d$ belongs to $\cC_{\cK}$, so \Cref{lem:lp-attainment} supplies a nonempty closed convex set $\mathcal M(\widehat d)$ of least-cost representations. The coercive strictly convex tie-break in \eqref{eq:canonical-multiplier} is therefore uniquely attained. Exact dual feasibility follows. Finally,
\[
\widehat\mu^\top(g-\Gamma\pi)
=g^\top\widehat\mu+\widehat d^\top\pi
=\sigma_{\cK}(\widehat d)+\widehat d^\top\pi,
\]
which proves \eqref{eq:support-defect}.

\subsection{Proof of the geometric corollaries}
\label{ec:geometric-corollary-proofs}
\label{proof:geometric-corollaries}

For no-short constraints, $\Gamma=-I_n$, so $-\Gamma^\top\R_+^n=\R_+^n$ and the representation is unique. The nearest-kernel equality follows from the calculation in the preceding proof.

For the simplex, the generators are $e_1,\ldots,e_n,-\mathbf1$. Since
\[
-e_i=-\mathbf1+\sum_{j\ne i}e_j,
\]
they positively span $\R^n$. Hence every $d\in\R^n$ has a feasible decomposition $d=\nu-\xi\mathbf1$. Feasibility requires $\xi\ge-\min_i d_i$ and $\xi\ge0$, so the smallest feasible scalar is \eqref{eq:canonical-xi}; then $\nu=d+\xi\mathbf1\ge0$. Since $g^\top\mu=\xi$, this is the least-cost representation.

\subsection[Proof of the positive-residual theorem]{Proof of \Cref{thm:positive-gap}}
\label{proof:positive-gap}

Add and subtract $yH_KX_K$:
\begin{align*}
D(y,\mu)-J(\pi)
&=
\E[U^\star(yH_K)+yx_0-U(X_K)]\\
&=
\E[U^\star(yH_K)+yH_KX_K-U(X_K)]
+y[x_0-\E(H_KX_K)].
\end{align*}
The first term is $\E[F_T(y)]$. The second equals $\E[B_T]$ by \eqref{eq:general-budget-telescope}. Fenchel equality holds exactly when $yH_K=U'(X_K)$, and the compensator vanishes pathwise when $\kappa_k=0$ for every $k$.

\subsection[Proof of the KKT-ledger proposition]{Proof of \Cref{prop:kkt-ledger}}
\label{proof:kkt-ledger}

Nonnegativity follows from $\mu_{j,k}\ge0$ and $s_{j,k}=g_j-\Gamma_j\pi_k\ge0$. If $\kappa_k>0$, the weights $\omega_{j,k}$ are nonnegative and sum to one, so
\[
\sum_jB_{j,k}
=yH_kX_k(1-e^{-h\kappa_k}).
\]
If $\kappa_k=0$, every $c_{j,k}=0$ and both sides vanish. Summing over $j$ and $k$ in \Cref{thm:positive-gap} yields \eqref{eq:kkt-ledger-sum}. The total defect depends only on $(d_k,\pi_k)$ by \eqref{eq:support-defect}; the minimum-norm tie-break makes the row-level allocation reproducible.

\subsection[Proof of the Doob-compensation theorem]{Proof of \Cref{thm:doob-comp}}
\label{proof:doob-comp}

From \eqref{eq:general-conditional-budget} and \eqref{eq:martingale-difference},
\[
H_{k+1}X_{k+1}
=
H_kX_k e^{-h\kappa_k}+\Delta M_{k+1}.
\]
Rearranging and summing over $k$ yields \eqref{eq:doob-budget}. Adding the terminal Fenchel residual proves \eqref{eq:crn-doob}.

\subsection[Proof of the scalar-calibration proposition]{Proof of \Cref{prop:y-floor}}
\label{proof:y-floor}

Let $I=(U')^{-1}$. Since $(U^\star)'(z)=-I(z)$,
\[
\Phi_{x,H}'(y)
=
H[x-I(yH)].
\]
The unique zero is $y^\star=U'(x)/H$. Also
\[
\Phi_{x,H}''(y)
=
-H^2I'(yH)
=
\frac{H^2}{-U''(I(yH))}>0.
\]
Evaluation at $y^\star$ proves the displayed curvature. A second-order Taylor expansion and the assumed calibration rate yield the $O(M_{\mathrm{cal}}^{-1})$ expectation under uniform integrability of the remainder.

\subsection[Proof of the multiplier and active-face results]{Proof of \Cref{prop:multiplier-secant,prop:paired-face-identity,cor:operational-active-face}}
\label{proof:operational-active-face}

Concavity of $V_s^j$ gives the supergradient inequality
\[
V_s^j(\varepsilon)-V_s^j(0)
\le
\varepsilon\lambda_{j,s}^\star,
\]
and monotonicity gives the lower inequality in \eqref{eq:multiplier-secant}.

For the paired identity, add and subtract the original-policy utility and use \Cref{thm:positive-gap} conditionally from state $s$:
\begin{align*}
\E_s[Z_{j,s,\varepsilon}^{\rm face}]
&=
J_s(\pi^{j,\varepsilon})-J_s(\pi)
-
\E_s[Z_s^{\rm gap}]\\
&=
J_s(\pi^{j,\varepsilon})-D_s(y,\mu).
\end{align*}
This proves \eqref{eq:paired-sensitivity-expectation}; adding and subtracting $J_s(\pi)$ gives \eqref{eq:resolving-power-identity}.

Since $D_s(y,\mu)\ge V_s^j(0)$, any lower confidence bound $L_{j,s}(\varepsilon)$ for the paired mean satisfies
\[
L_{j,s}(\varepsilon)
\le
V_s^j(\varepsilon)-V_s^j(0)
\le
\varepsilon\lambda_{j,s}^\star
\]
on the confidence event. Taking the positive part and dividing by $\varepsilon$ proves \eqref{eq:operational-multiplier-lb}. A strictly positive lower bound implies a positive optimal multiplier, and complementary slackness yields \eqref{eq:certified-active-face}.

\subsection[Proof of the finite-sample validity and resolution theorem]{Proof of \Cref{thm:resolution-complexity}}
\label{proof:resolution-complexity}

Let $c_{\mathcal R},c_W,U_{\mathcal R}$, and $L_W$ be as in \Cref{thm:resolution-complexity}. Cantelli's inequality and a union bound imply, with probability at least $1-\alpha_{\mathcal R}-\alpha_W$,
\begin{equation}
\label{eq:ec-policy-validity-event}
\mathcal R(y,\mu;\pi)\le U_{\mathcal R},
\qquad
\mathcal W_\rho\ge L_W.
\end{equation}
On this event, whenever $L_W>0$ and $U_{\mathcal R}\le a_{\tau,v}L_W$,
\[
\|\pi-\pi^\star\|_{\bar\rho,v}^2
\le
\frac{2\chi_v\mathcal R(y,\mu;\pi)}{\mathcal W_\rho}
\le
\frac{2\chi_vU_{\mathcal R}}{L_W}
\le
2\chi_v a_{\tau,v}
=
\tau^2.
\]
This proves the validity implication \eqref{eq:policy-valid-trigger}.

For resolution power, apply Cantelli also to the reverse tails. With probability at least $1-\alpha_{\mathcal R}-\alpha_W$,
\begin{equation}
\label{eq:ec-policy-power-event}
\widehat{\mathcal R}_{\rm comp}\le \mathcal R(y,\mu;\pi)+c_{\mathcal R},
\qquad
\widehat{\mathcal W}\ge \mathcal W_\rho-c_W.
\end{equation}
On the intersection of \eqref{eq:ec-policy-validity-event} and \eqref{eq:ec-policy-power-event},
\[
U_{\mathcal R}\le \mathcal R(y,\mu;\pi)+2c_{\mathcal R},
\qquad
L_W\ge \mathcal W_\rho-2c_W,
\]
and therefore
\begin{align*}
a_{\tau,v}L_W-U_{\mathcal R}
&\ge
 a_{\tau,v}\mathcal W_\rho-\mathcal R(y,\mu;\pi)
 -2a_{\tau,v}c_W-2c_{\mathcal R}\\
&=
\Delta_{\tau,v}-2a_{\tau,v}c_W-2c_{\mathcal R}>0.
\end{align*}
On the event \eqref{eq:ec-policy-validity-event}, $U_{\mathcal R}\ge \mathcal R(y,\mu;\pi)\ge0$, where the second inequality follows from weak duality. Since $a_{\tau,v}>0$, the strict inequality above therefore implies $L_W>0$; hence the trigger occurs. A union bound over the four one-sided events gives probability at least $1-2\alpha_{\mathcal R}-2\alpha_W$. When $N_{\mathcal R}=N_W=N$, solving \eqref{eq:policy-resolution-condition} for $N$ gives \eqref{eq:policy-resolution-sample-size}.

For the face family, Cantelli applied separately to both tails gives
\[
\Pr\bigl(|\widehat m_i-m_i|>c_i\bigr)\le2\beta.
\]
A union bound over $M$ triples yields simultaneous events $|\widehat m_i-m_i|\le c_i$ with probability at least $1-2M\beta=1-\alpha$. On this event,
\[
L_i=\widehat m_i-c_i\le m_i,
\qquad
L_i\ge m_i-2c_i>0.
\]
Thus every lower bound is valid and positive, so \Cref{cor:operational-active-face} certifies all target faces. Solving $m_i>2c_i$ for $N_i$ gives \eqref{eq:face-resolution-sample-size}.

\subsection[Proof of the Bellman curvature proposition]{Proof of \Cref{prop:bellman-curvature}}
\label{proof:bellman-curvature}

Because the objective contains terminal utility only, Bellman telescoping under the deployed policy gives
\begin{equation}
\label{eq:ec-bellman-telescope}
V_0^\star-J(\pi)
=
\E_\pi\sum_{k=0}^{K-1}
\left[
V_k^\star(S_k)-Q_k^\star(S_k,\pi_k)
\right].
\end{equation}
For a visited state $s$, strong concavity and \eqref{eq:bellman-hessian-lower} imply
\[
Q_k^\star(s,a)
\le
Q_k^\star(s,a^\star)
+\nabla_aQ_k^\star(s,a^\star)^\top(a-a^\star)
-
\frac12(a-a^\star)^\top\underline{\Curv}_k(s)(a-a^\star).
\]
Since $a^\star=\pi_k^\star(s)$ maximizes $Q_k^\star(s,\cdot)$ over the convex set $\cK$, the constrained first-order variational inequality gives
\[
\nabla_aQ_k^\star(s,a^\star)^\top(a-a^\star)\le0
\qquad\text{for every }a\in\cK.
\]
Taking $a=\pi_k(s)$ and substituting the resulting one-step quadratic lower bound into \eqref{eq:ec-bellman-telescope} proves \eqref{eq:bellman-primitive-growth}.

For the reduced Gaussian CRRA objective, let $\beta=1-\gamma$ and
$u_k=\nabla_a\ell_k$. Direct differentiation gives
\[
\nabla_{aa}^2\mathcal Q_k
=
\E\!\left[
q_{k+1}^\star e^{\beta\ell_k}
\left
\{(\gamma-1)h\Sigma_k
+(\gamma-1)^2u_ku_k^\top\right\}
\mid S_k=s
\right].
\]
The square term is positive semidefinite. Replacing $q_{k+1}^\star$ by its lower bound and the exponential moment by its infimum over $a\in\cK$ proves \eqref{eq:crra-primitive-curvature}. The standard CRRA Bellman normalization then converts this reduced-objective convexity into the lower curvature in \eqref{eq:bellman-primitive-growth}.

\subsection[Proof of the Merton multiplier and leakage bounds]{Proof of \eqref{eq:merton-multiplier-lb}--\eqref{eq:merton-leakage-bound}}
\label{proof:merton-multiplier-leakage}

Let $f_i^\star(\varepsilon)$ denote the optimal Merton growth objective when the $i$th lower bound is relaxed to $-\varepsilon$. Concavity gives
\[
f_i^\star(\varepsilon)-f_i^\star(0)
\le \nu_i^\star\varepsilon.
\]
Because $\CE_i^\star(\varepsilon)=x_0\exp([r+f_i^\star(\varepsilon)]T)$, a relaxed primal CE lower bound and original dual CE upper bound imply
\[
\frac1T\log\frac{\underline C_{i,\varepsilon}}{\overline C_0}
\le
f_i^\star(\varepsilon)-f_i^\star(0),
\]
which proves \eqref{eq:merton-multiplier-lb}. Under no shorting, the $i$th slack term in \eqref{eq:perf-lower} is $\nu_i^\star\pi_{k,i}$. Replacing $\nu_i^\star$ by its positive lower bound, the gap by its upper bound, and the total continuation weight by its lower bound gives \eqref{eq:merton-leakage-bound}.

\subsection[Proof of the performance theorem]{Proof of \Cref{thm:performance}}
\label{proof:performance}

The optimal continuation value is
\[
V_k^\star(x)
=
\frac{a_k^\star x^{1-\gamma}-1}{1-\gamma}.
\]
Conditional on $(X_k,\pi_k)$,
\[
\E_k[(X_{k+1})^{1-\gamma}]
=
X_k^{1-\gamma}
\exp((1-\gamma)[r+f(\pi_k)]h).
\]
Using the Bellman equality at $\pi^\star$, telescoping the one-step losses gives \eqref{eq:perf-exact}. From the KKT relation
\[
b=\gamma\Sigma\pi^\star+\Gamma^\top\mu^\star,
\]
we obtain
\begin{align*}
f(\pi^\star)-f(\pi)
&=
\frac\gamma2(\pi-\pi^\star)^\top\Sigma(\pi-\pi^\star)
+(\mu^\star)^\top(\Gamma\pi^\star-\Gamma\pi)\\
&=
\frac\gamma2(\pi-\pi^\star)^\top\Sigma(\pi-\pi^\star)
+(\mu^\star)^\top(g-\Gamma\pi),
\end{align*}
where the last equality uses complementary slackness. Finally, $e^z-1\ge z$ proves \eqref{eq:perf-lower}.

\subsection[Proof of the operational-radius corollary]{Proof of \Cref{cor:operational-radius}}
\label{proof:operational-radius}

Since $\overline C\ge\CE^\star=x_0e^{[r+f(\pi^\star)]T}$,
\[
\overline r_{\rm CE}\ge r+f(\pi^\star).
\]
Because $1-\gamma<0$, \eqref{eq:lower-coefficient} satisfies $\underline a_k\le a_k^\star$, so $\underline{\mathcal W}\le\mathcal W$. From \eqref{eq:perf-lower},
\[
\overline{\mathcal R}
\ge
\frac\gamma2\E\sum_khw_k
(\pi_k-\pi^\star)^\top\Sigma(\pi_k-\pi^\star).
\]
For every vector $z$,
\[
z_i^2\le(\Sigma^{-1})_{ii}z^\top\Sigma z.
\]
Therefore
\[
\E\sum_khw_k(\pi_{k,i}-\pi_i^\star)^2
\le
\frac{2\overline{\mathcal R}}{\gamma}(\Sigma^{-1})_{ii}.
\]
Divide by $\mathcal W\ge\underline{\mathcal W}$ to obtain \eqref{eq:coord-radius}.

\subsection[Proof of predictable curvature and the restarted operational radius]{Proof of \Cref{prop:predictable-curvature} and \Cref{thm:restarted-operational-radius}}
\label{proof:predictable-curvature-radius}

Let $\beta=1-\gamma$. For fixed $(k,y)$, write
\[
\Lambda_{k+1}(\pi)
=q_{k+1}^\star(Y_{k+1})e^{\beta\ell_k(\pi,y,Z_{1,k})}.
\]
The first and second derivatives of the exponent are
\[
\beta u_k(\pi,y,Z_{1,k}),
\qquad
-\beta\sigma_S^2h=(\gamma-1)\sigma_S^2h.
\]
Differentiating under the conditional expectation proves \eqref{eq:predictable-curvature-exact}. Positivity of the square term and $\mathcal Q_k(y,\pi;q_{k+1}^\star)\ge q_k^\star(y)$ give \eqref{eq:basic-curvature-lower}.

For the retained square term, the state-uniform next-stage bound gives
\[
\E[\mathcal L_{k+1}u_k^2\mid y]
\ge
\underline q_{k+1}^{\rm unif}\E[e^{\beta\ell_k}u_k^2\mid y].
\]
Let $m_0=(b(y)-\pi\sigma_S^2)h$, $s_0=\sigma_S\sqrt h$, and $c_0=\beta\pi\sigma_S\sqrt h$. The Gaussian identity
\[
\E[e^{c_0Z}(m_0+s_0Z)^2]
=e^{c_0^2/2}\{(m_0+s_0c_0)^2+s_0^2\}
\]
yields \eqref{eq:gaussian-square-factor}. Direct differentiation shows that $\partial_\pi\Psi_k(y,\pi)$ has the opposite sign of $b(y)-\gamma\pi\sigma_S^2$, proving \eqref{eq:psi-min}. Strong convexity of the one-step objective on the interval and the variational inequality at the constrained minimizer give a one-step residual equal to one half of the lower Hessian times $(\pi-\pi^\star)^2$. The Bellman performance-difference telescoping divides the reduced residual by $\gamma-1$, producing \eqref{eq:predictable-performance-hybrid} and the coefficient \eqref{eq:hybrid-q-lower}.

For the conditional dual, weak duality at wealth one gives
\[
\CE_k^\star(y)\le m_k(y)^{-\gamma/(\gamma-1)}.
\]
Since $q_k^\star(y)=(\CE_k^\star(y))^{1-\gamma}$ and $1-\gamma<0$, this is \eqref{eq:restart-q-lower}. Substituting the one-sided restart and uniform bounds into \eqref{eq:predictable-performance-hybrid}, bounding the resulting total weight below by $\underline{\mathcal W}_{\rm hyb}$, and the performance loss above by the residual UCB $\overline{\mathcal R}$ proves \eqref{eq:hybrid-restart-radius}. The stated confidence level follows from the joint one-sided event.

\subsection[Proof of the Cantelli policy-radius proposition]{Proof of \Cref{prop:cantelli-policy-radius}}
\label{proof:cantelli-policy-radius}

The sample-mean variances are at most $v_{\rm gap}/N_{\mathcal R}$ and $v_W/N_W$. Applying Cantelli's inequality to $-(\widehat{\mathcal R}_{\rm comp}-\mathcal R)$ and to $\widehat{\mathcal W}-\mathcal W$ gives
\[
\Pr\!\left(
\mathcal R(y,\mu;\pi)>
\widehat{\mathcal R}_{\rm comp}
+
\sqrt{\frac{v_{\rm gap}(1-\alpha_{\mathcal R})}{N_{\mathcal R}\alpha_{\mathcal R}}}
\right)
\le\alpha_{\mathcal R},
\]
and
\[
\Pr\!\left(
\mathcal W<
\widehat{\mathcal W}
-
\sqrt{\frac{v_W(1-\alpha_W)}{N_W\alpha_W}}
\right)
\le\alpha_W.
\]
Since $G(\pi)\le\mathcal R(y,\mu;\pi)$, the union bound gives the joint event. The deterministic residual-to-policy transfer then completes the proof.

\subsection[Proof of policy-distance sharpness and support asymmetry]{Proof of \Cref{lem:raw-policy-certificate}, \Cref{prop:sqrt-sharp}, and \Cref{cor:support-asymmetry}}
\label{proof:raw-policy-sharpness}

By \eqref{eq:abstract-quadratic-growth} and the chain $G(\pi)\le\mathcal R(y,\mu;\pi)\le\overline{\mathcal R}$,
\[
\|\pi-\pi^\star\|_{\rho,\Curv}^2\le2\overline{\mathcal R},
\]
which proves \eqref{eq:raw-ellipsoid}. For each state $s$, Cauchy--Schwarz in the $\Curv(s)$ metric gives
\[
\bigl(v^\top z(s)\bigr)^2
\le
\bigl(v^\top \Curv(s)^{-1}v\bigr)
\bigl(z(s)^\top \Curv(s)z(s)\bigr).
\]
Integrating with respect to $\bar\rho^\pi$ and using \eqref{eq:directional-curvature} yields
\[
\|z\|_{\bar\rho,v}^2
\le
\frac{\chi_v}{\mathcal W_\rho}\|z\|_{\rho,\Curv}^2.
\]
Combining this inequality with $\|\pi-\pi^\star\|_{\rho,\Curv}^2\le2\overline{\mathcal R}$ proves \eqref{eq:directional-radius}.

For the sharpness result, maximize $v^\top z$ subject to $z^\top \Curv z\le2\delta$. The Lagrange first-order condition gives $z=\alpha \Curv^{-1}v$, and the active constraint determines
\[
\alpha=\sqrt{\frac{2\delta}{v^\top \Curv^{-1}v}}.
\]
Substitution yields \eqref{eq:sharp-radius}; changing the sign of $z$ gives the absolute-value maximum.

For a constant policy and $v=e_i$, the projected-policy seminorm reduces to $|\pi_i-\pi_i^\star|$, so the directional bound is the stated coordinate interval. In the state-dependent statement, define
\[
\mathcal B:=\mathcal A\cap\{\pi_i^\star=0\}.
\]
On $\mathcal B$, we have $|\pi_i-\pi_i^\star|\ge\eta$. Therefore
\[
r_{i,\mathcal A}^2
\ge
\int_{\mathcal B}(\pi_i-\pi_i^\star)^2d\rho^\pi
\ge
\eta^2\rho^\pi(\mathcal B),
\]
which proves \eqref{eq:support-failure-mass}.

\section*{Acknowledgments}
Jeonggyu Huh received financial support from the National Research Foundation of Korea (No.~RS-2025-00562904).

\printbibliography
\end{document}